\title{Parameterised Complexity of Abduction in Schaefer's Framework} 
\author{Yasir Mahmood}{Leibniz Universität Hannover, Institut für Theoretische Informatik}{mahmood@thi.uni-hannover.de}{https://orcid.org/0000-0002-5651-5391}{Funded by the German Research Foundation (DFG) project ME 4279/1-2}
\author{Arne Meier}{Leibniz Universität Hannover, Institut für Theoretische Informatik}{meier@thi.uni-hannover.de}{https://orcid.org/0000-0002-8061-5376}{Funded partially by the German Research Foundation (DFG) project ME 4279/1-2}
\author{Johannes Schmidt}{Department of Computer Science and Informatics, School of Engineering, Jönköping University}{johannes.schmidt@ju.se}{https://orcid.org/0000-0001-8551-1624}{}
\authorrunning{Y.~Mahmood, A.~Meier, and J.~Schmidt}
\keywords{Parameterized complexity, abduction, Schaefer's framework}
\begin{document}

\maketitle
\begin{abstract}
	Abductive reasoning is a non-monotonic formalism stemming from the work of Peirce.
	It describes the process of deriving the most plausible explanations of known facts.
	Considering the positive version asking for sets of variables as explanations, we study, besides asking for existence of the set of explanations, two explanation size limited variants of this reasoning problem (less than or equal to, and equal to).
	In this paper, we present a thorough two-dimensional classification of these problems.
	The first dimension is regarding the parameterised complexity under a wealth of different parameterisations.
	The second dimension spans through all possible Boolean fragments of these problems in Schaefer's constraint satisfaction framework with co-clones~(STOC~1978).
	Thereby, we almost complete the parameterised picture started by Fellows~et~al.~(AAAI~2012), partially building on results of Nordh and Zanuttini~(Artif.~Intell.~2008).
	In this process, we outline a fine-grained analysis of the inherent parameterised  intractability of these problems and pinpoint their FPT parts. As the standard algebraic approach is not applicable to our problems, we develop an alternative method that makes the algebraic tools partially available again.
\end{abstract}

\section{Introduction}
The framework of parameterised complexity theory yields a more fine-grained complexity analysis of problems than classical worst-case complexity may achieve.
Introduced by Downey and Fellows \cite{DBLP:series/txcs/DowneyF13,DBLP:series/mcs/DowneyF99}, one associates problems with a specific \emph{parameterisation}, that is, one studies the complexity of \emph{parameterised problems}.
Here, one aims to find parameters relevant for practice allowing to solve the problem by algorithms running in time $f(k)\cdot n^{O(1)}$, where $f$ is a computable function, $k$ is the value of the parameter and $n$ is the input length.
Problems with such a running time are called \emph{fixed-parameter tractable} (\FPT) and correspond to efficient computation in the parameterised setting.
This is justified by the fact that parameters are usually slowly growing or even of constant value.
Despite that, a different quality of runtimes is of the form $n^{f(k)}$ which are obeyed by algorithms solving problems in the class $\XP$.
Comparing both classes with respect to the runtimes their problems allow to be solved in, of course, both runtimes are polynomial.
However, for the first type, the degree of the polynomial is independent of the parameter's value which is notable to observe.
As a result, the second kind of runtimes is undesirable and usually tried to circumvented by locating different parameters.
It is known that $\FPT\subsetneq\XP$ by diagonalisation and also that a (presumably infinite) hierarchy of parameterised intractability in between these two classes exist: the so-called $\complClFont{W}$-hierarchy which is contained also in the class $\W\P\subseteq\XP$.
These $\complClFont{W}$-classes are regarded as a measure of intractability in the parameterised sense.
Intuitively, showing $\W1$-lower bounds corresponds to $\NP$-lower bounds in the classical setting.
The limit of this hierarchy, the class $\W\P$ is defined via nondeterministic machines that have at most $h(k)\cdot\log n$ many nondeterministic steps, where $h$ is a computable function, $k$ the parameter's value, and $n$ is the input length.
Clearly, human common-sense reasoning is a non-monotonic process as adding further knowledge might decrease the number of deducible facts.
As a result, non-monotonic logics became a well-established approach to investigate this kind of reasoning.
One of the popular formalism in this area of research is abductive reasoning which is an important concept in artificial intelligence as emphasised by Morgan \cite{DBLP:journals/ai/Morgan71} and Pole \cite{DBLP:conf/ijcai/Poole89}. 
In particular, abduction is used in the process of medical diagnosis \cite{pr90,DBLP:journals/tsmc/JosephsonCST87} and thereby relevant for practice.
Intuitively, abductive reasoning describes the process of deriving the most plausible explanations of known facts and originated from the work of Peirce \cite{peirce}.
Formally, one uses propositional formulas to model known facts in a \emph{knowledge base} $\KB$ together with a set of \emph{manifestations} $M$ and a set of \emph{hypotheses} $H$.
In this paper, $H$ and $M$ are sets of propositions as studied by Fellows~et~al.~\cite{DBLP:conf/aaai/FellowsPRR12} as well as Eiter and Gottlob~\cite{DBLP:journals/jacm/EiterG95}.
Formally, one tries to find a preferably small set of propositions $E\subseteq H$ such that $E\land\KB$ is satisfiable and $E\land \KB\models M$. 
$E$ is then called an explanation for $M$.
In this context, we distinguish three kinds of problems: the first just asks for such a very set $E$ that fulfils these properties ($\problemFont{ABD}$), the second tries to find a set of size less than or equal to a specific size ($\problemFont{ABD}_\leq$), and the third one wants to spot a set of exactly a given size ($\problemFont{ABD}_=$).
Classically, $\problemFont{ABD}$ is complete for the second level of the polynomial hierarchy $\SigmaP$ \cite{DBLP:journals/jacm/EiterG95} and its difficulty is very well understood~\cite{DBLP:conf/aaai/SelmanL90,DBLP:journals/siamcomp/CreignouZ06,DBLP:journals/ai/NordhZ08,DBLP:conf/kr/CreignouST10}.
As a result, under reasonable complexity-theoretic assumptions, the problem is highly intractable posing the question in turn for sources of this complexity.
In this direction, there exists research that aims to better understand the structure and difficulty of this problem, namely, in the context of parameterised complexity.
Here, Fellows et~al.~\cite{DBLP:conf/aaai/FellowsPRR12} initiated an investigation of possible parameters and classified CNF-induced fragments of the reasoning problems with respect to a multitude of parameters.
The authors study CNF-fragments with respect to the classes \textsc{Horn}, \textsc{Krom}, and \textsc{DefHorn}.
They studied the parameterisations $|M|$ (number of manifestations), $|H|$ (number of hypotheses), $|V|$ (number of variables), $|E|$ (number of explanations which is equivalent to their solution size $k$) directly stemming from problem components, as well as the tree-width \cite{DBLP:journals/jal/RobertsonS86}, and the size of the smallest vertex cover.
In their classification, besides showing several $\para\NP$-/$\W\P$-complete/$\FPT$ cases, they also focus on the existence of polynomial kernels and present a complete picture regarding their CNF-classes.

Universal algebra yields a systematic way to rigorously classify fragments of a problem induced by restricting its Boolean connectives.
This technique is built around Post's lattice~\cite{pos41} which bases on the notion of (co-)clones.
Intuitively, given a set of Boolean functions $B$, the \emph{clone} of $B$ is the set of functions that are expressible by compositions of functions from $B$ (plus introducing fictive variables).
The most prominent result under this approach is the dichotomy theorem of Lewis \cite{lew79} which classifies propositional satisfiability into polynomial-time solvable cases and intractable ones depending merely on the existence of specific Boolean operators.
This approach has been followed many times in a wealth of different contexts \cite{DBLP:journals/corr/abs-0812-4848,DBLP:journals/logcom/BeyersdorffMTV12,cmtv12,DBLP:journals/argcom/Creignou0TW11,DBLP:journals/tcs/Meier013,DBLP:journals/ijfcs/MeierTVM09,DBLP:phd/de/Reith2002} as well as in the context of abduction itself \cite{DBLP:journals/ai/NordhZ08,DBLP:journals/logcom/CreignouST12}.
Interestingly, in the scope of constraint satisfaction problems, the investigation of co-clones (or relational clones) allows one to proceed a similar kind of classification (see, e.g., the work of Nordh and Zanuttini~\cite{DBLP:journals/ai/NordhZ08}).
The reason for that lies in the concept of invariance of relations under some function $f$ (one defines this property via \emph{polymorphisms} where $f$ is applied component-wisely to the columns of the relation).
In view of this, Post's lattice supplies a similar lattice, now for sets of relations which are invariant under respective functions.
With respect to constraint satisfaction, the most prominent classification is due to Schaefer \cite{DBLP:conf/stoc/Schaefer78} who similarly divides the constraint satisfaction problem restricted to co-clones into polynomial-time solvable and $\NP$-complete cases.
The algebraic approach has been successfully applied to abduction by Nord and Zanuttini~\cite{DBLP:journals/ai/NordhZ08}. 
For the problems that we consider, it is less obvious how to use the algebraic tools: the standard trick to obtain reductions preserves existence of explanations, but not their size. 
Due to this, we develop an alternative method that makes the algebraic tools partially available again (see Section~\ref{subsec:base-independence}).

Much in the vein of Schaefer's classification, we present a thorough study directly pinpointing those restrictions of the abductive reasoning problem which yield efficiency under the parameterised approach.
In a sense, we present an almost complete picture which has been initiated by Fellows et~al.~\cite{DBLP:conf/aaai/FellowsPRR12} except for some minor cases around the affine co-clones.
Their classification is covered by our study now, as \textsc{Horn} cases correspond to the co-clones below $\IE_2$, \textsc{DefHorn} conforms $\IE_1$, and \textsc{Krom} matches with $\ID_2$.
The motivation of our research is to draw a finer line than Fellow et~al.\ did and to present a completer picture with respect to all possible constraint languages now.
From this classification, we draw some surprising results.
Regarding the essentially negative cases for the parameter $|M|$, $\problemFont{ABD}_=$ is $\para\NP$-complete whereas $\problemFont{ABD}_\leq$ is $\FPT$. 
Also for this parameter, $\IE_1$ and $\IE$ are hard for $\problemFont{ABD}_=$ and $\problemFont{ABD}_\leq$ (both $\para\NP$-complete) but $\problemFont{ABD}$ is $\FPT$.
Regarding $|E|$ as parameterisation, the behaviour is similarly unexpected for the essentially negative cases: $\FPT$ for $\problemFont{ABD}_\leq$ versus $\W1$-hardness for $\problemFont{ABD}_=$.
For the parameters $|V|$ as well as $|H|$ the classifications for all three problems are the same. 
Figure~\ref{fig:overview} shows our results for all problems and parameterisations in a single picture.
Proof details in the appendix are symbolised by `$\star$'.

\begin{figure}
	\centering
	\includegraphics[width=\linewidth]{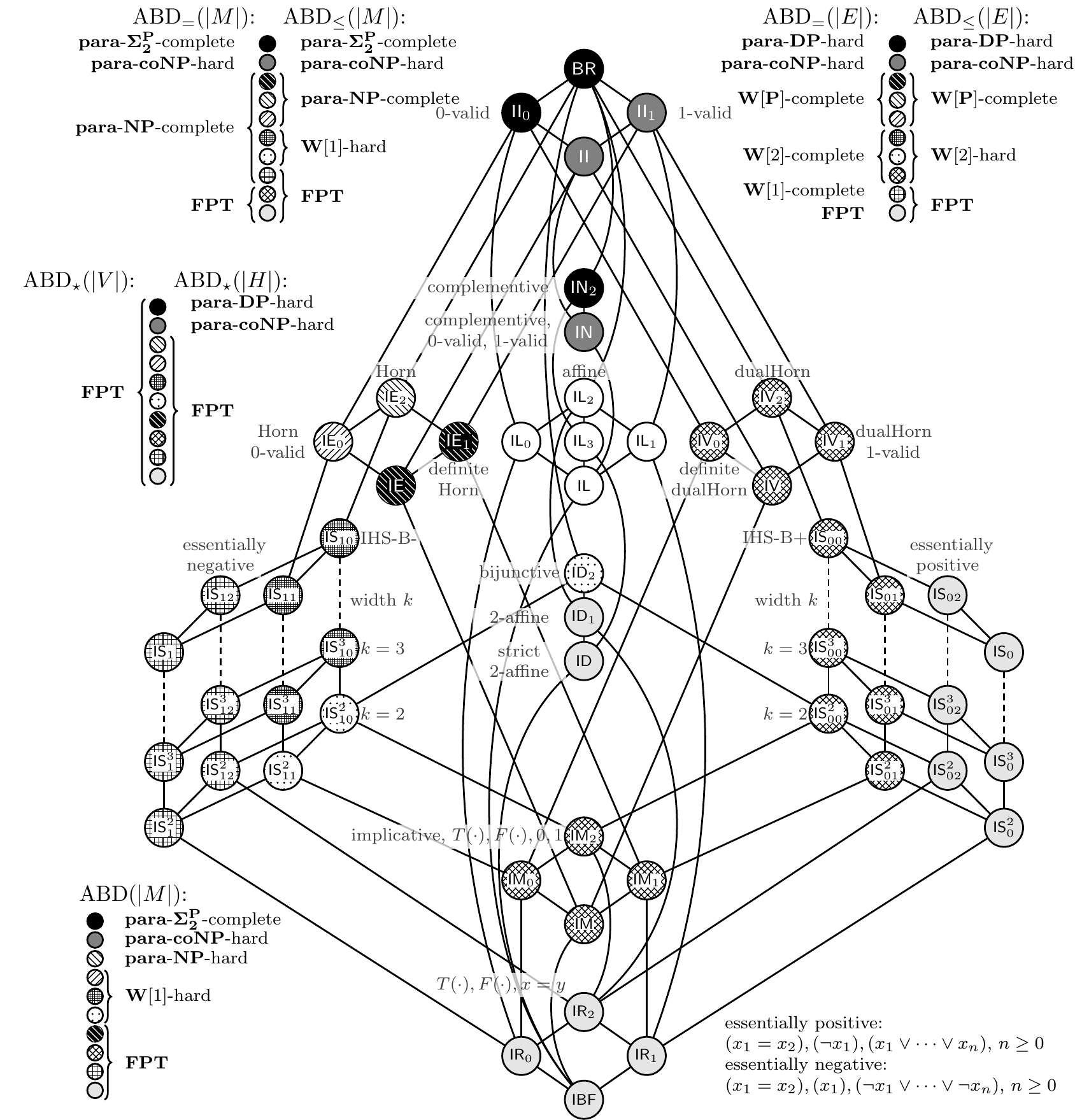}
	\caption{Complexity landscape of abductive reasoning with respect to the studied parameters $|M|,|H|,|V|,|E|$. Notice, that due to presentation reasons, some completeness results are just mentioned with their lower bound, e.g., case $\abdle{\IS{2}{11}}{|E|}$ is \W2-complete (Theorem~\ref{theorem-E}).
	White colouring means unclassified. $\problemFont{ABD}_\star$ means same result for all three variants.}\label{fig:overview}
\end{figure}

\section{Preliminaries}

We require standard notions from classical complexity theory \cite{Papadimitriou94}. 
We encounter the classical complexity classes 
\P, \NP, $\DP = \{A \setminus B \mid A,B\in \NP\}$, $\co\NP$, $\SigmaP=\NP^\NP$ and their respective completeness notions, employing polynomial time many-one reductions ($\preduction$). 

\subparagraph*{Parameterised Complexity Theory.}
A \emph{parameterised problem} (PP) $P\subseteq\Sigma^*\times\mathbb N$ is a subset of the crossproduct of an alphabet and the natural numbers.
For an instance $(x,k)\in\Sigma^*\times\mathbb N$, $k$ is called the (value of the) \emph{parameter}.
A \emph{parameterisation} is a polynomial-time computable function that maps a value from $x\in\Sigma^*$ to its corresponding $k\in\mathbb N$.
The problem $P$ is said to be \emph{fixed-parameter tractable} (or in the class $\FPT$) if there exists a deterministic algorithm $\mathcal A$ and a computable function $f$ such that for all $(x,k)\in\Sigma^*\times\mathbb N$, algorithm $\mathcal A$ correctly decides the membership of $(x,k)\in P$ and runs in time $f(k)\cdot|x|^{O(1)}$.
The problem $P$ belongs to the class $\XP$ if $\mathcal A$ runs in time $|x|^{f(k)}$.
There exists a hierarchy of complexity classes in between $\FPT$ and $\XP$ which is called $\W{}$-hierarchy (for details see the textbook of Flum and Grohe~\cite{DBLP:series/txtcs/FlumG06}). 
We will make use of the classes $\W1$ and $\W2$.
Complete problems characterising these classes are introduced later in Proposition~\ref{theorem-wsat}. 
Also, we work with classes that can be defined via a precomputation on the parameter.
\begin{definition}
	Let $\mathcal C$ be any complexity class.
	Then $\para\mathcal C$ is the class of all PPs $P\subseteq\Sigma^*\times\mathbb N$ such that there exists a computable function $\pi\colon\mathbb N\to\Delta^*$ and a language $L\in\mathcal C$ with $L\subseteq\Sigma^*\times\Delta^*$ such that for all $(x,k)\in\Sigma^*\times\mathbb N$ we have that $(x,k)\in P \Leftrightarrow (x,\pi(k))\in L$.
\end{definition}
Notice that $\para\complClFont{P}=\FPT$.
The complexity classes $\mathcal{C}\in\setdefinition{  \NP,\co\NP, \DP,\SigmaP}$ are used in the $\para\mathcal C$ context by us.

Let $c\in\mathbb N$ and $P\subseteq\Sigma^*\times\mathbb N$ be a PP, then the \emph{$c$-slice of $P$}, written as $P_c$ is defined as $P_c:=\{\,(x,k)\in\Sigma^*\times\mathbb N\mid k=c\,\}$.
Notice that $P_c$ is a classical problem then.
Observe that, regarding our studied complexity classes, showing membership of a PP $P$ in the complexity class $\para\mathcal C$, it suffices to show that each slice $P_c\in\mathcal C$.

\begin{definition}\label{def:fpt-reduction}
	Let $P\subseteq\Sigma^*\times\mathbb N,Q\subseteq\Gamma^*$ be two PPs.
	One says that $P$ is \emph{fpt-reducible} to $Q$, $P\fptreduction Q$, if there exists an fpt-computable function $f\colon\Sigma^*\times\mathbb N\to\Gamma^*\times\mathbb N$ such that
	\begin{itemize}
		\item for all $(x,k)\in\Sigma^*\times\mathbb N$ we have that $(x,k)\in P\Leftrightarrow f(x,k)\in Q$,
		\item there exists a computable function $g\colon\mathbb N\to\mathbb N$ such that for all $(x,k)\in\Sigma^*\times\mathbb N$ and $f(x,k)=(x',k')$ we have that $k'\leq g(k)$.
	\end{itemize}
\end{definition}

\subparagraph*{Propositional Logic.}
We assume familiarity with propositional logic. 
A \emph{literal} is a variable $x$ or its negation $\neg x$. 
A \emph{clause} is a disjunction of literals and a \emph{term} is a conjunction of literals. 
We denote by $\var(\varphi)$ the variables of a formula $\varphi$. 
Analogously, for a set of formulas $F$, $\var(F)$ denotes $\bigcup_{\varphi \in F}\var(\varphi)$. 
We identify finite $F$ with the conjunction of all formulas from $F$, that is, $\bigwedge_{\varphi \in F} \varphi$. 
A mapping $\sigma\colon \var(\varphi) \mapsto \{0,1\}$ is called an \emph{assignment} to the variables of $\varphi$. 
A \emph{model} of a formula $\varphi$ is an assignment to $\var(\varphi)$ that satisfies $\varphi$. 
The \emph{weight} of an assignment $\sigma$ is the number of variables $x$ such that $\sigma(x)=1$.
For two formulas $\psi, \varphi$ we write $\psi  \models \varphi$ if every model of $\psi$ also satisfies $\varphi$. 
A formula is positive (resp.\ negative) if every literal appears positively (negatively) and a negation symbol appears only in front of a variable. 
The class of all propositional formulas is denoted by $\PROP$.
Occasionally, in this paper, we will consider special subclasses of formulas, namely
\begin{align*}
	\Gamma_{0, d} & =  \setdefinition{\ell_1\land\ldots\land \ell_c \mid \ell_1,\ldots, \ell_c \text{ are literals and }c\leq d },\\
	\Delta_{0, d} & =  \setdefinition{\ell_1\lor\ldots\lor \ell_c \mid \ell_1,\ldots, \ell_c \text{ are literals and }c\leq d },\\
	\Gamma_{t, d} & =  \left\{\,\bigwedge\limits_{i\in I} \alpha_i \,\middle|\, \alpha_i \in \Delta_{t-1,d}  \text{ for all } i \in I\, \right\}, 
	\Delta_{t, d}  =  \left\{\,\bigvee\limits_{i\in I} \alpha_i \,\middle|\, \alpha_i \in \Gamma_{t-1,d}  \text{ for all } i \in I\, \right\}.
\end{align*}
Finally, $\Gamma^{+}_{t,d}$ (resp. $\Gamma^{-}_{t,d}$) denote the class of all positive (negative) formulas in $\Gamma_{t,d}$.

\begin{example}
	Let $\phi = \bigwedge_{i\leq m}(\neg x_{i,1} \lor \cdots\lor \neg x_{i,n_i})$ for $1\leq n_i\leq d$ and $d,m\in\mathbb N$. 
	That is, $\phi$ is a conjunction of the clauses containing negative literals.
	Then $\phi \in \Gamma_{1,d}$, the so-called $d$-CNF. 
	Note also that $\phi$ is an ${\IS{d}{1}}$-formula using only negative clauses.
\end{example}

We will often reduce a problem instance to (and from) parameterised weighted satisfiability problem for propositional formulas.
This problem is defined below.
\paraproblemdef{$\wsat{t}{d}$}{A $\Gamma_{t,d}$-formula $\alpha$ over variables $V$ with $t, d \geq 1 $ and $k\in \mathbb{N}$}{$k$}{Is there a satisfying assignment for $\alpha$ of weight $k$}
Two similarly defined problems are $\wsatpos{t}$ and $\wsatneg{t}$ where an instance $\alpha$ comes from classes $  \Gamma^{+}_{t,1}$ (resp. $\Gamma^{-}_{t,1}$). 
The classes of the $\complClFont{W}$-hierarchy can be defined in terms of these problems as proved by Downey and Fellows~\cite{DBLP:series/txtcs/FlumG06}.
\begin{proposition}[\cite{DBLP:series/txtcs/FlumG06}]\label{theorem-wsat}
	For every $t \geq 1$, the following problems are \W t-complete under $\fptreduction$-reductions:
		$\wsatpos{t} $ if t is even, 
		$\wsatneg{t} $ if t is odd, 
		$\wsat{t}{d}$ for every $t$ and $d\geq 1$.
\end{proposition}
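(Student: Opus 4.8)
This is the ``Normalisation Theorem'' for the $\W{}$-hierarchy, and the plan is to derive it along the standard route of Downey and Fellows as presented by Flum and Grohe~\cite{DBLP:series/txtcs/FlumG06}; what follows is only the organisation of that argument. Recall that $\W{t}$ is (equivalently) defined as the closure under $\fptreduction$ of the family $\{\,\wsat{t}{d}\mid d\geq 1\,\}$. Membership in $\W{t}$ is then immediate for all three problems: $\wsat{t}{d}$ is itself one of the defining problems, while $\wsatpos{t}$ (resp.\ $\wsatneg{t}$) is merely $\wsat{t}{1}$ with its inputs restricted to the syntactic subclass $\Gamma^{+}_{t,1}$ (resp.\ $\Gamma^{-}_{t,1}$), so it fpt-reduces to $\wsat{t}{1}$ by the identity. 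Hence the whole content lies in the three $\W{t}$-\emph{hardness} claims, and by the definition of $\W{t}$ it suffices, for each target problem, to give for every $D\geq1$ an $\fptreduction$-reduction from $\wsat{t}{D}$ to it.

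I would obtain these reductions in two normalisation steps. \emph{(i)~Bottom fan-in reduction:} $\wsat{t}{D}\fptreduction\wsat{t}{d}$ for the fixed small $d$ in question. The point is that a weight-$k$ assignment makes only $k$ variables true, so on any fixed block of variables only polynomially many ``locally certifiable'' true-patterns can occur; one replaces each bottom-level conjunction or disjunction of up to $D$ literals by a gadget over a handful of fresh variables recording which pattern is realised, leaving the alternation depth $t$ unchanged and keeping the target weight bounded by a function of $k$ alone. \emph{(ii)~Monotone/antimonotone normalisation:} turn an arbitrary $\Gamma_{t,d}$-instance into a positive one if $t$ is even, and into a negative one if $t$ is odd. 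De~Morgan pushes all negations down to the leaves without changing $t$; one then eliminates every leaf of the ``wrong'' polarity by adding, for each such variable $x$, a fresh complementary variable $x'$ together with a global gadget forcing $x'$ to behave as $\neg x$ on weight-$k$ assignments. Which polarity is the eliminable one is governed by the number $t$ of alternating layers above the leaves, and this is exactly where the parity of $t$ enters, producing $\Gamma^{+}_{t,1}$-instances for even $t$ and $\Gamma^{-}_{t,1}$-instances for odd $t$. Composing (i) and (ii) --- and, for $\wsat{t}{d}$, just (i) --- with the definitional reduction of an arbitrary $\W{t}$-problem to some $\wsat{t}{D}$ then yields all three results. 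One should keep in mind the small-case caveat that for $t=1$ the value $d\geq2$ is required in step~(i), since $\wsat{1}{1}$, $\wsatpos{1}$ and $\wsatneg{1}$ are all trivially in $\FPT$, whereas $d=1$ already suffices once $t\geq2$.

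The genuinely hard part is step~(ii). The obvious gadget forcing $x'=\neg x$ either blows up the parameter (if it hands a copy of $\neg x$ to every occurrence) or increases the depth $t$, and steering between these two failure modes is the technical heart of the Normalisation Theorem; its proof uses a colour-coding / $k$-restriction style combinatorial argument to guarantee that the new target weight still depends only on $k$. For this step I would invoke the proof of Downey and Fellows~\cite{DBLP:series/txcs/DowneyF13} as carried out by Flum and Grohe~\cite{DBLP:series/txtcs/FlumG06} rather than reproduce it.
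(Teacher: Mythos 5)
The paper gives no proof of this proposition at all---it is imported directly from Flum and Grohe---and your outline is exactly the standard normalisation route behind that citation (membership by the definition of $\W{t}$ as the fpt-closure of the $\wsat{t}{d}$ family, hardness via bottom fan-in reduction plus the monotone/antimonotone collapse), with the genuinely hard combinatorial step correctly delegated to the cited sources, so there is nothing to fault. Your caveat that for $t=1$ one needs depth $d\geq 2$ (since $\wsat{1}{1}$, $\wsatpos{1}$ and $\wsatneg{1}$ are trivially in $\FPT$) is accurate and is in fact glossed over by the proposition as stated in the paper.
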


\subparagraph*{Constraints and $\csl$-formulas.}
A \emph{logical relation} of arity $k$ is a relation $R \subseteq \{0,1\}^k$. 
A \emph{constraint} is a formula $R(x_1, \dots, x_k)$, where $R$ is a logical relation of arity $k$ and the $x_i$'s are (not necessarily distinct) variables. 
An assignment $\sigma$ to the $x_i$'s satisfies the constraint if $(\sigma(x_1), \dots, \sigma(x_k)) \in R$. 
A \emph{constraint language} $\csl$ is a finite set of logical relations. 
An \emph{$\csl$-formula $\varphi$} is a conjunction of constraints built upon logical relations only from $\csl$, and accordingly can be seen as a quantifier-free first-order formula. 
An assignment $\sigma$ is called a \emph{model} of $\varphi$ if $\sigma$ satisfies all constraints in $\varphi$ simultaneously. 
Whenever an $\csl$-formula or constraint is logically equivalent to a single clause or term, we treat it as such.

\begin{table}
	\centering
	\includegraphics[width=\linewidth]{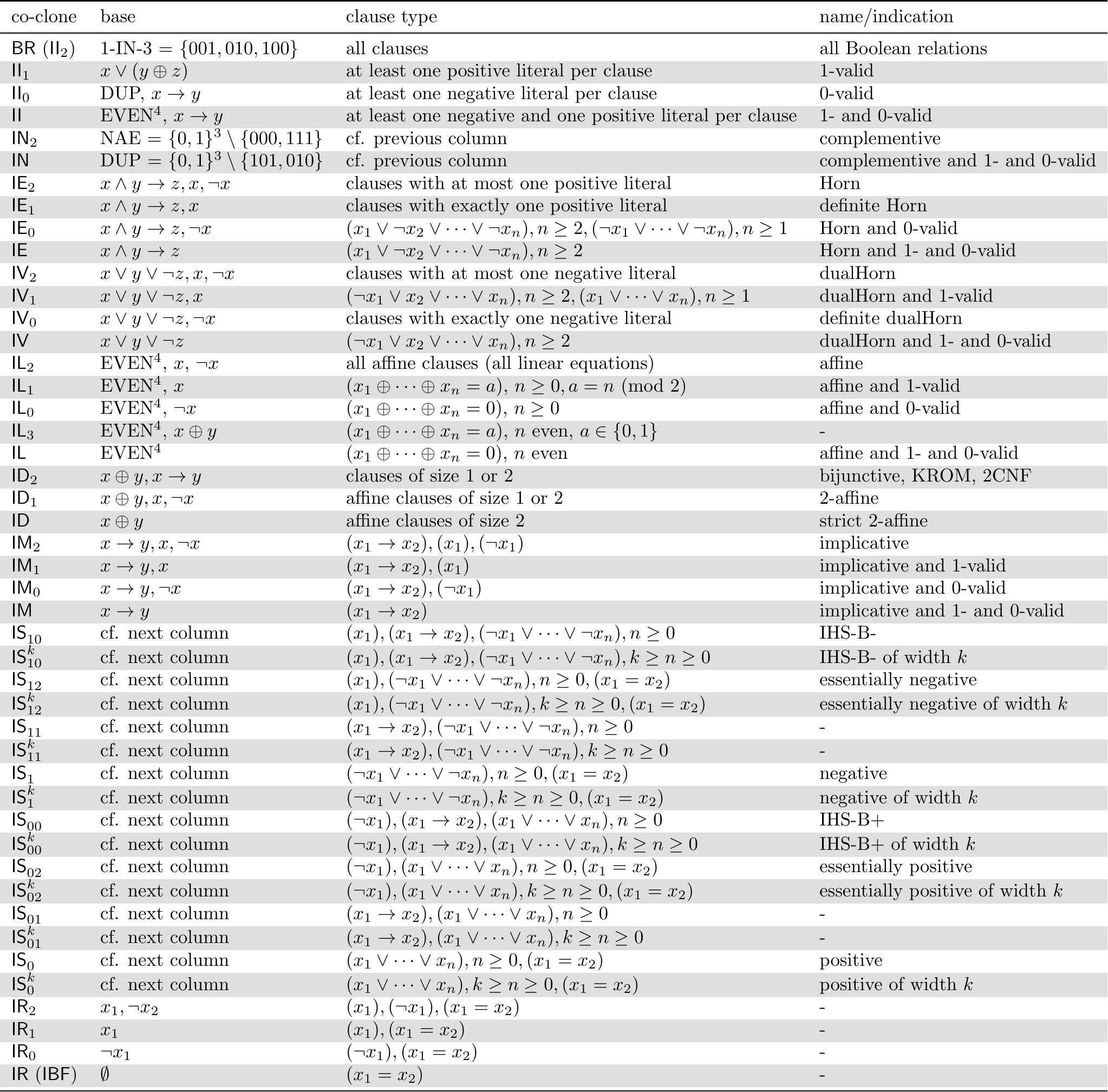}
	\caption{Overview of bases \cite{DBLP:journals/ipl/BohlerRSV05} and clause descriptions \cite{DBLP:journals/ai/NordhZ08} for co-clones, where
	EVEN$^4$ = $x_1 \oplus x_2 \oplus x_3 \oplus x_4 \oplus 1$.}\label{tab:bases}
\end{table}

\begin{definition}
\begin{enumerate}
\item The set $\clos{\csl}$ is the smallest set of relations that contains $\csl$, the equality constraint, $=$, and which is closed under primitive positive first order definitions, that is, if $\phi$ is an $\csl \cup \{=\}$-formula and $R(x_1, \dots, x_n) \equiv \exists y_1 \dots \exists y_l \phi(x_1, \dots, x_n,y_1, \dots, y_l)$, then $R \in \clos{\csl}$. In other words, $\clos{\csl}$ is the set of relations that can be expressed as an $\csl \cup \{=\}$-formula with existentially quantified variables.
\item The set $\closneq{\csl}$ is the set of relations that can be expressed as an $\csl$-formula with existentially quantified variables (no equality relation is allowed).
\end{enumerate}
\end{definition}

The set $\clos{\csl}$ is called a \emph{relational clone} or \emph{co-clone} with \emph{base} $\csl$ \cite{DBLP:journals/ipl/BohlerRSV05}. Throughout the text, we refer to different types of Boolean relations and corresponding co-clones following Schaefer's terminology \cite{DBLP:conf/stoc/Schaefer78}.
For an overview of co-clones and bases, see Table~\ref{tab:bases}.
Note that $\closneq{\csl} \subseteq \clos{\csl}$ by definition. The other direction does not hold in general. However, if $(x = y) \in \closneq{\csl}$, then $\closneq{\csl} = \clos{\csl}$.

\subparagraph*{Abduction.}
An instance of the abduction problem for $\csl$-formulas is given by $\langle V, H, M, \KB \rangle $, where $V$ is the set of variables, $H$ is the set of hypotheses, $M$ is the set of manifestations, and $\KB$ is the knowledge base (or theory) built upon variables from $V$. 
A knowledge base $\KB$ is a set of $\csl$-formulas that we assimilate with the conjunction of all formulas it contains. We define the following abduction problems for $S$-formulas.
\paraproblemdef{$\abd{\csl}{k}$---the abductive reasoning problem for $\csl$-formulas parameterised by $k$}{$\langle V, H, M, \KB, k \rangle$, where $\KB$ is a set of $\csl$-formulas, $H, M$ are each set of propositions, and $V=\var(H)\cup\var(M)\cup\var(\KB)$}{$k$}{Is there a set $E\subseteq H$ such that $E\land\KB$ is satisfiable and $E\land\KB\models M$}
Similarly, the problem $\abd{\csl}{}$ is the classical pendant of $\abd{\csl}{k}$. 
Additionally, we consider size restrictions for a solution and define the following problems. 
\paraproblemdef{$\abdle{\csl}{k}$}{$\langle V, H, M, \KB, s, k \rangle$, where $\KB$ is a set of $\csl$-formulas, $H, M$ are each set of propositions, and $V=\var(H)\cup\var(M)\cup\var(\KB)$, and $s\in\mathbb N$}{$k$}{Is there a set $E\subseteq H$ with $|E|\leq s$ such that $E\land\KB$ is satisfiable and $E\land\KB\models M$}
Analogously, $\abdeq{\csl}{k}$ requires the size of $E$ to be exactly $s$ and $\abdeq{\csl}{}, \abdle{\csl}{}$ are the classical counterparts.
Notice that, for instance, in cases where the parameter is the size of solutions, then $s=k$.
\begin{example}
Sitting in a train you realise that it is still not moving even though the clock suggests it should be. 
You start reasoning about it. 
Either some door is open, the train has delayed, or that engine has failed. 
This form of reasoning is called abductive reasoning.
Having some additional information that the operator of train usually announces in case the train is delayed or engine has failed, you deduce that some door must be opened and that train will start moving soon when all the doors are closed.
Formally, one is interested in an explanation for the observed event (manifestation) $\setdefinition{ \texttt{stop}}$. 
The knowledge base includes following statements:
\begin{multicols}{2}
\begin{itemize}
	\item $\neg \texttt{moving} \leftrightarrow \texttt{stop}$
	\item $\neg\texttt{announcement}$,	
	\item $\texttt{moving} \rightarrow \texttt{time}$,
	\item $\texttt{engineFailed}\rightarrow\texttt{announcement}$,
	\item $\texttt{trainDelayed} \rightarrow \texttt{newTime}$, 
	\item $(\texttt{engineFailed} \lor \texttt{trainDelayed}  \lor \texttt{doorOpen} )\rightarrow \texttt{stop}$. 
\end{itemize}	
\end{multicols}

Then the set of hypotheses $\setdefinition{ \texttt{time}, \texttt{doorOpen}, \texttt{announcement}}$ has an explanation, namely, $\setdefinition{\texttt{doorOpen}}$.
On the other hand, $\setdefinition{\texttt{time}}$ does not explain the event $\setdefinition{ \texttt{stop}}$, whereas, $\setdefinition{\texttt{announcement}}$ is not consistent with the knowledge base.
Consequently, an explanation of size $1$ exists.
There also exists an explanation of size $2$ since $\setdefinition{\texttt{time}, \texttt{doorOpen}}$ is consistent with $\KB$ and explains $M$.
Note that having the set of hypotheses $\setdefinition{ \texttt{engineFailed}, \texttt{doorOpen}}$ facilitates only one explanation of size $1$, namely, $\setdefinition{ \texttt{doorOpen}}$, even though the hypotheses set has size $2$.
\end{example}

\subsection{Base Independence}\label{subsec:base-independence}
We present now a number of technical expressivity results (Lemma~\ref{lem:express_equality}).
They allow us in the sequel to prove a crucial property for the whole classification endeavour (Lemma~\ref{lem:independence}).
To prove the following lemma, we need to express equality by some other construction.
\begin{lemma}[$\star$]\label{lem:express_equality}
Let $\csl$ be a constraint language.
If $\csl$ is not essentially negative and not essentially positive, then $(x=y) \in \closneq{\csl}$, and $\clos{\csl} = \closneq{\csl}$.
\end{lemma}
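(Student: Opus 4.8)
The statement to prove has two parts, but by the observation recorded just above it (namely that $(x=y)\in\closneq{\csl}$ forces $\closneq{\csl}=\clos{\csl}$), it suffices to show the first part: the equality relation $R_{=}:=\{(0,0),(1,1)\}$ admits a primitive positive definition from $\csl$ \emph{without} using the equality relation. The plan is to produce such a definition by first extracting a small ``gadget'' relation from a witness $R_{-}\in\csl$ of non-essential-negativity, a symmetric gadget from a witness $R_{+}\in\csl$ of non-essential-positivity, and then combining them. Throughout we use the clausal characterisations underlying Table~\ref{tab:bases} (cf.\ Nordh and Zanuttini~\cite{DBLP:journals/ai/NordhZ08}): $\csl$ is essentially negative iff every $R\in\csl$ is logically equivalent to a conjunction of negative clauses and positive unit clauses, and dually for essentially positive.

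\textbf{Extraction.} Fix $R_{-}\in\csl$ that is not essentially negative; then every CNF representation of $R_{-}$ contains a clause that is neither negative nor a positive unit, i.e.\ a clause with at least one positive literal together with at least one further literal. Identifying coordinates of $R_{-}$ with one another and substituting any constants that happen to be $\closneq{\csl}$-definable (a constant relation $\{(0)\}$ or $\{(1)\}$ is itself a negative, resp.\ positive, unit, and such constants frequently fall out of identifications, e.g.\ $R(x,x,z)$ for an affine $R$ over GF(2)), one exposes in $\closneq{\csl}$ one of a very short list of ``building blocks'': the implication relation $x\to y$, the disequality relation $x\neq y$, or the affine relation $x\oplus y\oplus z = c$ (from which, after a constant substitution, $x\to y$, $x\neq y$, or $R_{=}$ is read off directly). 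The one remaining possibility is that the only obstruction one can reach is a genuinely \emph{positive} clause; but then $R_{-}$ behaves essentially positively, and together with the hypothesis that $\csl$ is also \emph{not} essentially positive one still finds, now via a non-Horn \emph{and} non-dual-Horn witness, tuples $a,b$ with $a\wedge b\notin R$ and tuples $a',b'$ with $a'\vee b'\notin R$; the standard Schaefer-style gadgets then yield both the clause relation $\mathrm{OR}(x,y)$ and the clause relation $\mathrm{NAND}(x,y)$, hence $x\neq y\equiv\mathrm{OR}(x,y)\wedge\mathrm{NAND}(x,y)$. A fully symmetric argument applies to a witness $R_{+}$ of non-essential-positivity.

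\textbf{Assembly.} In every case one of the building blocks for $R_{=}$ is available, and in each case there is a short primitive positive definition without equality: $R_{=}(x,y)\equiv (x\to y)\wedge (y\to x)$; or $R_{=}(x,y)\equiv \exists z\,\bigl(R_{\neq}(x,z)\wedge R_{\neq}(z,y)\bigr)$; or, when one side only delivers $\mathrm{OR}$ and the other only $\mathrm{NAND}$, one first forms $R_{\neq}$ as above and then proceeds; or, in the affine cases, $R_{=}$ (equivalently $x\oplus y\oplus 0 = 0$) is already obtained. Hence $(x=y)\in\closneq{\csl}$, and the equality noted in the preliminaries gives $\clos{\csl}=\closneq{\csl}$.

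\textbf{Main obstacle.} The delicate part is not any single gadget but the \emph{exhaustiveness} of the case distinction: one must verify that \emph{every} Boolean constraint language that is neither essentially negative nor essentially positive — including the Horn, dual-Horn, bijunctive and affine languages that are not ``essentially monotone'' — really does, after all admissible coordinate identifications and after accounting for exactly which constants are $\closneq{\csl}$-definable from $\csl$, expose one of the listed building blocks; and one must check that the extraction arguments on the negative side and on the positive side interact correctly (in particular when $R_{-}=R_{+}$). This bookkeeping over the relevant portion of Post's (co-)clone lattice is where the detailed proof (flagged by `$\star$' in the appendix) concentrates its effort.
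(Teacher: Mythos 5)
There is a genuine gap: your ``Extraction'' step, which carries the entire content of the lemma, is asserted rather than proved, and you acknowledge as much in your final paragraph --- the claim that \emph{every} language that is neither essentially negative nor essentially positive, after coordinate identifications and substitution of whatever constants happen to be definable, exposes one of your building blocks is exactly what has to be verified, and no argument is given. Worse, one of the intermediate claims is false as stated: from a non-Horn witness and a non-dualHorn witness the ``standard Schaefer-style gadgets'' do \emph{not} yield $\mathrm{OR}(x,y)$ and $\mathrm{NAND}(x,y)$ in $\closneq{\csl}$. For instance $\csl=\{x\neq y\}$ is neither Horn nor dualHorn (and neither essentially positive nor essentially negative), but it is complementive, and every relation in $\closneq{\csl}$ of a complementive language is again complementive, so $\mathrm{OR}\notin\closneq{\csl}$; affine languages give another counterexample. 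What such witnesses can be made to yield is $(x\neq y)$, and even that requires the auxiliary relation $(t\land\neg f)$, which is available only when $\csl$ is neither $0$-valid, nor $1$-valid, nor complementive --- that is, the availability of constants, which your extraction step simply takes for granted, is precisely the delicate point.

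The paper's proof organises exactly this bookkeeping. It first splits on $0$-/$1$-validity and imports known expressibility results (Prop.~\ref{prop:express_equality_literature}): if $\csl$ is $0$-valid or $1$-valid, equality is definable directly under the non-essential-positivity/negativity hypothesis; if it is neither, then $(x\neq y)$ or $(x\land\neg y)$ is definable depending on whether $\csl$ is complementive. In the remaining case it works with the polymorphism characterisations: a non-Horn and a non-dualHorn witness produce explicit four-variable gadgets which, conjoined with $(t\land\neg f)$, define $(x\neq y)$ and hence $(x=y)$ by chaining two disequalities; if instead $\csl$ is Horn, a witness of the failure of the essentially-negative polymorphism $m_1\land(m_2\lor\neg m_3)$ yields a gadget defining $(x=y)$ directly, and the dualHorn case is dual. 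Your sketch gestures at some of these gadgets but does not perform the case analysis that makes them available, so as it stands the proof is incomplete and would need to be repaired along these lines.
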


The following property is crucial for presented results in the course of this paper.
It supplies generalised upper as well as lower bounds (independence of the base of a co-clone), as long as the constraint language is not essentially negative and not essentially positive.
The proof idea is to implement the previous lemma.
\begin{lemma}[$\star$]\label{lem:independence}
Let $\csl, \csl'$ be two constraint languages such that $\csl'$ is neither essentially positive nor essentially negative. 
Let $\abds{}{} \in \{\abd{}{}, \abdeq{}{}, \abdle{}{}\}$. If $\csl \subseteq \clos{\csl'}$, then
$\abds{\csl}{} \preduction \abds{\csl'}{}$. 
\end{lemma}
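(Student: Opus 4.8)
The plan is to run the classical base-translation argument for constraint problems, taking care that the newly introduced auxiliary variables are placed so that they do not interfere with $H$, $M$, or the size bound $s$. Fix an instance of $\abds{\csl}{}$, say $\langle V, H, M, \KB \rangle$ (together with the size bound $s$ in the $\leq$/$=$ variants). Since $\csl'$ is neither essentially positive nor essentially negative, Lemma~\ref{lem:express_equality} gives $\clos{\csl'} = \closneq{\csl'}$, and because $\csl \subseteq \clos{\csl'}$ every relation $R \in \csl$ then has an \emph{equality-free} primitive positive definition over $\csl'$, say $R(x_1, \dots, x_n) \equiv \exists y_1 \cdots \exists y_{\ell_R}\, \phi_R$ with $\phi_R$ an $\csl'$-formula on the variables $x_1, \dots, x_n, y_1, \dots, y_{\ell_R}$. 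As $\csl$ is finite, these definitions are fixed objects of constant size.

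First I would build the target instance: replace in $\KB$ every constraint $R(z_1, \dots, z_n)$ by the $\csl'$-formula obtained from $\phi_R$ by substituting $z_1, \dots, z_n$ for $x_1, \dots, x_n$ and \emph{fresh} variables for $y_1, \dots, y_{\ell_R}$, the fresh variables being pairwise distinct over all constraints and disjoint from $V$. Call the result $\KB'$; the new instance is $\langle V', H, M, \KB', s \rangle$ with $V' = \var(H) \cup \var(M) \cup \var(\KB')$. The point is that $H$, $M$, and $s$ are left untouched, so every auxiliary variable lies in $V' \setminus H$ and occurs in neither $H$ nor $M$. This transformation is plainly polynomial-time computable, and since $\preduction$ is a classical many-one reduction there is nothing to verify about parameters.

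It then remains to check correctness, and since $H$ and $s$ are preserved it suffices to show that a fixed $E \subseteq H$ is an explanation in the old instance iff it is one in the new one. For consistency, any model of $E \wedge \KB$ extends — constraint by constraint, by choosing witnesses for the quantifiers in the $\phi_R$ — to a model of $E \wedge \KB'$, while any model of $E \wedge \KB'$ restricts to a model of $E \wedge \KB$; hence the two are equisatisfiable. For entailment, suppose $E \wedge \KB \models M$ and let $\sigma'$ be a model of $E \wedge \KB'$; its restriction $\sigma$ to the original variables satisfies $E \wedge \KB$, hence $M$, and since $\var(M)$ is untouched, $\sigma' \models M$. The converse uses the extension direction in exactly the same way. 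Combining the two halves gives the equivalence, and since the argument never used which of the three problems $\abds{}{} \in \{\abd{}{}, \abdeq{}{}, \abdle{}{}\}$ we started from, it covers all three.

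The step I expect to require the most care is the entailment equivalence: freeing the once-existential auxiliary variables turns them into genuine variables of the abduction instance, and one must argue that this neither creates spurious counter-models for $M$ nor invalidates genuine ones. This goes through precisely because those variables are kept out of $H$ (so no candidate $E$ can mention them) and out of $M$ (so their values are irrelevant to $\models M$), and because Lemma~\ref{lem:express_equality} lets us take the pp-definitions equality-free, so that $\KB'$ really is an $\csl'$-formula rather than an $\csl' \cup \{=\}$-formula.
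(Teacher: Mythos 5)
Your proposal is correct and follows essentially the same route as the paper: both use Lemma~\ref{lem:express_equality} to obtain equality-free $\csl'$-definitions of each relation in $\csl$, replace every constraint of $\KB$ by such a definition with fresh, per-constraint auxiliary variables kept out of $H$ and $M$, note the polynomial size bound, and observe that explanations are preserved because models extend/restrict along the dropped existential quantifiers. Your write-up merely makes explicit the solution-preservation argument that the paper states as an observation.
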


The last lemma in this section takes care of the essentially positive cases.
The proof idea is to remove the equality clauses maintaining the size counts and the satisfiability property.
\begin{lemma}[$\star$]\label{lem:base-ind-ess-positive}
Let $\csl, \csl'$ be two constraint languages such that $\csl'$ is essentially positive. 
Let $\abds{}{} \in \{\abd{}{}, \abdeq{}{}, \abdle{}{}\}$. 
If $\csl \subseteq \clos{\csl'}$, then $\abds{\csl}{} \preduction \abds{\csl'}{}$. 
\end{lemma}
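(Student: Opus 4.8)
The plan is to carry out the reduction in two stages. \emph{Stage 1 --- removing the non-base relations.} Since $\csl\subseteq\clos{\csl'}$, each $R\in\csl$ has a primitive positive definition $R(\vec x)\equiv\exists\vec y\,\varphi_R(\vec x,\vec y)$ over $\csl'\cup\{=\}$. Replacing in $\KB$ every constraint $R(\vec v)$ by $\varphi_R(\vec v,\vec y)$ with a private fresh block $\vec y$ --- added to $V$, but to neither $H$ nor $M$ --- gives an instance whose knowledge base $\widehat{\KB}$ is a conjunction of $\csl'$-constraints and equality atoms. As the fresh variables are existentially projected away on the original variables, this step preserves consistency, entailment of $M$, and --- what matters for $\problemFont{ABD}_\leq$ and $\problemFont{ABD}_=$ --- the whole family of admissible explanations together with their cardinalities. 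Thus it suffices to reduce the problem restricted to $\csl'$-formulas enriched by the equality relation to the problem restricted to plain $\csl'$-formulas.

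\emph{Stage 2 --- removing the equality atoms.} Let $\sim$ be the equivalence relation on the variables generated by the equality atoms of $\widehat{\KB}$, with classes $C_1,\dots,C_r$; fix a representative $r_C$ of each class and let $\pi$ be the quotient map. Contracting every class to its representative turns the equality-free part of $\widehat{\KB}$ into a genuine $\csl'$-formula $\KB'$ and yields the instance $\langle V',\pi(H),\pi(M),\KB'\rangle$ (with the bound $s$, where present). For $\problemFont{ABD}$ this already works, and since contraction can only shrink an explanation while explanations lift back with unchanged cardinality, it works for $\problemFont{ABD}_\leq$ too. It fails for $\problemFont{ABD}_=$: an explanation of size exactly $s$ may spend several hypotheses inside one class, all identified with the single variable $r_C$, so after contraction no explanation of size $s$ may survive. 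To compensate, we add for every class $C$ with $p_C:=|C\cap H|\geq 2$ whose representative is not forced to $0$ by $\KB'$ a block of $p_C-1$ brand-new ``padding'' hypotheses that occur in no constraint; they enter the hypothesis set and $V'$, but not $M$. (No padding is created for a class whose hypotheses $\KB'$ forces to $0$, as those hypotheses cannot be used anyway.)

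Correctness rests on the following elementary description of abduction over an \emph{essentially positive} theory $\Phi$: write $\Phi$ --- as essential positivity allows --- as a conjunction of positive clauses and negative units, let $Z$ be the variables $\Phi$ forces to $0$ and $F_0$ those it forces to $1$; then $E\subseteq H$ is an explanation iff $\Phi$ is satisfiable, $M\cap Z=\emptyset$, and $M\setminus F_0\subseteq E\subseteq H\setminus Z$, so, whenever feasible at all, explanations exist of exactly the sizes in $[\,|M\setminus F_0|\,,\,|H\setminus Z|\,]$ --- the reason being that over $\Phi$ the only way asserting a hypothesis can do harm is by asserting a variable that is forced to $0$. Applying this to $\KB'$ (where the $\sim$-classes that $\widehat{\KB}$ pins to a constant are precisely those landing in $Z$, respectively $F_0$) and then to the padded instance, one verifies that the two instances satisfy the same feasibility conditions and that the padded instance has explanations of exactly the sizes $[\,|\pi(M)\setminus F_0|\,,\,|H\setminus\pi^{-1}(Z)|\,]$, which is precisely the spectrum available originally once each used class is permitted to contribute between $1$ and $p_C$ hypotheses. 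Hence the construction is a valid $\preduction$ for all three problems; being computable in polynomial time with only a polynomial blow-up of $|M|$, $|H|$, $|V|$ and $s$, it is simultaneously an fpt-reduction for every parameter considered. I expect the main obstacle to be exactly this last point for $\problemFont{ABD}_=$: realising that the solution sizes lost to contraction are recovered by free padding hypotheses, that this is sound because a fresh variable is never forced to $0$, and that precisely $p_C-1$ of them per class --- and none for the forbidden classes --- are needed so as not to create spurious explanations.
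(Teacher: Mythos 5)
Your construction is sound, but it takes a genuinely different route from the paper for the delicate case $\problemFont{ABD}_=$. The paper never identifies two hypotheses linked by an equality clause: an equality $x_i=x_j$ is substituted away only when at most one of the two variables lies in $H$, while for $x_i,x_j\in H$ the clause is simply deleted (after propagating a negative unit $\neg x_i$ to $\neg x_j$ and vice versa), so that $H$ keeps both variables and \emph{literally the same sets} $E\subseteq H$ are explanations before and after --- no size bookkeeping is needed. You instead contract every equality class to a representative and then restore the lost cardinalities by $p_C-1$ fresh, unconstrained padding hypotheses per class (none for classes forced to $0$), arguing via the fact that over an essentially positive theory the achievable explanation sizes form an interval whose endpoints you match on both sides. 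Your route costs a more global correctness argument (the spectrum analysis) but yields a single uniform reduction for all three problem variants and handles manifestations caught inside an equality class automatically; the paper's route is more local and makes the solution correspondence an identity, at the price of a case analysis on where the equated variables live.

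One imprecision you should repair: equality is itself an essentially positive relation (it belongs to every co-clone, in particular $\IS{}{02}$), so an essentially positive $\csl'$-formula is in general a conjunction of positive clauses, negative units \emph{and equality-type constraints}; your normal form ``positive clauses and negative units'' and the resulting characterisation $M\setminus F_0\subseteq E\subseteq H\setminus Z$ are not literally valid when such constraints remain (a hypothesis equal to a manifestation explains it without the manifestation being in $E\cup F_0$). The fix is easy and should be stated: if $(x=y)\notin\closneq{\csl'}$, then no $\csl'$-formula can enforce an equality between two non-constant variables (otherwise projecting would put $(x=y)$ into $\closneq{\csl'}$), so after your contraction $\KB'$ contains no residual equalities and the characterisation applies; and if $(x=y)\in\closneq{\csl'}$, the equality atoms can instead be replaced by their $\csl'$-definitions exactly as in Lemma~\ref{lem:independence}, making Stage~2 unnecessary. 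With this case distinction made explicit, your argument goes through.
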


\begin{remark}
	Notice that Lemmas~\ref{lem:independence} and~\ref{lem:base-ind-ess-positive} are stated with respect to the classical and unparameterised decision problem.
	However, these reductions can be generalised to $\fptreduction$-reductions whenever the parameters are bound as required by Def.~\ref{def:fpt-reduction}.
	That is, in our case, for any parameterisation $k\in\setdefinition{|H|,|E|,|M|}$ the reductions are valid.
	Even more, the values of the parameters stay the same as in the reduction the sizes of $H$, $E$, and $M$ remain unchanged.
\end{remark}

\begin{remark}
It is rather cumbersome to mention the base independence results in almost every single proof.
As a result, we omit this reference and show the results only for concrete bases, thereby, implicitly using the above lemmas. 
In cases where we deal with essentially negative constraint languages, we do not have a general base independence result, but direct constructions showing hardness in our cases for all bases (e.g., Lemma~\ref{lem:ABD-eq-E-IS12hk}).
\end{remark}

Let $\SAT$ and $\IMP$ denote the classical satisfiability and implication problems. 
Given a constraint language $\csl$ then an instance of $\SAT(\csl)$ is an $\csl$-formula $\varphi$ and the question is whether there exists a satisfying assignment for $\varphi$.
On the other hand, an instance of $\IMP(\csl)$ is $(\phi,\psi)$ such that $\phi, \psi$ are two $\csl$-formulas and the question is whether $\phi \models \psi$. 
We have the following observation regarding the classical $\SAT$ and $\IMP$ problems.

\begin{proposition}[\cite{DBLP:conf/stoc/Schaefer78,DBLP:conf/dagstuhl/SchnoorS08}]\label{prop:SAT-IMP-P}
Let $\csl$ be a constraint language such that $\clos\csl \subseteq \cocl$ where $\cocl \in \setdefinition{\ID_2,\IV_2, \IE_2, \IL_2}$. 
Then $\SAT(\csl)$ and $\IMP(\csl)$ are both in $\P$.
\end{proposition}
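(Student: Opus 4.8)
The statement is, in essence, Schaefer's dichotomy for $\SAT$ combined with the textbook encoding of non-implication as satisfiability; my plan is to assemble these two ingredients and to verify one small closure property.

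First I would settle $\SAT(\csl)$. Since $\cocl$ is a co-clone and $\csl \subseteq \clos{\csl} \subseteq \cocl$, every relation of $\csl$ is invariant under the polymorphism characterising $\cocl$: binary conjunction for $\IE_2$, binary disjunction for $\IV_2$, the ternary majority operation for $\ID_2$, and the ternary operation $x \oplus y \oplus z$ for $\IL_2$. These are exactly the four polynomial-time solvable cases of Schaefer's classification~\cite{DBLP:conf/stoc/Schaefer78} (Horn, dual Horn, bijunctive, and affine formulas), so $\SAT(\csl) \in \P$; concretely, such a formula is decided by unit propagation, by dual unit propagation, by the linear-time $2$-SAT procedure, or by Gaussian elimination, respectively.

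Next I would reduce $\IMP(\csl)$ to polynomially many $\SAT$ instances. Given $(\phi,\psi)$, write $\psi = \bigwedge_j C_j$ with $C_j = R_j(x^j_1,\dots,x^j_{k_j})$ and $R_j \in \csl$. Then $\phi \models \psi$ iff $\phi \models C_j$ for every $j$, and $\phi \not\models C_j$ iff $\phi$ has a model whose restriction to $(x^j_1,\dots,x^j_{k_j})$ lies in $\{0,1\}^{k_j} \setminus R_j$; equivalently, iff for some tuple $\mathbf{t} = (t_1,\dots,t_{k_j}) \in \{0,1\}^{k_j} \setminus R_j$ the formula $\phi \land \bigwedge_{i \le k_j}(x^j_i = t_i)$ is satisfiable. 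Hence it suffices to decide, for each $C_j$ and each such $\mathbf{t}$, the satisfiability of $\phi$ with some variables fixed to the constants $0$ and $1$. Now each formula $\phi \land \bigwedge_i (x^j_i = t_i)$ is a conjunction of $\csl$-constraints together with unary constraints $x = 0$ and $x = 1$, and the unary relations $\{(0)\}$ and $\{(1)\}$ are invariant under each of the four operations above (for instance $0 \land 0 = 0$, $1 \land 1 = 1$, and likewise for $\lor$, for majority, and for $x \oplus y \oplus z$). Thus $\csl' := \csl \cup \setdefinition{\{(0)\},\{(1)\}}$ still satisfies $\clos{\csl'} \subseteq \cocl$, so by the first part the satisfiability of each such formula is in $\P$. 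Since $\csl$ is a fixed finite language, every arity $k_j$ is bounded by a constant, so there are $O(1)$ tuples $\mathbf{t}$ per constraint and $O(|\psi|)$ constraints; running the polynomial-time satisfiability procedure $O(|\psi|)$ times decides $\IMP(\csl)$, which is therefore in $\P$.

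The only point that is not a direct invocation of known results is the closure property used in the last step: fixing variables to constants must not leave the tractable co-clone, which amounts precisely to the two unary constant relations being invariant under the relevant polymorphism. Everything else follows from Schaefer's theorem and the standard reduction of implication checking to (un)satisfiability, so I expect no genuine obstacle here.
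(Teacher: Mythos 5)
Your argument is correct: the paper does not prove this proposition but imports it from the cited literature, and your proof is exactly the standard one behind that citation — Schaefer's tractable cases (closure under $\land$, $\lor$, majority, resp.\ $x\oplus y\oplus z$) for $\SAT(\csl)$, plus the constraint-by-constraint reduction of $\IMP(\csl)$ to polynomially many satisfiability tests with variables frozen to constants, which stays inside the co-clone because all four defining polymorphisms are idempotent, so the unary relations $\{(0)\}$ and $\{(1)\}$ are invariant. The one detail worth noting, that repeated variables in a constraint might receive conflicting constants, resolves itself since such a fixed instance is simply unsatisfiable, which is the correct verdict; so there is no gap.
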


\section{Complexity results for abductive reasoning}
In this section, we start with general observations and reductions between the defined problems.
Then we prove some immediate (parameterised) complexity results.
We provide two results which help us to consider fewer cases to solve.
 
\begin{lemma}\label{lem:abd-logm-abdle}
	For every constraint language $S$ we have $\abd{\csl}{}\preduction\abdle{\csl}{}$.
\end{lemma}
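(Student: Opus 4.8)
The plan is to reduce $\abd{\csl}{}$ to $\abdle{\csl}{}$ by attaching a size bound that is vacuous. Given an instance $\langle V, H, M, \KB\rangle$ of $\abd{\csl}{}$, I would output the instance $\langle V, H, M, \KB, s\rangle$ of $\abdle{\csl}{}$ with $s := |H|$. The key observation is that a candidate explanation $E$ is by definition a subset of $H$, so $|E|\le|H| = s$ holds automatically; thus the cardinality requirement of $\abdle{\csl}{}$ becomes non-restrictive and the two instances ask the very same question. The map copies $V, H, M, \KB$ verbatim and computes one integer, so it is computable in linear time.

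Next I would verify the equivalence in both directions. If $E\subseteq H$ witnesses that $\langle V, H, M, \KB\rangle$ is a yes-instance of $\abd{\csl}{}$ --- that is, $E\land\KB$ is satisfiable and $E\land\KB\models M$ --- then $|E|\le|H| = s$, so the same $E$ is a solution of the constructed $\abdle{\csl}{}$-instance. Conversely, any $E$ with $|E|\le s$, $E\land\KB$ satisfiable and $E\land\KB\models M$ is in particular an explanation in the sense of $\abd{\csl}{}$, so dropping the (now irrelevant) bound on $|E|$ yields a yes-instance of $\abd{\csl}{}$.

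I would close with the remark that, since $H$, $M$ and $V$ are left unchanged by the reduction, the same construction is in fact an $\fptreduction$-reduction --- indeed a parameter-preserving one --- for each parameterisation $k\in\{|H|,|M|,|V|\}$, which is exactly what makes the lemma useful for propagating hardness results from $\abd{\csl}{}$ to $\abdle{\csl}{}$. There is no genuine obstacle here: the entire content is the single observation that the constraint $|E|\le s$ only bites when $s<|H|$, so setting $s=|H|$ sidesteps it; the only thing to be careful about is that $|H|$ (and not, say, $|V|$) is the correct vacuous bound, since explanations must be drawn from $H$.
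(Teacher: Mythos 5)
Your proposal is correct and coincides with the paper's own argument: the paper likewise maps $\langle V,H,M,\KB\rangle$ to the $\problemFont{ABD}_\leq$-instance with size bound $s=|H|$, noting that an explanation exists iff one of size at most $|H|$ exists. Your additional remarks on parameter preservation are consistent with how the paper later uses the lemma.
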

\begin{proof}
	Clearly, $\abdinstance \in\abd{\csl}{}\Leftrightarrow \pabdformat{V,H,M,KB}{|H|}$.
	That is, there is an explanation for an abduction instance if and only if there is one with size at most that of the hypotheses set.
\end{proof}
\begin{lemma}\label{monotone}
	$\abdle{\csl}{}= \abdeq{\csl}{}$ for any $\csl$ such that $\IBF \subseteq \clos{\csl}  \subseteq \IV_2$. 
\end{lemma}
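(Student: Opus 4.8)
The plan is to show that $\abdle{\csl}{}$ and $\abdeq{\csl}{}$ have the same set of yes-instances; note that both problems are fed an input of the form $\langle V,H,M,\KB,s\rangle$, so the statement "$\abdle{\csl}{}=\abdeq{\csl}{}$" is literally a claim about these instance sets. The inclusion $\abdeq{\csl}{}\subseteq\abdle{\csl}{}$ holds for every constraint language and is immediate: any explanation $E$ with $|E|=s$ in particular satisfies $|E|\leq s$. All of the work lies in the reverse inclusion.

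For the reverse inclusion I would exploit the hypothesis $\clos{\csl}\subseteq\IV_2$ through a single structural fact: every relation in $\IV_2$ is $1$-valid, hence every $\csl$-formula is satisfied by the all-$1$ assignment. Consequently, for any $E\subseteq H$ the all-$1$ assignment satisfies $\KB$ and (trivially) every literal of $E$, so $E\land\KB$ is always satisfiable; the consistency requirement in the definition of the abduction problems is vacuous over $\IV_2$. This reduces "$E$ is an explanation of $\langle V,H,M,\KB\rangle$" to "$E\land\KB\models M$", and the latter is monotone under enlarging $E$: if $E\subseteq E'\subseteq H$ and $E\land\KB\models M$, then every model of $E'\land\KB$ is a model of $E\land\KB$, hence of $M$, so $E'\land\KB\models M$ too, and $E'\land\KB$ is still satisfiable by $1$-validity. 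Thus the explanations of a fixed instance form an upward closed family in the powerset of $H$.

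With this in hand, take a yes-instance $\langle V,H,M,\KB,s\rangle$ of $\abdle{\csl}{}$ and an explanation $E^{\ast}\subseteq H$ with $|E^{\ast}|\leq s$; we may assume $s\leq|H|$, which is the meaningful range. Since $|H\setminus E^{\ast}|=|H|-|E^{\ast}|\geq s-|E^{\ast}|\geq 0$, I would pick $D\subseteq H\setminus E^{\ast}$ with $|D|=s-|E^{\ast}|$ and set $E':=E^{\ast}\cup D$. Then $E^{\ast}\subseteq E'\subseteq H$ and $|E'|=s$, so by the monotonicity observation $E'$ is again an explanation, and hence the instance is a yes-instance of $\abdeq{\csl}{}$. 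Combining the two inclusions gives the claimed equality.

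I expect the argument to be short, and the only subtle points to be organisational rather than mathematical: isolating cleanly that over $\IV_2$ the satisfiability of $E\land\KB$ comes for free (this is the one place the co-clone restriction is genuinely used, via $1$-validity of every relation in $\IV_2$), checking that the padding step preserves both satisfiability and entailment simultaneously, and being mindful of the relationship between $s$ and $|H|$, since for $s>|H|$ the bare statement relies on the convention that the solution size does not exceed $|H|$.
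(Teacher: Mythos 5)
Your argument rests on the claim that every relation in $\IV_2$ is $1$-valid, and this claim is false. $\IV_2$ is the co-clone of dualHorn relations (model sets closed under coordinate-wise disjunction); it contains in particular the unit relation $\{0\}$, i.e.\ the constraint $\neg x$, which the all-$1$ assignment does not satisfy ($1$-validity characterises the co-clones contained in $\II_1$, and $\IV_2 \not\subseteq \II_1$). Hence a knowledge base over such an $\csl$ may contain, or even just entail, negative unit clauses on hypotheses: $\KB$ may contain $\neg h$ for some $h \in H$, or contain $\neg h \lor y$ together with $\neg y$. For such an $h$, the set $E \cup \{h\}$ is inconsistent with $\KB$, so the consistency requirement is not vacuous over $\IV_2$ and the family of explanations is not upward closed in the powerset of $H$. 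Your padding step, which enlarges $E^{*}$ by an \emph{arbitrary} set $D \subseteq H \setminus E^{*}$ of $s - |E^{*}|$ hypotheses, therefore fails exactly at the point where you claimed the co-clone restriction does all the work.

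What does survive in your writeup is the monotonicity of entailment: enlarging $E$ only shrinks the models of $E \land \KB$, so $E \land \KB \models M$ is preserved; but this holds for every constraint language and uses nothing about $\IV_2$. The paper's proof proceeds differently at the critical point: it extends a solution of size $\leq s$ one hypothesis at a time, adding an element $h \in H$ only after checking that $h$ is not excluded by the knowledge base (the check $\neg h \notin \KB$), i.e.\ consistency is verified for each added hypothesis rather than obtained for free. The missing idea in your attempt is precisely this consistency-preserving choice of the padding set: you must argue that the hypotheses used to reach size exactly $s$ can be added while keeping $E \land \KB$ satisfiable, and this is where the structure of dualHorn knowledge bases—not a fictitious $1$-validity—has to enter the argument.
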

\begin{proof}
``$\subseteq$'':
	Every positive instance $\abdformat{V,H,M,\KB,s}\in\abdle{\csl}{}$ has a solution $E$ of size exactly $s$.
	We show that a solution of size $< s$ can be always extended to size $s$.
	Given a solution of size $\leq s$ then a solution of size $=s$ can be constructed from it (in even polynomial time) w.r.t.\ $|H|$ by adding one element $h$ at a time from $H$ to $E$ and checking that $\neg h \not \in \KB$.

``$\supseteq$'':	
	Every solution of size exactly $s$ is a solution of size $\le s$.
\end{proof}
\subsubsection*{Intractable cases}
It turns out that for $0$-valid, $1$-valid and complementive languages, all three problems remain hard under any parametrisation except the case $|V|$.
\begin{lemma}\label{lem:ABD-H-hardness}
	The problems $\abd{\csl}{k}$, $\abdle{\csl}{k}$, $\abdeq{\csl}{k}$ are \begin{enumerate}
		\item $\para\co\NP$-hard if $\IN \subseteq \clos{\csl}\subseteq\II_1$ and any parameterisation $k\in \setdefinition{|H|,|E|, |M|}$,
		\item $\para\DP$-hard if $\cocl \subseteq \clos{\csl}\subseteq\BR$ and $\cocl\in\{\IN_2,\II_0\}$ and $k\in\{|H|,|E|\}$.
		\item $\para\SigmaP$-hard if $\cocl \subseteq \clos{\csl}\subseteq\BR$ and $k=|M|$ for $\cocl\in\{\IN_2,\II_0\}$.
	\end{enumerate} 
\end{lemma}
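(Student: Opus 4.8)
The plan is to reduce, for each of the three items, from a classical $\mathcal C$\nobreakdash-complete problem to \emph{one fixed slice} of the parameterised problem: since $P$ is $\para\mathcal C$\nobreakdash-hard as soon as some slice $P_c$ is $\mathcal C$\nobreakdash-hard under $\preduction$, it suffices to produce abduction instances in which the relevant parameter is bounded by an absolute constant. Concretely I would reduce (1) the $\co\NP$\nobreakdash-complete unsatisfiability problem to instances with $|H|,|E|,|M|$ all constant, (2) the $\DP$\nobreakdash-complete problem $\{(\varphi,\psi)\mid\varphi\in\SAT,\ \psi\notin\SAT\}$ to instances with $|H|,|E|$ constant, and (3) the $\SigmaP$\nobreakdash-complete problem ``$\exists\bar x\,\forall\bar y\,\varphi$'' ($\varphi$ in $3$-DNF) to instances with $|M|$ constant. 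By Lemma~\ref{lem:independence} it is enough to carry this out for a fixed finite $\csl$ with $\clos\csl$ contained in the \emph{smallest} co-clone of the respective item, i.e.\ $\IN$ for~(1) and $\IN_2$ resp.\ $\II_0$ for~(2)--(3): these co-clones contain relations that are neither positive nor negative (for instance $\{00,11\}\in\IN$, $\{01,10\}\in\IN_2$, and $x\to y$ lies in $\II_0$), hence every $\csl$ with $\clos\csl$ between $\IN$ and $\II_1$, resp.\ above $\IN_2$ or $\II_0$, is neither essentially positive nor essentially negative and Lemma~\ref{lem:independence} applies; by the Remark following it the values of $|H|,|E|,|M|$ are preserved along the reduction.

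\textbf{Item~(1).}
I would take $H=\{h\}$ and $M=\{m\}$. Given a $3$-CNF $\chi=\bigwedge_j C_j$ over $\bar x$, let $R_C$ be the relation defined by $\bigl(C(\bar x)\lor\neg h\lor m\bigr)\wedge\bigl(\overline{C}(\bar x)\lor h\lor\neg m\bigr)$, where $\overline C$ is $C$ with all literals flipped; one checks that $R_C$ is complementive and $0$-valid (hence also $1$-valid), so $R_C\in\IN$, and that substituting $h=1,m=0$ turns $R_C$ into $C$. Put $\KB:=\bigwedge_j R_{C_j}(\bar x,h,m)$. Since every conjunct is $1$-valid, $\{h\}\wedge\KB$ (and already $\KB$) is satisfiable; since every conjunct is also $0$-valid, $\KB$ has a model with $m=0$, so $E=\emptyset$ is never an explanation. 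Hence the instance is a \textsc{yes}-instance iff $\{h\}\wedge\KB\models m$ iff $\{h\}\wedge\KB\wedge\neg m\equiv\chi$ is unsatisfiable. This makes the slices $|H|=1$, $|E|=1$ and $|M|=1$ simultaneously $\co\NP$\nobreakdash-hard for $\abd{\csl}{}$; choosing the size bound $s=1$ settles $\abdle{\csl}{}$ and $\abdeq{\csl}{}$ as well, as then $E=\{h\}$ is the only candidate explanation.

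\textbf{Items~(2) and~(3).}
The same idea works, but arbitrary clauses are not available over $\IN_2$ or $\II_0$, so the knowledge base is assembled from \emph{guarded} constraints controlled by a fresh hypothesis $h\in H$: a clause $D$ is replaced either by the $\IN_2$\nobreakdash-relation $\bigl(h\to D\bigr)\wedge\bigl(\neg h\to\overline{D}\bigr)$ or by the $\II_0$\nobreakdash-clause $\neg h\lor D$, both of which become $D$ once $h=1$, while for $h=0$ they are trivially satisfiable together with $m=0$. Consequently, as in item~(1), over $\IN_2$ (complementive) and $\II_0$ ($0$-valid) the knowledge base always has a model assigning $0$ to $m$ whenever $h$ is not forced to $1$, so only explanations containing $h$ can entail $m$; this keeps the size accounting for $\abdeq{\csl}{}$ and $\abdle{\csl}{}$ trivial. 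For~(2) I would take $H=\{h\}$ and design $\KB$ so that $\{h\}\wedge\KB$ is satisfiable iff $\varphi\in\SAT$ while $\{h\}\wedge\KB\wedge\neg m$ is satisfiable iff $\varphi,\psi\in\SAT$; then the instance is positive iff $\varphi\in\SAT$ and $\psi\notin\SAT$, with $|H|=|E|=1$. For~(3) I would use the classical $\SigmaP$\nobreakdash-hardness reduction of Eiter and Gottlob~\cite{DBLP:journals/jacm/EiterG95} from ``$\exists\bar x\,\forall\bar y\,\varphi$'': hypotheses $x_i,x_i'$ to pick an assignment of $\bar x$, the constraints $\neg(x_i\wedge x_i')$, and the constraints $m\lor C$ for the clauses $C$ of $\neg\varphi$, all guarded by $h$ as above; this uses a single manifestation, so $|M|=1$, while $|H|$ grows with the input, which is harmless for this parameter.

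\textbf{Main obstacle.}
The delicate point is the correctness of the guards together with the validity bookkeeping: one must verify that $\bigl(h\to D\bigr)\wedge\bigl(\neg h\to\overline{D}\bigr)$ is genuinely complementive and that $\neg h\lor D$ is genuinely $0$-valid; that each guarded constraint collapses to $D$ exactly when $h=1$ and becomes vacuous under the ``wrong'' value of the control variable $m$, so that the consistency side of the reduction measures $\varphi$ and not $\varphi\wedge\psi$; and that no unintended explanation---empty, too small, or containing both $x_i$ and $x_i'$---can ever succeed. Once these gadgets are fixed, lifting the bounds from the chosen $\csl$ to all languages in the stated ranges is immediate from Lemma~\ref{lem:independence} and the parameter-preserving Remark after it.
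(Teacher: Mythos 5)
Your item~(1) is essentially sound and is a genuinely different, self-contained route: instead of the paper's reduction from $\IMP(\IN\cup\{T\})$ (which rests on Nordh--Zanuttini's $\co\NP$-hardness of implication), you build, per $3$-clause type $C$, the relation $(C(\bar x)\lor\neg h\lor m)\wedge(\overline C(\bar x)\lor h\lor\neg m)$, which is indeed complementive, $0$- and $1$-valid, collapses to $C$ under $h=1,m=0$, and makes the $|H|=|E|=|M|=1$ slices $\co\NP$-hard for all three problems; only your witness for base independence is off ($\{00,11\}$ is the equality relation, which lies in the essentially positive and essentially negative co-clones as well, so it shows nothing -- the correct argument is simply that $\IN\not\subseteq\IS{}{02}$ and $\IN\not\subseteq\IS{}{12}$ in the lattice).

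Items~(2) and~(3), however, contain a genuine gap: the knowledge bases are never constructed, and the announced guarding strategy cannot work as stated over $\IN_2$. A complementive relation whose restriction to a guard value $g=1$ is the full relation is also full at $g=0$ (complementation maps the $g=1$ slice bijectively onto the $g=0$ slice), so no $\IN_2$-constraint can be ``vacuous under the wrong value of the control variable''; in particular $(h\to D)\wedge(\neg h\to\overline D)$ becomes $\overline D$, not vacuous, at $h=0$, and if the clauses of $\psi$ are tied to $m$ in the same complementive fashion, then $\{h\}\wedge\KB$ is satisfiable only when $\psi$ is, so the consistency side of your $\SAT\times\mathrm{UNSAT}$ reduction tests $\varphi\wedge\psi$ instead of $\varphi$ and the DP-hardness argument collapses. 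Likewise the constraints $\neg x_i\lor\neg x_i'$ of the Eiter--Gottlob reduction are not complementive, so they are not available over $\IN_2$ even before guarding, and over $\II_0$ the clauses $m\lor C$ need not be $0$-valid; moreover, once several hypotheses are present, the claim that ``only explanations containing $h$ can entail $m$'' is not automatic and natural repairs admit unintended small explanations that break $\abd{\csl}{}$ and $\abdle{\csl}{}$. Everything you defer to the ``main obstacle'' is precisely the content of the proof, and it is what the paper supplies by other means: for~(2) it starts from the $\DP$-completeness of all three problems over $\BR$ at $|H|=0$, $|M|=1$ \cite{DBLP:conf/aaai/FellowsPRR12} and descends to $\IN_2$ and $\II_0$ via Creignou--Zanuttini's implementations of $\texttt{SymOR}_{2,1}$ and $\texttt{OR}_{2,1}$ \cite{DBLP:journals/siamcomp/CreignouZ06} (note that plain base independence cannot be used in this direction), and for~(3) it invokes Nordh--Zanuttini's $\SigmaP$-hardness for $\IN_2$ and $\II_0$ with positive literal manifestations \cite{DBLP:journals/ai/NordhZ08}, whose $1$-slice argument you would otherwise have to reprove from scratch.
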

\begin{proof}
	(1.) We prove the case for $\IN$ regarding all three parameters simultaneously.
	Notice that $\IMP(\II_1)$ is $\co\NP$-hard \cite[Thm.~34]{DBLP:journals/ai/NordhZ08} even if the right side contains only a single variable.
We describe in the following a modified proof from \cite[Prop.~48]{DBLP:journals/ai/NordhZ08}.	
Since $\clos{\IN \cup \setdefinition{T}} = \II_1$ (define $T(x) \equiv x$) we have that $\IMP(\IN \cup \setdefinition{T})$ is $\co\NP\hard$, even if the right side contains only a single variable.
	We reduce $\IMP(\IN \cup \setdefinition{T})$ to our abduction problems with $|H|=1$, $|M|=1$, and $|E|=1$.
	Let $(\KB_T, q)$  be an instance of $\IMP(\IN \cup \setdefinition{T})$,
	where $\KB_T = \KB \land \bigwedge_{x\in V_T} T(x)$ with $\KB$ being an $\IN$-formula.
	We map $(\KB_T, q)$ to $\pabdformat{V, \{h\}, \{q\}, \KB'}{}{}$, where $V = \var(\KB) \cup \{h\}$, $h$ is a fresh variable, and $\KB'$ is obtained from $\KB$ by replacing any variable from $V_T$ by $h$. 
	Note that $\KB_T \equiv \KB' \land h$.
	Since $\KB$ and $\KB'$ are 1-valid, clearly, $\KB' \land h$ is always satisfiable and there exists an explanation iff $\KB' \land h \models q$, iff $\KB_T \models q$.
	Furthermore, observe that $\KB_T\models q$ if and only if
	$\pabdformat{V, \{h\}, \{q\}, \KB'}{|H|}{} \in\abd{\IN}{|H|}$
	 if and only if
	$\pabdformat{V, \{h\}, \{q\}, \KB',1}{|H|} \in\abdle{\IN}{|H|}$
	 if and only if
	$\pabdformat{V, \{h\}, \{q\}, \KB',1}{|H|} \in\abdeq{\IN}{|H|}$. 
	The latter is true also when replacing $|H|$ by $|E|$ or $|M|$.
This proves the claimed $\para\co\NP$-hardnesses. 

(2.) From Fellows~et~al.~\cite[Prop.~4]{DBLP:conf/aaai/FellowsPRR12} we know that all three problems for $\BR$ are $\DP$-complete for $|H|=0$ even if $|M|=1$. 
We argue that the hardness can be extended to $\IN_2$.
Note that $\clos{\IN_2\cup\{F\}}=\BR$ where $F(x) \equiv \neg x$.
Creignou \& Zanuttini \cite{DBLP:journals/siamcomp/CreignouZ06} prove that $\abd{\csl \cup \setdefinition{F}}{}\preduction \abd{\csl \cup \setdefinition{\texttt{SymOR}_{2,1}}}{} $ where $\texttt{SymOR}_{2,1}(x,y,z)= ((x\rightarrow y) \land T(z)) \lor ((y\rightarrow x) \land F(z))$. 
Moreover, they also prove that $\texttt{SymOR}_{2,1} \in \clos{\csl}$ such that $\IN_2\subseteq \clos{\csl}$ \cite[Lem.~21,27]{DBLP:journals/siamcomp/CreignouZ06}.
Finally, having $|M|=1$ allows us to use their proof and, as a consequence, $ \abd{\BR}{}  \preduction \abd{\csl}{} $ such that $\IN_2 \subseteq \clos\csl$. 
This gives the desired lower bound for $\IN_2$.
Regarding $\II_0$, the proof follows by similar arguments using the observations that $\clos{\II_0 \cup \setdefinition{T}} = \BR$ and $\texttt{OR}_{2,1} \in \clos\csl $ such that $\II_0\subseteq \clos\csl$ where $\texttt{OR}_{2,1}(x,y) = x\rightarrow y$ \cite[Lem.~19 and 27]{DBLP:journals/siamcomp/CreignouZ06} .

(3.) Nordh \& Zanuttini \cite[Prop.~46/47]{DBLP:journals/ai/NordhZ08} prove $\SigmaP$-hardness for both $\IN_2$ as well as $\II_0$  with positive literal manifestations.
This implies that the $1$-slice of each of $\abd{\IN_2}{|M|}$ and $\abd{\II_0}{|M|}$ is $\SigmaP$-hard, which gives the desired result.
For $\abdle{S}{|M|}$ and $\abdeq{S}{|M|}$, the results follow from Lemma~\ref{lem:abd-logm-abdle}.
\end{proof}

\subsubsection*{Fixed-parameter tractable cases}
The following corollary is immediate because the classical questions corresponding to these cases are in $\P$ due to Nordh and Zanuttini~\cite{DBLP:journals/ai/NordhZ08}.
\begin{corollary}\label{lem:ABD-any-IS12}
	The problem $\abd{\csl}{k}$ is $\FPT$ for any parameterisation $k$ and $\clos\csl\subseteq\cocl$ with $\cocl \in \setdefinition{\IV_2, \ID_1, \IE_1, \IS{}{12}} $.
\end{corollary}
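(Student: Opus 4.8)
The plan is to derive fixed-parameter tractability directly from the classical tractability of the unparameterised problem, so essentially nothing new has to be proved. First I would invoke the polynomial-time side of the Nordh--Zanuttini classification of abduction~\cite{DBLP:journals/ai/NordhZ08}: for every constraint language $\csl$ with $\clos\csl\subseteq\cocl$ and $\cocl\in\setdefinition{\IV_2,\ID_1,\IE_1,\IS{}{12}}$, the decision problem $\abd{\csl}{}$ lies in $\P$. Since that classification is phrased co-clone-wise, it already applies to an arbitrary base $\csl$ of (a subset of) one of these co-clones, so no appeal to the base-independence results of Section~\ref{subsec:base-independence} is needed here.

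Second, I would use that membership in $\P$ entails $\FPT$ under any parameterisation: a polynomial-time algorithm deciding $\abd{\csl}{}$, run while simply ignoring the parameter value $k$, decides $\abd{\csl}{k}$ in time $f(k)\cdot|x|^{O(1)}$ with $f$ the constant function $1$, which matches the definition of $\FPT$ (equivalently, $\para\P=\FPT$, as observed in the preliminaries). This gives the corollary for all three parameters $|M|,|H|,|V|,|E|$ at once, and in fact for any parameterisation whatsoever.

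The only point that needs care -- and the one possible obstacle -- is to make sure the cited $\P$-membership really does cover every $\csl$ with $\clos\csl$ below the listed co-clones (not just their standard bases), and that it concerns the same formulation of abduction used here, namely manifestations given as sets of propositions; both hold in~\cite{DBLP:journals/ai/NordhZ08}. Given that, the statement is immediate, exactly as its phrasing ("immediate") announces.
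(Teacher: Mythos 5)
Your argument is exactly the paper's: the corollary is stated as immediate because the classical problem $\abd{\csl}{}$ is in $\P$ for these co-clones by Nordh and Zanuttini, and $\P$-membership yields $\FPT$ under any parameterisation (i.e.\ $\para\P=\FPT$). Your remark that base independence is unproblematic here is also consistent with the paper, since the standard algebraic reductions preserve existence of explanations, which is all that $\abd{\csl}{k}$ requires.
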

\noindent The next result is already due to Fellows~et~al.~\cite[Prop.~13]{DBLP:conf/aaai/FellowsPRR12}.
\begin{corollary}\label{cor:variables-all}
The problems $\abd{\csl}{|V|}$, $\abdle{\csl}{|V|}$, $\abdeq{\csl}{|V|}$ are all $\FPT$ for all Boolean constraint languages $\csl$.
\end{corollary}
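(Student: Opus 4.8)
The plan is to observe that parameterising by $|V|$ trivialises the whole search space, because every component of the instance is bounded in terms of $|V|$. Since $V=\var(H)\cup\var(M)\cup\var(\KB)$ and $H,M$ are sets of propositions, we have $H,M\subseteq V$, hence $|H|,|M|\le|V|$, and there are only $2^{|V|}$ assignments to $V$ in total. The only part of the input that is \emph{not} bounded by $|V|$ is the encoding of $\KB$, which contributes to the input length $n$; but evaluating all constraints of $\KB$ under one fixed assignment takes time polynomial in $n$, and that is all we shall need from $\csl$.

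First I would enumerate all candidate explanations $E\subseteq H$; there are at most $2^{|H|}\le 2^{|V|}$ of them, and for $\abdle{\csl}{|V|}$ (resp.\ $\abdeq{\csl}{|V|}$) I immediately discard those with $|E|>s$ (resp.\ $|E|\neq s$). For each surviving $E$ I check the two defining conditions by brute force over assignments. Satisfiability of $E\land\KB$ is decided by iterating over all $2^{|V|}$ assignments $\sigma$ to $V$ and testing whether some $\sigma$ satisfies $E$ together with every constraint of $\KB$; the test for a single $\sigma$ is polynomial in $n$. The entailment $E\land\KB\models M$ holds iff no assignment $\sigma$ satisfies $E\land\KB$ while setting some variable of $M$ to $0$, which is again settled by one pass over the $2^{|V|}$ assignments. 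I accept iff some $E$ passes both tests.

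Combining the bounds, the algorithm runs in time $2^{|V|}\cdot 2^{|V|}\cdot n^{O(1)} = 4^{|V|}\cdot n^{O(1)}$, which has the form $f(|V|)\cdot n^{O(1)}$ with $f(k)=4^k$ computable; hence all three problems lie in $\FPT$ under the parameterisation $|V|$. The argument is uniform in $\csl$ — it uses no structure of the constraint language beyond polynomial-time evaluability of a single constraint under a given assignment — so it covers every Boolean constraint language simultaneously, and it also reproves the known result of Fellows~et~al.~\cite[Prop.~13]{DBLP:conf/aaai/FellowsPRR12}. There is essentially no obstacle: the only point deserving (minimal) care is not to conflate the input length $n$, which may contain a large description of $\KB$, with the parameter $|V|$; the brute-force over assignments is exponential in $|V|$ but merely polynomial in $n$, which is exactly what fixed-parameter tractability allows.
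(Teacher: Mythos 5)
Your proof is correct. Note that the paper itself does not prove this corollary at all — it simply attributes the statement to Fellows et al.\ (Prop.~13) — whereas you give a self-contained argument: since $H,M\subseteq V$, there are at most $2^{|V|}$ candidates $E\subseteq H$ (filtered by $|E|\le s$ or $|E|=s$ as appropriate), and for each candidate both the satisfiability of $E\land\KB$ and the entailment $E\land\KB\models M$ are decided by one sweep over the $2^{|V|}$ assignments to $V$, each constraint being evaluable in time polynomial in the input length. This is exactly the kind of trivial brute force underlying the cited result, and your version has the small advantage of being uniform in the constraint language $\csl$ and in the three problem variants, so it covers the corollary as stated without appealing to the CNF-specific setting of Fellows et al.; your closing remark correctly isolates the only point needing care, namely that the running time is exponential only in $|V|$ and polynomial in the encoding length of $\KB$, which is what $\FPT$ requires.
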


Now, we prove $\P$-membership for some cases of the classical problems and start with the essentially positive cases.
The proof idea is to start with unit propagation. 
The positive clauses do not explain anything and one just only checks whether the elements of $M$ appear either in $\KB$ or $H$.
Then, we need to adjust the size accordingly.
\begin{lemma}[$\star$]\label{lem:ABD-IS02}
	The classical problems \abdeq{\csl}{} and \abdle{\csl}{} are in \P for $\clos\csl\subseteq\IS{}{02}$.
\end{lemma}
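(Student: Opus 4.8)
The plan is to normalise $\KB$ by base independence, give a closed-form description of the family of all explanations, and then settle the two size-bounded variants by a counting argument.

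\emph{Normalisation.} Since $\IS{}{02}$ is essentially positive, Lemma~\ref{lem:base-ind-ess-positive} together with the clause description of $\IS{}{02}$ (Table~\ref{tab:bases}) lets us reduce both problems, in polynomial time and without changing $H$, $M$, or $s$, to the case where $\KB$ is a conjunction of positive clauses and negative unit clauses. As $\clos{\csl}\subseteq\IS{}{02}\subseteq\IV_2$, Proposition~\ref{prop:SAT-IMP-P} gives polynomial-time satisfiability and entailment tests for such knowledge bases. If $\KB$ is unsatisfiable then $E\land\KB$ is unsatisfiable for every $E$, so no explanation exists and we reject. Otherwise we compute $T^{\ast}:=\{\,y\in V\mid\KB\models y\,\}$ and $N:=\{\,u\in V\mid\KB\models\neg u\,\}$; both are polynomial-time computable (for an essentially positive $\KB$, $N$ is just the set of variables occurring in a negative unit clause and $T^{\ast}$ is the unit-propagation closure, but in any case $|V|$ many calls to Proposition~\ref{prop:SAT-IMP-P} suffice).

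\emph{Characterising explanations.} The key claim is: if $M\cap N\neq\emptyset$ there is no explanation, and otherwise $E\subseteq H$ is an explanation if and only if $M\setminus T^{\ast}\subseteq E\subseteq H\setminus N$. For the first part, $m\in M\cap N$ gives $\KB\models\neg m$, so any consistent $E\land\KB$ fails to entail $m$. For the second, consistency of $E\land\KB$ is equivalent to $E\cap N=\emptyset$: necessity is immediate, and for sufficiency the assignment that sets exactly the variables of $V\setminus N$ to true is a model of $\KB$ (a positive clause could be falsified only if all of its variables lay in $N$, contradicting satisfiability of $\KB$) and it extends $E$. For entailment, a manifestation in $M\cap T^{\ast}$ is already entailed by $\KB$; and for $y\in M\setminus T^{\ast}$ (hence $y\notin N$) one checks that $E\land\KB\models y$ iff $y\in E$, the nontrivial direction being that, if $y\notin E$, the assignment setting exactly $V\setminus(N\cup\{y\})$ to true is a model of $E\land\KB$ falsifying $y$ --- here $y\notin T^{\ast}$ is exactly what guarantees no positive clause is violated, again via unit propagation. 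Consequently, when explanations exist, $E_{0}:=M\setminus T^{\ast}$ is the $\subseteq$-least explanation and $E_{1}:=H\setminus N$ the $\subseteq$-greatest.

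\emph{Deciding the problems.} After the polynomial-time reductions and computations above, reject if $\KB$ is unsatisfiable, or $M\cap N\neq\emptyset$, or $M\setminus T^{\ast}\not\subseteq H$ (equivalently $E_{0}\not\subseteq E_{1}$); otherwise the explanations are exactly the sets $E$ with $E_{0}\subseteq E\subseteq E_{1}$, so the attainable explanation sizes are precisely the integers of $[\,|E_{0}|,\,|E_{1}|\,]$. Hence $\abdle{\csl}{}$ accepts iff $|E_{0}|\leq s$, and $\abdeq{\csl}{}$ accepts iff $|E_{0}|\leq s\leq|E_{1}|$, both computable in polynomial time. I expect the single delicate point to be the structural claim: one must turn ``positive clauses explain nothing'' into the precise statement that, once the already-entailed manifestations are removed, a remaining manifestation is entailed only by inserting it verbatim into $E$, while consistency is controlled solely by avoiding the forced-false variables; the size bookkeeping for the $\leq$ and $=$ variants is then immediate.
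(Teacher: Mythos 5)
Your proof is correct and follows essentially the same route as the paper's: after normalising (via base independence) to a knowledge base of positive clauses and negative unit clauses, both arguments rest on the observation that negative units only rule hypotheses out while non-unit positive clauses explain nothing, yielding that the explanations are exactly the sets between $M\setminus T^{\ast}$ (the manifestations not already entailed) and $H\setminus N$ (the hypotheses not forced false), so that both size-bounded variants reduce to comparing $s$ with these two cardinalities. The only cosmetic differences are that the paper performs this forced-literal analysis by unit propagation (excluding the literals $H^{+}\cup M^{-}$) instead of your entailment sets $T^{\ast}$ and $N$, and that your parenthetical claim that the normalisation leaves $H$, $M$, $s$ unchanged is slightly stronger than what Lemma~\ref{lem:base-ind-ess-positive} literally provides --- which is immaterial here, since plain polynomial-time reducibility suffices for \P-membership.
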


The following lemma proves that essentially negative languages for $\abdle{}{}$ also remain tractable.

\begin{lemma}\label{lem:ABD-leq-E-IS12}
	The classical problem $\abdle{\csl}{}$ is in \P if $\clos\csl\subseteq\IS{}{12}$.
\end{lemma}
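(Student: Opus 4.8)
The plan is to exploit the fact that essentially negative knowledge bases admit an essentially trivial minimal model, which forces the unique minimal explanation and reduces the size question to a cardinality comparison. Since $\IS{}{12}$ is essentially negative, Lemma~\ref{lem:independence} is \emph{not} available, so I argue directly for an arbitrary base $\csl$ with $\clos\csl\subseteq\IS{}{12}$. By the clause description of Table~\ref{tab:bases}, every relation in $\IS{}{12}$ is equivalent to a conjunction of negative clauses, positive unit clauses, and equality constraints over possibly some extra existentially quantified variables; as $\csl$ is finite and fixed, I rewrite $\KB$ in polynomial time into this form, turning the existential positions into fresh variables that belong to neither $H$ nor $M$. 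I then eliminate the equality constraints by contracting each equivalence class of variables (union--find) to a representative and replacing $H$ and $M$ by the sets of representatives of their elements. One checks that the contracted instance has an explanation of size at most $s$ iff the original one does: the map sending $E\subseteq H$ to its set of representatives is validity-preserving and size-non\-increasing, and any valid explanation of the contracted instance is the image of one of the same size obtained by choosing a preimage for each representative. Hence I may assume $\KB$ is a conjunction of negative clauses and positive unit clauses over $V$, with $H,M\subseteq V$ and a size bound $s$.

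The core step is a minimal-model characterisation. Let $P:=\{x\mid (x)\in\KB\}$ be the set of variables forced true by the unit clauses of $\KB$. For $E\subseteq H$ the formula $E\land\KB$ is again a conjunction of negative clauses and positive unit clauses, with unit set $P\cup E$. I would prove: (i) $E\land\KB$ is satisfiable iff no negative clause of $\KB$ has all of its variables inside $P\cup E$ --- the ``if'' direction witnessed by the assignment mapping $P\cup E$ to $1$ and every other variable to $0$, the ``only if'' direction immediate since those variables would be forced true; (ii) if $E\land\KB$ is satisfiable, this assignment is its minimal model, so for a proposition $m$ we have $E\land\KB\models m$ iff $m\in P\cup E$: if $m\in P\cup E$ then $(m)$ is a conjunct of $E\land\KB$, and if $m\notin P\cup E$ then $m$ is false in the model just described (there is no implication-style clause to propagate truth). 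Combining (i) and (ii): $E$ is a valid explanation iff $M\setminus P\subseteq E$ and no negative clause of $\KB$ is contained in $P\cup E$.

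Write $M_0:=M\setminus P$. From the entailment condition every valid explanation contains $M_0$ and is contained in $H$, so $M_0\subseteq H$ is necessary and every valid explanation has size at least $|M_0|$. Conversely, the satisfiability condition ``no negative clause of $\KB$ inside $P\cup E$'' only becomes harder as $E$ grows, so whenever any valid explanation exists, $E=M_0$ is already one (this also subsumes satisfiability of $\KB$ alone, since a negative clause inside $P$ lies inside $P\cup M_0$). Therefore $\abdle{\csl}{}$ accepts $\langle V,H,M,\KB,s\rangle$ iff $M_0\subseteq H$, no negative clause of $\KB$ is contained in $P\cup M_0$, and $|M_0|\le s$; each of these tests, together with all the preprocessing above, runs in polynomial time. (For $s\ge |H|$ this recovers the tractability of $\abd{\csl}{}$ already recorded in Corollary~\ref{lem:ABD-any-IS12}.)

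The conceptual heart is the minimal-model description of entailment for essentially negative theories: it trivialises the search over subsets of $H$, the unique minimal explanation being $M\setminus P$, so the size restriction is just a cardinality check. The main technical nuisance --- and the step needing care --- is that base independence is unavailable in the essentially negative case, so the passage to the normalised ``negative clauses plus positive units'' instance (rewriting the $\csl$-constraints through Table~\ref{tab:bases}, adding auxiliary variables, and contracting equalities while verifying that minimum explanation sizes are preserved) has to be carried out explicitly rather than by appeal to Lemma~\ref{lem:independence}.
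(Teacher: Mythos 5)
Your proof is correct and follows essentially the same route as the paper: the unique minimal explanation is $M\setminus P$ where $P$ is the set of positive unit clauses, consistency amounts to no negative clause of $\KB$ lying entirely inside $P\cup M$, and equality clauses are eliminated by identifying variables, which only decreases $|H|$ and explanation sizes and is therefore harmless for the `$\leq$' variant. Your explicit normalisation of an arbitrary base via the clause description with fresh (formerly existential) variables merely spells out what the paper leaves implicit in first treating $\closneq{\csl}\subseteq\IS{}{12}$ and then invoking its equality-elimination claim.
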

\begin{proof}
First, we prove the result with respect to $\closneq{S}\subseteq\IS{}{12}$.
Let $P$ denote the set of positive unit clauses from $\KB$ and denote $E_{MP} = M \setminus P$. 
Now, we have the following two observations.
\begin{description}
	\item[Observation 1] There exists an explanation iff $E_{MP} \subseteq H$ and $M$ is consistent with $\KB$.
That is, what is not yet explained by $P$ must be explainable directly by $H$ because  negative clauses can not contribute to explaining anything, they can only contribute to `rule out' certain subsets of $H$ as possible explanations.

	\item[Observation 2] If there exists an explanation, then any explanation contains $E_{MP}$. 

\end{description}
As a result, $E_{MP}$ represents a cardinality-minimal and a subset-minimal explanation.
We conclude that there exists an explanation $E$ with $|E| \leq s$ iff $E_{MP}$ constitutes an explanation and $|E_{MP}| \leq s$.
Now, we proceed with base independence for this case.
\begin{claim}
$\abdle{\csl \cup \{=\}}{}\preduction\abdle{\csl}{}$ for $\clos\csl\subseteq\IS{}{12}$.
\end{claim}
\begin{claimproof}
The reduction gets rid of the equality clauses by removing them and deleting the duplicating occurrences of variables.
This decreases only the size of $H$ and might also the size of an explanation $E$.
Notice that $x=y\in\KB$ does not enforce both $x$ and $y$ into $E$.
\end{claimproof}
This completes the proof to lemma.
\end{proof}
Finally, the $2$-affine cases are also tractable as we prove in the following lemma.
The idea is, similarly to Creignou et~al.~\cite[Prop.~1]{DBLP:conf/sat/CreignouOS11}, to change the representation of the knowledge base. 
\begin{lemma}[$\star$]\label{lem:ABD-E-D1}
	The classical problems $\abdeq{\csl}{}$ and $\abdle{\csl}{}$ are in \P if $\clos\csl\subseteq\ID_1$. 
\end{lemma}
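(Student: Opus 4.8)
The plan is to follow the strategy of Creignou et~al.~\cite{DBLP:conf/sat/CreignouOS11} and work with a linear-algebraic representation of the knowledge base. Since $\clos\csl \subseteq \ID_1$, every $\csl$-formula is logically equivalent to a conjunction of affine equations of width at most two, i.e.\ a system over $\mathbb{F}_2$ whose equations have the form $x=0$, $x=1$, $x=y$, or $x\oplus y=1$. First I would, in polynomial time, compute such a system $A\mathbf{x}=\mathbf{b}$ equivalent to $\KB$ by Gaussian elimination; if the system is inconsistent there is no explanation and we reject. Otherwise the solution space of $\KB$ is an affine subspace, and the key point is that adding a set $E\subseteq H$ of positive unit clauses (each $h\in E$ forces $h=1$) keeps the system affine. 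So for any candidate $E$, consistency of $E\land\KB$ and the entailment $E\land\KB\models M$ (where $M$ is a set of literals/propositions, treated as further equations) are both decidable in polynomial time by Proposition~\ref{prop:SAT-IMP-P}, since $\clos\csl\subseteq\ID_1\subseteq\IL_2$.

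The heart of the argument is to show that we need not try all exponentially many $E$, because the affine structure lets us identify a canonical minimal explanation and control its size. After fixing the consistent system for $\KB$, partition the variables into their connected components under the equations $x=y$ and $x\oplus y=1$: within each component all variables are ``linked'', so assigning a value to one determines the others, and some components may already be forced to a constant by unit equations $x=0$ or $x=1$. Entailment $E\land\KB\models m$ for a manifestation $m$ holds iff $m$ is forced to the required value in every model extending $E$; this happens iff $m$ lies in a component that is already constant, or $m$'s component has been ``pinned'' by some $h\in E$ lying in the same component with a compatible sign. This gives, exactly as in the essentially negative case (Lemma~\ref{lem:ABD-leq-E-IS12}), a description of the set $E_{\min}$ of hypotheses that \emph{must} be selected to pin the still-unforced manifestations — one hypothesis per component that needs pinning — together with a check that $M$ is globally consistent with $\KB$. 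Then, just as in Lemma~\ref{lem:ABD-leq-E-IS12}, for $\abdle{\csl}{}$ there is an explanation of size $\le s$ iff $E_{\min}$ is itself an explanation and $|E_{\min}|\le s$; and for $\abdeq{\csl}{}$ we additionally need to be able to pad $E_{\min}$ up to size exactly $s$ by adding further $h\in H$ that do not destroy consistency — equivalently, $h$ whose component can still be set to $1$ consistently with the current system — which is again a polynomial-time check performed greedily, one hypothesis at a time, in the spirit of Lemma~\ref{monotone}.

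The main obstacle I anticipate is the subtlety in the padding step for $\abdeq{\csl}{}$: adding a hypothesis $h$ may force other linked variables and thereby either destroy consistency with $\KB\land M$ or, conversely, force \emph{another} manifestation for free, which can shrink the ``mandatory'' part and interact with the size bookkeeping. One must argue that greedily testing each unused $h\in H$ for whether it can be added while keeping $E\land\KB$ satisfiable and $E\land\KB\models M$ is both sound and complete — i.e.\ that if some explanation of size exactly $s$ exists then such a greedy extension of $E_{\min}$ reaches size $s$ — which follows because over $\ID_1$ the relevant monotonicity holds: once the mandatory $E_{\min}$ is selected, the remaining freedom is exactly a set of independent binary choices (one per still-free component), and any superset of $E_{\min}$ that stays consistent remains an explanation. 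A second, more routine, obstacle is the base-independence bookkeeping: exactly as in Lemma~\ref{lem:ABD-leq-E-IS12}, one has to note that equality clauses $x=y$ in $\KB$ merely merge variables and hence only shrink $H$ and the size of a minimal explanation, so $\abdle{\csl\cup\{=\}}{}\preduction\abdle{\csl}{}$ and likewise for $\abdeq{}{}$, which lets us assume $\closneq{\csl}\subseteq\ID_1$ from the start.
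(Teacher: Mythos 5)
Your overall strategy---exploit the $2$-affine structure of an $\ID_1$ knowledge base, split the variables into clusters of classes that are forced equal or forced complementary, observe that an explanation must pin each not-yet-forced manifestation class by a hypothesis from that very class, and read off a canonical minimum explanation $E_{\min}$ with one hypothesis per such class---is essentially the paper's, and your treatment of $\abdle{\csl}{}$ is fine. The gap is in the padding step for $\abdeq{\csl}{}$. The attainable sizes above $|E_{\min}|$ are \emph{not} governed by ``independent binary choices, one per still-free component'': inside a free cluster $(X,Y)$ the two sides exclude each other and in general contain different numbers of hypotheses, so how many hypotheses you can still add from that cluster depends on which side you commit to. Your greedy, which scans the unused $h\in H$ in some order and adds $h$ whenever consistency and entailment are preserved, can therefore lock itself into the smaller side: if, say, $X\cap H=\{a\}$ and $Y\cap H=\{b_1,b_2,b_3\}$ and $a$ is examined first, the greedy adds $a$, all $b_i$ become inadmissible, and the run tops out three hypotheses short of what is achievable---so it rejects values of $s$ for which an explanation of size exactly $s$ exists. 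The fact that any consistent superset of $E_{\min}$ remains an explanation gives soundness of the greedy, not the completeness you claim.

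The repair is what the paper does: besides $E_{\min}$, compute a maximum explanation $E_{\max}$ by taking, for each cluster containing a manifestation class $X_i$, all of $X_i\cap H$, and for every other cluster the larger of $X\cap H$ and $Y\cap H$; since hypotheses can be removed from $E_{\max}$ one at a time while keeping a representative in each manifestation class, every cardinality in the interval from $|E_{\min}|$ to $|E_{\max}|$ is attained, and the answer for exact size $s$ is simply $|E_{\min}|\le s\le|E_{\max}|$ (equivalently, fix the side of each free cluster by a per-cluster maximum choice rather than by hypothesis-by-hypothesis scanning). A minor further remark: your base-independence aside is not needed, since equality clauses are part of the $2$-affine structure and are handled natively by the cluster decomposition; note also that ``merging'' two hypotheses linked by an equality is not size-neutral for the exact-size variant, because $x=y\in\KB$ does not force both $x$ and $y$ into $E$, so such a preprocessing step would have to be justified with the same care as in Lemma~\ref{lem:base-ind-ess-positive}.
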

\subsection{Parameter `number of hypotheses' |H|}
For this parameter, it turns out that the only intractable cases are those pointed out in Lemma~\ref{lem:ABD-H-hardness}.  
\begin{theorem}\label{theorem:H}
	$\abd{\csl}{|H|}$, $\abdle{\csl}{|H|}$ and $\abdeq{\csl}{|H|}$ are
\begin{multicols}{2}
	\begin{enumerate}\nolinenumbers
		\item $\para\DP$-hard if $\cocl \subseteq \clos{\csl}\subseteq\BR$\\ and $\cocl\in\{\IN_2,\II_0\}$,
		\item $\para\co\NP$-hard if $\IN \subseteq \clos{\csl}\subseteq\BR$,\label{thm:H}
		\item $\FPT$ if $\clos\csl\subseteq\cocl \in \setdefinition{\IE_2, \IV_2, \ID_2,\IL_2}$.
	\end{enumerate}\linenumbers
\end{multicols}
\end{theorem}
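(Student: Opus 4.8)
The statement bundles two hardness claims (items~1 and~2) with one tractability claim (item~3). The plan is to dispatch the hardness half by bookkeeping on top of Lemma~\ref{lem:ABD-H-hardness} and to prove item~3 through a single brute-force $\FPT$ algorithm that handles all three problems uniformly.

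For the hardness half: item~1 is exactly Lemma~\ref{lem:ABD-H-hardness}(2) specialised to the parameterisation $|H|$, which is among the parameters covered there, so nothing remains to be done. For item~2, Lemma~\ref{lem:ABD-H-hardness}(1) already yields $\para\co\NP$-hardness of all three problems for $\IN \subseteq \clos\csl \subseteq \II_1$. To reach the full interval $\IN \subseteq \clos\csl \subseteq \BR$, I would split cases along Post's co-clone lattice: if $\clos\csl \subseteq \II_1$ we are done by Lemma~\ref{lem:ABD-H-hardness}(1); otherwise, an inspection of the interval $[\IN,\BR]$ shows $\IN_2 \subseteq \clos\csl$ or $\II_0 \subseteq \clos\csl$, and then Lemma~\ref{lem:ABD-H-hardness}(2) gives even $\para\DP$-hardness, which subsumes $\para\co\NP$-hardness. (Equivalently, since $\IN$ -- and hence every language above it -- is neither essentially negative nor essentially positive, one may propagate the lower bound from $\IN$ upward by base independence, Lemma~\ref{lem:independence}, noting that those reductions keep $|H|$ unchanged.)

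For item~3 the plan is the following algorithm. Given $\langle V,H,M,\KB,s,k\rangle$ with $\clos\csl \subseteq \cocl$ for some $\cocl \in \{\IE_2,\IV_2,\ID_2,\IL_2\}$ (that is, $\KB$ is essentially Horn, dual-Horn, bijunctive, or affine), enumerate all $2^{|H|}$ subsets $E \subseteq H$, discarding those of size $\neq s$ in the $\abdeq{}{}$-variant and of size $> s$ in the $\abdle{}{}$-variant. For every surviving $E$ run two polynomial-time tests: (i) is $E \land \KB$ satisfiable, and (ii) for each $m \in M$, is $E \land \KB \land \neg m$ satisfiable (so that $E \land \KB \models M$ holds iff test~(i) succeeds and test~(ii) fails for every $m$); accept iff some $E$ witnesses this. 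Both tests run in polynomial time: fixing the variables of $E$ to $1$ -- and, in test~(ii), the variable $m$ to $0$ -- turns $\KB$ into a formula over Horn, dual-Horn, bijunctive, or affine relations (these classes being closed under fixing coordinates to constants), and satisfiability for such formulas is in $\P$ by Proposition~\ref{prop:SAT-IMP-P} together with the standard fact that the four Schaefer classes remain $\SAT$- and $\IMP$-tractable under the addition of constants. The overall running time is $2^{|H|}\cdot|x|^{O(1)}$, hence $\FPT$. (For $\abd{\csl}{|H|}$ this could alternatively be read off Corollary~\ref{lem:ABD-any-IS12}; the brute force has the merit of covering the size-restricted variants uniformly, in particular the parts of $\ID_2$ and $\IL_2$ not reached by the earlier $\P$-membership lemmas.)

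The main obstacle I anticipate is a minor one in item~3: $E \land \KB$ is not literally an $\csl$-formula, and naively encoding the positive unit clauses of $E$ (and the manifestations) by a constant relation can push the associated co-clone outside $\{\IE_2,\IV_2,\ID_2,\IL_2\}$, so Proposition~\ref{prop:SAT-IMP-P} cannot be quoted verbatim. The remedy is to substitute the constants directly into $\KB$ and invoke closure of the four Schaefer classes under sections, reducing tests~(i) and~(ii) to plain satisfiability within the same class. On the hardness side the only delicate point is checking that the co-clone case split (or, equivalently, the base-independence argument) genuinely reaches $\BR$ rather than only $\II_1$; everything else is counting over the $2^{|H|}$ candidate explanations.
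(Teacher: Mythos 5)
Your proposal is correct and matches the paper's proof in essence: items 1 and 2 are discharged via Lemma~\ref{lem:ABD-H-hardness} (with the range up to $\BR$ obtained by base independence, which the paper uses implicitly, or equivalently by your lattice case split), and item 3 is the same brute force over the at most $2^{|H|}$ candidate explanations, each verified in polynomial time using tractability of satisfiability and implication for the four Schaefer classes (Prop.~\ref{prop:SAT-IMP-P}). Your explicit treatment of the constants introduced by $E$ and $\neg m$ (closure under fixing coordinates) is a minor refinement of a step the paper leaves implicit.
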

\begin{proof}
	\begin{description}
		\item[1.+2.] We proved these cases in Lemma~\ref{lem:ABD-H-hardness}.
		\item[3.] Recall that $\SAT(\csl) \text{ and } \IMP(\csl)$ are both in $\P$ for every $\csl$ in the question (Prop.~\ref{prop:SAT-IMP-P}).
	By $|H|\ge|E|$, we have that $\binom{|H|}{|E|}=|H|^{|E|}\in O(k^k)$, where $k=|H|$.
	Consequently, we brute-force the candidates for $E$ and verify them in polynomial time.
	This yields $\FPT$ membership.\qedhere
\end{description}	
\end{proof}



\subsection{Parameter `number of explanations' |E|}
In this subsection, we consider the solution size as a parameter.
Notice that, because of the parameter $|E|$, the problem $\problemFont{ABD}$ is not meaningful anymore.
As a result, we only consider the size limited variants $\problemFont{ABD}_=$ and $\problemFont{ABD}_\leq$.
The following theorem provides a classification into six different complexity degrees.

\begin{theorem}\label{theorem-E}
	The problems $\abdle{\csl}{|E|}$ and $\abdeq{\csl}{|E|}$ are
\begin{multicols}{2}\nolinenumbers
	\begin{enumerate}
		\item $\para\DP$-hard if $\cocl \subseteq \clos{\csl}\subseteq\BR$\\ and $\cocl\in\{\IN_2,\II_0\}$
		\item $\para\co\NP$-hard if $\IN \subseteq \clos{\csl}\subseteq\II_1$,
		\item $\W{\P}$-complete if $\IE \subseteq \clos{\csl}\subseteq\IE_2$, 
		\item $\W{2}$-complete if $\IM \subseteq \clos{\csl}\subseteq \cocl$\\ for 
			$\cocl \in \{\ID_2, \IV_2\}$ and $\W{2}$-hard\\ if $\IM \subseteq \clos\csl \subseteq \IS{}{10}$,
		\item $\FPT$ if $\clos\csl \subseteq \ID_1$ or $\clos\csl \subseteq \IS{}{02}$,
	\end{enumerate}
\linenumbers
\end{multicols}
\nolinenumbers \noindent  Moreover, if $\IS{2}{1} \subseteq \clos{\csl}\subseteq\IS{}{12} $, then $\abdle{\csl}{|E|}\in\FPT$ and $\abdeq{\csl}{|E|}$ is $\W1$-complete.
\end{theorem}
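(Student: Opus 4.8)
The statement has three components: $\abdle{\csl}{|E|}\in\FPT$, membership $\abdeq{\csl}{|E|}\in\W1$, and $\W1$-hardness of $\abdeq{\csl}{|E|}$. The first is immediate: the hypothesis gives $\clos{\csl}\subseteq\IS{}{12}$, so by Lemma~\ref{lem:ABD-leq-E-IS12} the classical problem $\abdle{\csl}{}$ is in $\P$, and a polynomial-time algorithm is in particular an $\FPT$ algorithm under any parameterisation, so $\abdle{\csl}{|E|}\in\FPT$.

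For membership $\abdeq{\csl}{|E|}\in\W1$ the plan is to reuse the structure extracted in the proof of Lemma~\ref{lem:ABD-leq-E-IS12} and turn it into an $\fptreduction$-reduction to weighted satisfiability of an antimonotone $\mathrm{CNF}$. As there, the essentially negative base-independence step is done by hand (deleting equality clauses and collapsing duplicated variables, which only shrinks $H$ and the candidate $E$), so we may assume $\KB$ is a conjunction of negative clauses $C_1,\dots,C_r$ (each identified with its variable set) together with a set $P$ of positive unit clauses. Put $E_{MP}:=M\setminus P$. If $E_{MP}\not\subseteq H$, or $E_{MP}\cup P$ already contains some $C_i$, or $k<|E_{MP}|$, or $k>|H|$, reject (no explanation of size $k$ exists, using the two observations from the proof of Lemma~\ref{lem:ABD-leq-E-IS12}). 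Otherwise, for every $E$ with $E_{MP}\subseteq E\subseteq H$ one has $E\land\KB\models M$ automatically, and $E\land\KB$ is consistent iff $E\cup P$ contains no $C_i$ — a condition that depends only on $E\setminus(E_{MP}\cup P)$. Hence hypotheses in $H\cap P$ can be added "for free" to inflate $|E|$ without affecting consistency, and a size-$k$ explanation exists iff, setting $j:=\max\{\,0,\ k-|E_{MP}|-|H\cap P|\,\}$, there is a set $E_2\subseteq H\setminus(E_{MP}\cup P)$ with $|E_2|=j$ that contains none of the reduced sets $C_i\setminus(E_{MP}\cup P)$; after the preprocessing above these are nonempty subsets of $H\setminus(E_{MP}\cup P)$, and downward closure of the constraint lets us fix $|E_2|=j$ instead of a range of admissible sizes. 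Since $j\le k$, this is precisely the question whether the antimonotone $\mathrm{CNF}$ formula $\bigwedge_i\bigvee_{x\in C_i\setminus(E_{MP}\cup P)}\lnot x$ over the variable set $H\setminus(E_{MP}\cup P)$ has a satisfying assignment of weight $j$. Weighted satisfiability of antimonotone $\mathrm{CNF}$ (with unbounded clause width) is $\W1$-complete (cf.\ Proposition~\ref{theorem-wsat}), crucially so in the \emph{antimonotone} case, which, unlike the monotone one, does not climb to $\W2$. As $j\le k$, the described map is an $\fptreduction$-reduction, so $\abdeq{\csl}{|E|}\in\W1$.

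For $\W1$-hardness of $\abdeq{\csl}{|E|}$ I would reduce from the $\W1$-complete problem of deciding, given a graph $G=(V_G,E_G)$ and $\ell\in\mathbb N$, whether $G$ has an independent set of size $\ell$ (equivalently, weighted satisfiability of an antimonotone $2$-$\mathrm{CNF}$; cf.\ Proposition~\ref{theorem-wsat}). From $\IS{2}{1}\subseteq\clos{\csl}$ it follows that the binary $\mathrm{NAND}$ relation $\{(0,0),(0,1),(1,0)\}$ lies in $\clos{\csl}$, hence is expressible by an $\csl$-formula using only uncounted auxiliary existential variables; the essentially negative base-independence for this single binary relation is a direct construction, as announced in the remarks of Section~\ref{subsec:base-independence}. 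Map $(G,\ell)$ to $\langle V_G\cup\{t\},\ V_G\cup\{t\},\ \{t\},\ \KB,\ \ell+1,\ \ell+1\rangle$, where $t$ is a fresh variable not occurring in $\KB$ and $\KB$ is the conjunction, over all $\{u,v\}\in E_G$, of the $\csl$-gadget expressing $\lnot u\lor\lnot v$ on fresh auxiliaries. Since $t$ occurs nowhere in $\KB$, any explanation of $\{t\}$ must contain $t$, and a size-$(\ell+1)$ set $E\subseteq V_G\cup\{t\}$ with $t\in E$ is an explanation iff $E\land\KB$ is satisfiable iff $E\cap V_G$ is an independent set of $G$. Thus the produced instance is positive iff $G$ has an independent set of size $\ell$, and the parameter changes only by $+1$; this is a valid $\fptreduction$-reduction, so $\abdeq{\csl}{|E|}$ is $\W1$-hard, and together with the previous paragraph $\W1$-complete.

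The step I expect to be delicate is the membership argument: one must isolate correctly the size-padding freedom granted by hypotheses that are simultaneously positive unit clauses of $\KB$, reduce everything to a pure choice problem on the remaining hypotheses, and then recognise that problem as the \emph{antimonotone} weighted $\mathrm{CNF}$-satisfiability problem — it is exactly the fact that this problem lies in $\W1$ (and is not $\W2$-hard, unlike its monotone sibling and the $\IS{}{10}$ case treated earlier) that makes $\abdeq{}{}$ drop to $\W1$ precisely on the essentially negative co-clones that also lie above $\IS{2}{1}$. The attendant base-independence bookkeeping for essentially negative languages must, as usual, be supplied by explicit constructions in both directions, since the general independence lemmas of Section~\ref{subsec:base-independence} do not apply there.
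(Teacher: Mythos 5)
Your proposal only establishes the final sentence of Theorem~\ref{theorem-E}. Items 1--5 — the $\para\DP$- and $\para\co\NP$-hardness cases, $\W{\P}$-completeness for $\IE \subseteq \clos{\csl}\subseteq\IE_2$, the $\W{2}$-completeness/hardness results around $\IM$, $\ID_2$, $\IV_2$, $\IS{}{10}$, and the $\FPT$ cases $\ID_1$ and $\IS{}{02}$ — are not addressed at all; in the paper these rest on Theorem~\ref{theorem:H} together with Lemma~\ref{lem:abd-logm-abdle}, Lemma~\ref{lem:ABD-E-IE}, Lemmas~\ref{lem:ABD-E-IM-W2}, \ref{lem:ABD-E-IS10-inW2}, \ref{lem:ABD-E-IV2-inW2} and Lemma~\ref{monotone}, and Lemmas~\ref{lem:ABD-IS02} and \ref{lem:ABD-E-D1}. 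So, as a proof of the stated theorem, the attempt is incomplete irrespective of the quality of the part you do treat.

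Within the essentially negative part, the $\FPT$ claim and the $\W{1}$-hardness construction coincide with the paper's (Lemma~\ref{lem:ABD-leq-E-IS12}, Lemma~\ref{lem:ABD-eq-E-IS1h2-hardness}), but the $\W{1}$-membership argument has a genuine flaw in its handling of equality. Relations in $\IS{}{12}$ may require equality clauses in their clausal description (e.g., $R(x,y,z)\equiv(x=y)\land\neg z$), so they cannot be assumed away; and your disposal of them — ``deleting equality clauses and collapsing duplicated variables, which only shrinks $H$ and the candidate $E$'' — is sound for $\abdle{}{}$ but breaks the exact-size variant. For instance, $\KB=\{x=y,\ m\}$, $H=\{x,y\}$, $M=\{m\}$, $s=2$ is a yes-instance of $\abdeq{}{}$ (take $E=\{x,y\}$), yet after collapsing $y$ into $x$ no explanation of size $2$ exists. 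This is precisely why Lemma~\ref{lem:ABD-eq-E-IS12hk} keeps both variables and instead adds, for each $x_i=x_j$ with $x_i,x_j\in H$, the clauses $(\neg x_i\lor x_j)$ and $(x_i\lor\neg x_j)$ to the target weighted-satisfiability instance. Relatedly, your membership target — weighted satisfiability of antimonotone CNF of \emph{unbounded} clause width, claimed $\W{1}$-complete ``cf.\ Proposition~\ref{theorem-wsat}'' — is not what that proposition provides (it covers $\wsatneg{t}$ for odd $t$ and $\wsat{t}{d}$ for constant $d$ only). This is repairable: since $\csl$ is finite, all clauses have width at most the maximal arity $\ell$, so the right target is $\wsat{1}{\ell}$, as in the paper — and note that once equalities are encoded as above the formula is no longer antimonotone anyway, so the bounded-width general problem is the one you need. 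A further small omission is that clauses containing variables outside $H\cup P$ must be discarded explicitly before one may claim the reduced clauses are built over $H\setminus(E_{MP}\cup P)$; otherwise the weight of a satisfying assignment need not count hypotheses only.
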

\begin{proof}[Proof Ideas.]
	\begin{description}
		\item[1.+2.] This is a corollary to Theorem~\ref{theorem:H}.
		\item[3.] Upper bound for $\IE_2$ follows from the fact that $\SAT(\IE_2)$ and $\IMP(\IE_2)$ are in $\P$ (cf.\ Prop.~\ref{prop:SAT-IMP-P}).
		Guessing $E$ takes $k \cdot \log n$ non-deterministic steps and verification can be done in polynomial time.
		For the lower bound, we argue that the proof in \cite[Thm.~8]{DBLP:conf/aaai/FellowsPRR12} can be extended. 
		Details are presented in Lemma~\ref{lem:ABD-E-IE}.
		\item[4.] 
		Note that the difficult part of the abduction problem for $\clos\csl$ such that $\IM \subseteq \clos\csl$ is the case when a solution of size larger than $k$ is found. 
		This solution must be reduced to one of size $\leq k$ (resp. $=k$). 
		For $\W2$-membership of $\abdeq{\IM}{|E|}$, we reduce our problem to $\wsat{2}{1}$ which is $\W2$-complete. 
		For hardness, we reduce from $\wsatpos{2}$ which is again \W2-complete.
		Details of the completeness proof for $\abdeq{\IM}{E}$ can be found in Lemma~\ref{lem:ABD-E-IM-W2}. 
		The $\W2$-membership for $\IV_2$ uses a little modification of the same reduction. 
		This is proved in Lemma~\ref{lem:ABD-E-IV2-inW2}. 
		For these two cases, $\abdle{\csl}{|E|}$ follow from the monotone argument from Lemma~\ref{monotone}.
		For $\ID_2$, the result follows from \cite[Thm. 21]{DBLP:conf/aaai/FellowsPRR12}.
		Finally, the hardness for $\IS{}{10}$ is a consequence of the $\W2$-hardness for \IM.
		However, Lemma~\ref{lem:ABD-E-IS10-inW2} strengthens this results to $\W2$-completeness by showing membership in $\W2$ for $\abdeq{}{}$.
		Regarding $\abdle{\IS{}{10}}{|E|}$, we also believe in $\W{2}$-completeness but have not proved it yet.
		\item[5.] This follows from the fact that the classical problems are in $\P$ (Lemmas~\ref{lem:ABD-IS02} and \ref{lem:ABD-E-D1}).
	\end{description}\bigskip
	
Finally, $\FPT$ membership for $\abdle{\IS{}{12}}{|E|}$ follows from Lemma~\ref{lem:ABD-leq-E-IS12}. 
Note that this is the only case with $|E|$ when the two problems $\abdle{}{}$ and $\abdeq{}{}$ have different complexity. 
We prove $\W{1}$-hardness for the languages $\csl$, such that $\neg x\lor \neg y \in \closneq \csl$. 
The membership for $\abdeq{S}{|E|}$ with $\clos\csl\subseteq\IS{}{12}$, this means also arbitrary bases, then follows as a corollary (for details see Lemmas~\ref{lem:ABD-eq-E-IS1h2-hardness}~and~\ref{lem:ABD-eq-E-IS12hk}). 
%
\end{proof}

%
%
%
%
%
%
%
%

\subsection{Parameter `number of manifestations' |M|}
The complexity landscape regarding the parameter $|M|$ is more diverse. 
The classification differs for each of the investigated problem variants.
Consequently, we treat each case separately and start with the general abduction problem which provides a hexachotomy.
\begin{theorem}\label{theorem:g-M}
	The problem $\abd{\csl}{|M|}$ is 
	\begin{multicols}{2}\nolinenumbers
		\begin{enumerate} 
		\item $\para\SigmaP$-hard if $\cocl \subseteq \clos{\csl}\subseteq\BR$\\ and $\cocl\in\{\IN_2,\II_0\}$,
		\item $\para\co\NP$-hard if $\IN \subseteq \clos{\csl}\subseteq\II_1$,
		\item $\para\NP$-complete if $ \clos{\csl}=\IE_2 $,
		\item $\W{1}$-complete if $\IS{2}{11} \subseteq \clos{\csl}\subseteq\ID_2$,
		\item $\W{1}$-hard if $\IS{3}{11}\subseteq \clos{\csl}$,
		\item $\FPT$ if $\clos\csl\subseteq\cocl \in \{\ID_1, \IS{}{12},\IE_1,\IV_2\}$.
	\end{enumerate}\linenumbers
\end{multicols}
\end{theorem}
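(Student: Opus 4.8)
\emph{Items 1, 2, and 6} are immediate. Items 1 and 2 are the cases $k=|M|$ of Lemma~\ref{lem:ABD-H-hardness}, parts (3) and (1), which hold for all three problems and in particular for the existence problem $\abd{\csl}{}$; item 6 is Corollary~\ref{lem:ABD-any-IS12}. For \emph{item 3}, membership follows by observing that the unparameterised problem $\abd{\IE_2}{}$ already lies in $\NP$: guess $E\subseteq H$ nondeterministically and, using that $\SAT(\IE_2)$ and $\IMP(\IE_2)$ are in $\P$ (Proposition~\ref{prop:SAT-IMP-P}), check in polynomial time that $E\wedge\KB$ is satisfiable and that $E\wedge\KB\models m$ for every $m\in M$; hence every slice of $\abd{\IE_2}{|M|}$ is in $\NP$, i.e.\ $\abd{\IE_2}{|M|}\in\para\NP$. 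For hardness it suffices that classical Horn abduction is $\NP$-hard already with a single positive-literal manifestation \cite{DBLP:conf/aaai/SelmanL90,DBLP:journals/ai/NordhZ08}, so that the $1$-slice of $\abd{\IE_2}{|M|}$ is $\NP$-hard.

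\emph{Item 4, membership}, is the delicate direction. Fix $c:=|M|$. As $\clos\csl\subseteq\ID_2$, every constraint of $\KB$ is equivalent to a conjunction of clauses of width at most $2$, so I may assume $\KB$ is such a $2$-CNF. The plan is an fpt-reduction, with new parameter $c$, to Clique (parameterised by clique size), which is $\W1$-complete. The structural fact driving it is a small-conflict-core property of $2$-SAT: for satisfiable bijunctive $\KB$ and one manifestation $m$, the family $U_m:=\{\,E'\subseteq H\mid E'\wedge\KB\models m\,\}$ is upward closed, and, reading implications off the implication graph of $\KB\wedge\neg m$, every $\subseteq$-minimal member of $U_m$ has size at most $2$ — the two opposite literals witnessing a contradiction each need just one ``source'' among the added unit hypotheses. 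Moreover, $\KB$-consistency of a union $\bigcup_{m\in M}g_m$ of such small sets reduces to pairwise $\KB$-consistency, again because a conflict in a $2$-CNF uses at most two added units. Hence I would precompute, for each $m\in M$, the $O(|H|^2)$ minimal members of $U_m$ that are themselves $\KB$-consistent (polynomial and parameter-independent, since $\SAT$ and $\IMP$ for $\ID_2$ are in $\P$; reject if some $U_m$ has none), form the $c$-partite compatibility graph whose colour class for $m$ is this list and whose edges join $\KB$-compatible candidates of different colours, and observe that an explanation exists iff this graph has a clique meeting every colour class (equivalently, a $c$-clique, since there are no intra-class edges). Thus $\abd{\ID_2}{|M|}\in\W1$.

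\emph{Items 4 (hardness) and 5.} For the lower bound I would reduce Multicoloured Clique, with colour classes $V_1,\dots,V_k$, to $\abd{\IS{2}{11}}{|M|}$: the vertices become the hypotheses, one fresh manifestation $m_i$ is added per colour class (so $|M|=k$), and $\KB$ is assembled from the relations of the chosen base of $\IS{2}{11}$ (Table~\ref{tab:bases}) so as to express simultaneously that no two non-adjacent vertices may be chosen (hence any $\KB$-satisfiable selection $E$ is a clique) and that explaining $m_i$ forces $E\cap V_i\neq\emptyset$. Then $E$ is an explanation iff $E$ is a clique meeting every $V_i$, so explanations exist iff $G$ has a multicoloured $k$-clique; that $|E|$ is unbounded is harmless, since any one-vertex-per-class subselection of such an $E$ is already a multicoloured clique. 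Base independence (Lemma~\ref{lem:independence}; the languages here are neither essentially positive nor essentially negative) lifts this to every $\csl$ with $\IS{2}{11}\subseteq\clos\csl$, covering the whole range in item 4. For item 5 the same construction works over $\IS{3}{11}$ with ternary clauses of the appropriate polarity replacing the binary ones; since $\IS{3}{11}\subseteq\IS{2}{11}$, this also re-establishes the lower bound of item 4.

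\emph{Main obstacle.} I expect two places to require the most care: establishing the size-at-most-$2$ conflict-core property of $2$-SAT and threading it correctly through the reduction to Clique, which underlies the $\W1$-membership for $\ID_2$; and engineering the hardness gadget inside the sparse, low-arity relations available in $\IS{2}{11}$ and $\IS{3}{11}$ while keeping the number of manifestations equal to the source parameter $k$, i.e.\ encoding the ``clique'' and the ``colour-coverage'' requirements at the same time.
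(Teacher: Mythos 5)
Your items 1, 2, 3 and 6 match the paper's proof (hardness from Lemma~\ref{lem:ABD-H-hardness}, $\para\NP$-completeness of the Horn case via an $\NP$-hard $1$-slice, and Corollary~\ref{lem:ABD-any-IS12}); the genuine divergence is in items 4 and 5. There the paper does not argue from scratch: it cites Fellows et al.\ (Thm.~26), who prove $\W1$-completeness of the Krom case parameterised by $|M|$, and merely observes that the clauses used in their hardness reduction are already $\IS{2}{11}$-formulas, so hardness propagates down to $\IS{2}{11}$ (and item 5 follows because $\IS{2}{11}\subseteq\IS{3}{11}\subseteq\clos{\csl}$). You instead give a self-contained argument: for membership, a reduction of $\abd{\ID_2}{|M|}$ to multicoloured clique built on the fact that in a $2$-CNF every $\subseteq$-minimal set of added positive units causing inconsistency has size at most two (this is exactly the statement that prime implicates of Krom formulas have width at most two, which also justifies your ``pairwise compatibility suffices'' step — you should spell this out, as it carries the whole membership proof); for hardness, a direct reduction from multicoloured clique using clauses $\neg u\lor\neg v$ and $v\rightarrow m_i$, lifted to all relevant languages by Lemma~\ref{lem:independence}. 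Both directions are sound and have the advantage of being verifiable without consulting the external reference, at the price of redoing work the paper imports wholesale. One slip to fix: you write ``since $\IS{3}{11}\subseteq\IS{2}{11}$'' — the inclusion goes the other way, $\IS{2}{11}\subseteq\IS{3}{11}$, and this is precisely why item 5 is an immediate consequence of your item-4 hardness (any $\csl$ with $\IS{3}{11}\subseteq\clos{\csl}$ also has $\IS{2}{11}\subseteq\clos{\csl}$); the separate ``ternary clause'' variant of the gadget is unnecessary, and as phrased (wider negative clauses replacing the binary ones) it would not by itself forbid non-adjacent pairs unless you identify variables.
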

\begin{proof}
	\begin{description}
		\item[1.+2.] We proved this in Lemma~\ref{lem:ABD-H-hardness} using the fact that 1-slice of each problem is hard for respective classes.
		\item[3.] Membership is easy to see since the classical problem is \NP-complete. 
		For hardness, notice that the $1$-slice of the problem is \NP-complete \cite{DBLP:journals/jacm/EiterG95}.
		\item[4.+5.] The first result follows from Fellows~et~al.~\cite[Thm.~26]{DBLP:conf/aaai/FellowsPRR12}. 
		Notice that they prove this for $\ID_2$, but using the fact that the formulas (or clauses) in their reduction are $\IS{2}{11}$-formulas, we derive the hardness for $\IS{2}{11}$.
		The second statement is then a consequence.
		\item[6.] Follows from classical problems being in \P (Corollary~\ref{lem:ABD-any-IS12}).\qedhere
	\end{description}
\end{proof}

For $\abdle{}{}$, definite Horn cases surprisingly behave different and are much harder than for the general case. 
\begin{theorem}\label{theorem:le-M}
	The problem $\abdle{\csl}{|M|}$ is 
	\begin{multicols}{2}\nolinenumbers
		\begin{enumerate}
		\item $\para\SigmaP$-hard if $\cocl \subseteq \clos{\csl}\subseteq\BR$\\ and $\cocl\in\{\IN_2,\II_0\}$,
		\item $\para\co\NP$-hard if $\IN \subseteq \clos{\csl}\subseteq\II_1$,
		\item $\para\NP$-complete if $\IE \subseteq \clos{\csl}\subseteq \IE_2$, 
		\item $\W{1}$-complete if $\IS{2}{11} \subseteq \clos{\csl}\subseteq\ID_2$,
		\item $\W{1}$-hard if $\IS{3}{11}\subseteq \clos{\csl}$,
		\item $\FPT$ if $\clos\csl\subseteq\cocl\in \{\ID_1, \IS{}{12}, \IV_2\}$.

	\end{enumerate}\linenumbers
\end{multicols}
\end{theorem}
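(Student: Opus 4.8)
The plan is to obtain the six cases by assembling results already in place, the only genuinely new contribution being the $\para\NP$-hardness of the interval between $\IE$ and $\IE_2$.

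Cases~1 and~2 are precisely items~(3) and~(1) of Lemma~\ref{lem:ABD-H-hardness}, instantiated for $\abdle{}{}$ and $k=|M|$. For cases~4 and~5 the hardness is inherited from the size-unrestricted problem: Theorem~\ref{theorem:g-M}(4),(5) already gives $\W1$-hardness of $\abd{\csl}{|M|}$ whenever $\IS{2}{11}\subseteq\clos\csl$ (resp.\ $\IS{3}{11}\subseteq\clos\csl$), and the reduction $\abd{\csl}{}\preduction\abdle{\csl}{}$ of Lemma~\ref{lem:abd-logm-abdle} leaves $M$ unchanged, hence is an $\fptreduction$-reduction for the parameter $|M|$; both $\W1$-hardnesses transfer verbatim. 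For case~6, the subcases $\ID_1$ and $\IS{}{12}$ are $\FPT$ because the classical problems $\abdle{\csl}{}$ are already in $\P$ by Lemmas~\ref{lem:ABD-E-D1} and~\ref{lem:ABD-leq-E-IS12}; for $\IV_2$ one shows $\abdle{\IV_2}{}\in\P$ as well, by reading off a subset-minimal explanation from the dual-Horn structure (one ``local'' explanation per manifestation).

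The new work is case~3. Membership is routine: for $\clos\csl\subseteq\IE_2$, guess $E\subseteq H$ with $|E|\le s$ and check in polynomial time that $E\wedge\KB$ is satisfiable and that $E\wedge\KB\models m$ for each $m\in M$, using $\SAT(\IE_2),\IMP(\IE_2)\in\P$ (Proposition~\ref{prop:SAT-IMP-P}); thus $\abdle{\IE_2}{}\in\NP$, and since $\IE_2$ — like every co-clone containing $\IE$ — is neither essentially positive nor essentially negative, Lemma~\ref{lem:independence} carries this to all $\csl$ with $\clos\csl\subseteq\IE_2$ and, symmetrically, reduces the lower bound to a single base of $\IE$. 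For that lower bound it is enough that the $1$-slice of $\abdle{\IE}{|M|}$, i.e.\ the case $|M|=1$, is $\NP$-hard, which I would prove by reducing the \textsc{Set Cover} problem. Given sets $S_1,\dots,S_n$ over a universe $U\neq\emptyset$ and a bound $\ell$, put $H=\{h_1,\dots,h_n\}$, $M=\{g\}$, $s=\ell$, and let $\KB$ consist of the definite Horn clauses $h_i\to u$ for every $u\in S_i$ together with the single clause $\bigl(\bigwedge_{u\in U}u\bigr)\to g$. Each of these clauses is $0$-valid and $1$-valid Horn, hence defines a relation in $\IE$; for every $E\subseteq H$ the formula $E\wedge\KB$ is satisfiable; and $E\wedge\KB\models g$ holds exactly when the sets indexed by $E$ cover $U$. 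Hence an explanation of size $\le\ell$ exists iff a set cover of size $\le\ell$ does, and composing with Lemma~\ref{lem:independence} yields $\para\NP$-hardness of $\abdle{\csl}{|M|}$ for every $\csl$ with $\IE\subseteq\clos\csl$.

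The step I expect to be the main obstacle is the $\W1$-membership of $\abdle{\ID_2}{|M|}$ needed for case~4 — naive enumeration of minimal explanations yields only an $\XP$ algorithm. The route I would take exploits that in a \textsc{Krom} knowledge base unit propagation decomposes over the hypotheses: $E\wedge\KB\models m$ iff some single $h\in E$ already satisfies $\{h\}\wedge\KB\models m$, and $E\wedge\KB$ is satisfiable iff every pair of hypotheses in $E$ is jointly consistent with $\KB$. Thus a minimal explanation arises by picking one ``reason'' hypothesis per manifestation subject to pairwise consistency, so it has at most $|M|$ elements; after guessing how the at most $s\le|M|$ distinct reasons are shared among the manifestations (a number of choices bounded by a function of $|M|$), the remaining problem is a binary constraint-satisfaction instance on at most $|M|$ variables, which $\fptreduction$-reduces to $\wsat{1}{2}$, $\W1$-complete by Proposition~\ref{theorem-wsat}. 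The size-unbounded analogue is essentially Fellows~et~al.~\cite[Thm.~26]{DBLP:conf/aaai/FellowsPRR12}; pushing the bound $s$ through the encoding is the delicate point.
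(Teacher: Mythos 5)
The fatal step is your treatment of the $\IV_2$ subcase of item~6. You claim that the classical problem $\abdle{\IV_2}{}$ is in $\P$ by ``reading off a subset-minimal explanation, one local explanation per manifestation''. This conflates subset-minimality with cardinality-minimality: choosing one explaining hypothesis per manifestation so that few \emph{distinct} hypotheses are used is exactly \textsc{Set Cover}. Concretely, take $\KB=\{h_i\to u \mid u\in S_i\}$ (clauses lying in $\IM\subseteq\IV_2$), $H=\{h_1,\dots,h_n\}$, $M=U$, $s=\ell$: every $E\subseteq H$ is consistent with $\KB$, and $E\land\KB\models U$ iff the sets indexed by $E$ cover $U$, so $\abdle{\IM}{}$, and a fortiori $\abdle{\IV_2}{}$, is $\NP$-hard and your claimed polynomial-time algorithm cannot exist unless $\P=\NP$. (This does not contradict item~6, because in this reduction $|M|=|U|$ is unbounded.) The correct argument is genuinely parameterised: the paper reduces $\abdle{\IM}{|M|}$ to $\maxsat$ parameterised by the number $k=|M|$ of clauses that must be satisfied, which is $\FPT$, and extends this to $\IV_2$ after eliminating unit clauses by propagation and the remaining positive and one-negative-literal clauses by resolution (Lemmas~\ref{lem:ABD-M-IM-FPT} and~\ref{lem:ABD-M-IV2-FPT}); your route cannot be repaired without such a parameter-dependent step.

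Two smaller points. For item~4, transferring hardness from $\abd{\csl}{|M|}$ via Lemma~\ref{lem:abd-logm-abdle} is fine (it leaves $M$ unchanged), but your $\W1$-membership sketch stalls exactly where you flag it: after guessing one of the $f(|M|)$ ways the reasons are shared, you have $f(|M|)$ separate binary CSP instances, and a many-one $\fptreduction$-reduction to $\wsat{1}{2}$ still has to merge them into a single instance while counting distinct hypotheses; the paper avoids this by importing $\W1$-completeness for the Krom case from Fellows et al.\ (their Thm.~25), observing that their reduction already uses only $\IS{2}{11}$-clauses. Finally, in item~3 your clause $\bigl(\bigwedge_{u\in U}u\bigr)\to g$ has arity growing with the input, so it is not a constraint over a fixed finite language and Lemma~\ref{lem:independence} does not apply to it as written; you must first chain it into bounded-arity $\IE$-clauses of the forms $x\land y\to z$ and $x\to y$ using auxiliary non-hypothesis variables --- precisely the clause types in the paper's vertex-cover-style reduction --- after which your set-cover argument goes through.
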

\begin{proof}[Proof Ideas.]
	\begin{description}
		\item[1.+2.] Follows from Theorem~\ref{theorem:g-M} in conjunction with Lemma~\ref{lem:abd-logm-abdle}.
		\item[3.] We reduce \vcover\ to our problem similar to the approach of Fellows~et~al.~\cite[Thm.~5]{DBLP:conf/aaai/FellowsPRR12}.
		The problem can be translated into an abduction instance with $\IE$ knowledge base, consequently giving the desired hardness result.
		\item[4.+5.] The first result follows from \cite[Thm.~25]{DBLP:conf/aaai/FellowsPRR12}. 
		Notice that they prove this for $\ID_2$, but using the fact that the formulas (or clauses) in their reduction are $\IS{2}{11}$-formulas, we derive a hardness result for $\IS{2}{11}$. 
		The second statement is then a consequence.
		\item[6.] We prove this for $\IM$ by reducing our problem to the $\maxsat$ problem which asks, given $m$ clauses, is it possible to set at most $s$ variables to true so that at least $k$ clauses are satisfied (details are presented in Lemma~\ref{lem:ABD-M-IM-FPT}).
		This problem when parametrised by $k$, the number of clauses to be satisfied, is $\FPT$.
		Moreover, this reduction can be extended to the languages in $\IV_2$.
		The problematic part is the presence of positive and unit negative clauses which need to be taken care of (for details, see Lemma~\ref{lem:ABD-M-IV2-FPT}).
		Accordingly, the result for $\IV_2$ follows.
		The remaining cases are due to Lemmas~\ref{lem:ABD-leq-E-IS12} and \ref{lem:ABD-E-D1}.\qedhere
	\end{description}
\end{proof}

Now, we end by stating results for $\problemFont{ABD}_=$.
Interestingly to observe, the majority of the intractable cases is already much harder with large parts being $\para\NP$-complete.
Even the case of the essentially negative co-clones which are \FPT for $\problemFont{ABD}_\leq$ yield $\para\NP$-completeness in this situation.
Merely the $2$-affine and dualHorn cases are $\FPT$.
\begin{theorem}\label{theorem:E-M}
	The problem $\abdeq{\csl}{|M|}$ is 
	\begin{multicols}{2}\nolinenumbers
		\begin{enumerate}
		\item $\para\SigmaP$-hard if $\cocl \subseteq \clos{\csl}\subseteq\BR$\\ and $\cocl\in\{\IN_2,\II_0\}$,
		\item $\para\co\NP$-hard if $\IN \subseteq \clos{\csl}\subseteq\II_1$,
		\item $\para\NP$-complete if $\IS{2}{1} \subseteq \clos{\csl}$ and $\clos\csl\subseteq\cocl \in \{\IE_2, \ID_2\}$,
		\item $\FPT$ if $\clos\csl\subseteq\cocl \in \{\ID_1, \IV_2\}$.
	\end{enumerate}\linenumbers
\end{multicols}
\end{theorem}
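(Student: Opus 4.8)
The plan is to establish Theorem~\ref{theorem:E-M} by leveraging, as much as possible, the results already in hand. Items~1 and~2 are inherited verbatim: the $\para\SigmaP$-hardness for the range $\cocl \subseteq \clos{\csl}\subseteq\BR$ with $\cocl\in\{\IN_2,\II_0\}$ and the $\para\co\NP$-hardness for $\IN \subseteq \clos{\csl}\subseteq\II_1$ both come directly from Lemma~\ref{lem:ABD-H-hardness}, which was stated for all three problem variants including $\abdeq{\csl}{k}$ with $k=|M|$. So the bulk of the work is items~3 and~4.

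For item~3, the $\para\NP$-membership is the easy direction: the classical problem $\abdeq{\csl}{}$ for $\clos\csl\subseteq\IE_2$ or $\clos\csl\subseteq\ID_2$ sits in $\NP$ (guess $E$ of size exactly $s$, then check satisfiability of $E\land\KB$ and the entailment $E\land\KB\models M$, both in $\P$ by Proposition~\ref{prop:SAT-IMP-P}), so every slice is in $\NP$, and membership in $\para\NP$ follows. For the hardness, I would exhibit a reduction from a classical $\NP$-complete problem into a \emph{fixed slice}, i.e.\ with $|M|$ bounded by a constant; this is exactly what makes a problem $\para\NP$-hard. The natural candidate is to reduce from a monotone/antimonotone variant of an $\NP$-complete problem (e.g.\ the reductions already used by Fellows~et~al.\ or Nordh~\&~Zanuttini in the $\SAT$/$\IMP$ setting) so that only $O(1)$ many manifestations are needed — one can often push the entire manifestation into a single fresh variable via a primitive positive definition, keeping $|M|=1$. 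The key observation, already flagged in the theorem's preamble, is that forcing $|E|=s$ \emph{exactly} (rather than $\leq s$) reintroduces $\NP$-hardness even for essentially negative languages like $\IS{2}{1}\subseteq\clos\csl$: in the $\abdle{}{}$ case one could freely pad an explanation up to size $s$, but here the padding interacts with the negative clauses that ``rule out'' certain subsets of $H$ (cf.\ the discussion around Lemma~\ref{lem:ABD-leq-E-IS12}), and deciding whether a padding to exactly $s$ avoiding all forbidden pairs exists encodes an $\NP$-hard selection problem. I would make this precise by a reduction from a problem such as independent set or a suitable $\SAT$ restriction, using negative binary clauses to encode the edge relation and the exact-size constraint to force a witness of the right cardinality; the base-independence remark (handled for essentially negative languages by direct constructions, e.g.\ Lemma~\ref{lem:ABD-eq-E-IS12hk}) then lifts this to all bases of the co-clones in question.

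For item~4, the $\FPT$ cases, I would split into $\clos\csl\subseteq\ID_1$ and $\clos\csl\subseteq\IV_2$. The $2$-affine case $\clos\csl\subseteq\ID_1$ is already done: Lemma~\ref{lem:ABD-E-D1} shows the classical problem $\abdeq{\csl}{}$ is in $\P$, hence it is trivially $\FPT$ under any parameterisation, in particular $|M|$. For the dualHorn case $\clos\csl\subseteq\IV_2$, the classical problem is \emph{not} known to be in $\P$ (indeed for $\IE_2$ it is $\NP$-hard), so I need a genuine fpt-algorithm in the parameter $|M|$. The approach mirrors the $\abdle{}{}$ argument for $\IV_2$ sketched in Theorem~\ref{theorem:le-M}(6): reduce to a bounded-clause weighted satisfiability / $\maxsat$-type problem parameterised by the number of ``target'' clauses, which is $O(|M|)$, and which is $\FPT$ in that parameter. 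Concretely, a dualHorn knowledge base together with positive/unit-negative clauses and the requirement $E\land\KB\models m$ for each $m\in M$ can, after unit propagation, be reorganised so that the combinatorial core is choosing which hypotheses to include so as to satisfy $|M|$ many entailment constraints; one enumerates over the $2^{O(|M|)}$ relevant ``reasons'' for each manifestation and solves the residual problem in polynomial time. The only extra twist for $\abdeq{}{}$ versus $\abdle{}{}$ is the exact-size requirement, but in the dualHorn case the monotone-style argument of Lemma~\ref{monotone} (extending a small explanation up to size $s$ by adding hypotheses $h$ with $\neg h\notin\KB$) applies, so the exact and $\leq$ versions coincide and the $\FPT$ result transfers.

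The main obstacle I anticipate is the $\para\NP$-hardness for the essentially negative range $\IS{2}{1}\subseteq\clos\csl\subseteq\ID_2$ in item~3. The difficulty is twofold: first, one must design the reduction so that $|M|$ stays constant — this typically requires a primitive positive gadget collapsing many entailment targets into one, and one has to check such a gadget is available in $\ID_2$ (and, separately, in $\IE_2$) without destroying essential negativity where relevant; second, the reduction must genuinely exploit the exactness of $|E|=s$, since the corresponding $\abdle{}{}$ problem for $\IS{2}{1}$ is $\FPT$ (Lemma~\ref{lem:ABD-leq-E-IS12}), so any argument that would also work for $\leq$ is necessarily flawed. Getting the bookkeeping right — that a forbidden-pair-avoiding explanation of \emph{exactly} size $s$ exists iff the source instance is a yes-instance — is where the care lies, and it is the step I would write out in full in the appendix (as Lemma~\ref{lem:ABD-eq-E-IS12hk} and companions).
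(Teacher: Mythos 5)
Your proposal follows essentially the same route as the paper: items 1--2 are inherited from the earlier hardness results, item 3 combines $\NP$-membership of the classical problems (via Proposition~\ref{prop:SAT-IMP-P}) with exactly the Independent-Set reduction the paper uses in Lemma~\ref{lem:ABD-eq-M-IS1h2-hardness} (negative edge clauses, one fresh manifestation, exact size $k+1$, base independence handled separately for the essentially negative bases), and item 4 uses Lemma~\ref{lem:ABD-E-D1} for $\ID_1$ and the monotone argument of Lemma~\ref{monotone} together with the $\abdle{}{}$ $\FPT$ result for $\IV_2$. This matches the paper's proof, so no further comparison is needed.
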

\begin{proof}[Proof Ideas.]
	\begin{description}
		\item[1.+2.] Follows from Theorem~\ref{theorem:g-M} in conjunction with Lemma~\ref{lem:abd-logm-abdle}.

		\item[3.] In Lemma~\ref{lem:ABD-eq-M-IS1h2-hardness}, we prove that the problem $\abdeq{\csl}{|M|}$ is $\para\NP$-hard as long as $\neg x\lor \neg y \in \closneq \csl$.
		 The case for $\abdeq{\IS{2}{1}}{|M|}$, so also arbitrary bases, then follows as a corollary. 
	The hardness for $\IE\subseteq \clos{\csl}$ follows from arguments used in the proof of Theorem~\ref{theorem:le-M} for the $\IE$ case.
	The upper bounds for $\IE_2$ and $\ID_2$ follow trivially since the classical problems are in \NP. 
		\item[4.] The proof for $\IV_2$ is due to the monotone argument of Lemma~\ref{monotone} and Theorem~\ref{theorem:le-M}.
		For $\ID_1$, we proved in Lemma~\ref{lem:ABD-E-D1} that the classical problem is in \P.\qedhere
	\end{description}
\end{proof}

\section{Conclusion}
In this paper, we presented a two-dimensional classification of three central abductive reasoning problems (unrestricted explanation size, $=$, and $\leq$).
In one dimension, we consider the different parameterisations $|H|,|M|,|V|,|E|$, and in the other dimension we consider all possible constraint languages defined by corresponding co-clones but the affine co-clones.
Often in the past, problems regarding the affine co-clones (resp., clones) resisted a complete classification \cite{bamuscscscvo11,crscthwo10,DBLP:conf/kr/CreignouST10,thomas09,DBLP:journals/corr/abs-0812-4848,MMSTWW09,hescsc08,DBLP:phd/de/Reith2002}.
Also the result of Durand and Hermann \cite{DBLP:conf/stacs/DurandH03} underlines how restive problems around affine functions are.
It is difficult to explain why exactly these cases are so problematic but the notion of the Fourier expansion \cite{DBLP:books/daglib/0033652} of Boolean functions gives a nice and fitting view on that.
Informally, the Fourier expansion of a Boolean function is a probability measure mimicking how likely a flip of a variable changes the function value.
For instance, disjunctions have a very low Fourier expansion value whereas the exclusive-or function has the maximum. 
Affine functions can though be seen as rather counterintuitive as every variable influences the function value dramatically.

For all three studied problems, we exhibit the same trichotomy for the parameter $|H|$ ($\IN$ is $\para\co\NP$-hard, $\IN_2$ is $\para\DP$-hard, and the remaining are $\FPT$).
The parameter $|V|$ always allows for \FPT algorithms independent of the co-clone.
Regarding $|E|$, only the two size restricted variants are meaningful.
For `$\leq$' we achieve a pentachotomy between $\FPT$, $\W2$-hard, $\W\P$-complete, $\para\co\NP$-, and $\para\DP$-hard.
Whereas, for `$=$', we achieve a hexachotomy additionally having $\W1$-hardness for the essentially negative cases.
These $\W1$-hard cases are also surprising in the sense that for `$\leq$' they are easy and $\FPT$.
Similarly, the same easy/hard-difference has been observed as well for $|M|$ as the studied parameter.
However, here, we distinguish between $\para\NP$-complete for `$=$' and $\FPT$ for `$\leq$'.
The complete picture for `$=$' and $|M|$ is a tetrachotomy ranging through $\FPT$, $\para\NP$-complete, $\para\co\NP$-hard, and $\para\SigmaP$-complete.
With respect to `$\leq$' and the unrestrictied cases, we also have some $\W1$-hard cases which lack a precise classification.

Additionally, we already started a bit to study the parameterised enumeration complexity~\cite{DBLP:journals/mst/CreignouMMSV17} of these problems yielding $\FPT\enum$ algorithms for $|V|$ and $\BR$ as well as for $|H|$ and $\IE_2,\IV_2,\ID_2$, and $\IL_2$.
Furthermore, $\IL_1$ even allows \FPT algorithms for any parameterisation (so it extends Corollary~\ref{lem:ABD-any-IS12} in that way).

Notice that in this paper, we did not require $H\cap M$ to be empty. 
However, one can require this (as, for instance, Fellows~et~al.~\cite{DBLP:conf/aaai/FellowsPRR12} did). 
All our proofs (e.g., Lemma~\ref{lem:ABD-IS02}) can easily be adapted in that direction.
Furthermore, we believe that the $\para\DP$-hardness for $|H|$ and $\IN_2$ should be extendable to $\para\SigmaP$-hardness but do not have a full proof yet.

Furthermore, we want to attack the affine co-clones as well as present matching upper and lower bounds for all cases.
Also, parameterised enumeration complexity is the next object of our investigations.

\appendix
\bibliography{main}

\begin{thebibliography}{10}

\bibitem{bamuscscscvo11}
M.~Bauland, M.~Mundhenk, T.~Schneider, H.~Schnoor, I.~Schnoor, and H.~Vollmer.
\newblock The tractability of model checking for {LTL}: the good, the bad, and
  the ugly fragments.
\newblock {\em ACM Transactions on Computational Logic (TOCL)},
  12(2):13:1--13:28, January 2011.

\bibitem{DBLP:journals/corr/abs-0812-4848}
Michael Bauland, Thomas Schneider, Henning Schnoor, Ilka Schnoor, and Heribert
  Vollmer.
\newblock The complexity of generalized satisfiability for linear temporal
  logic.
\newblock {\em Logical Methods in Computer Science}, 5(1), 2009.
\newblock URL: \url{http://arxiv.org/abs/0812.4848}.

\bibitem{DBLP:journals/logcom/BeyersdorffMTV12}
Olaf Beyersdorff, Arne Meier, Michael Thomas, and Heribert Vollmer.
\newblock The complexity of reasoning for fragments of default logic.
\newblock {\em J. Log. Comput.}, 22(3):587--604, 2012.
\newblock URL: \url{https://doi.org/10.1093/logcom/exq061}, \href
  {http://dx.doi.org/10.1093/logcom/exq061} {\path{doi:10.1093/logcom/exq061}}.

\bibitem{DBLP:journals/ipl/BohlerRSV05}
Elmar B{\"{o}}hler, Steffen Reith, Henning Schnoor, and Heribert Vollmer.
\newblock Bases for boolean co-clones.
\newblock {\em Inf. Process. Lett.}, 96(2):59--66, 2005.
\newblock URL: \url{https://doi.org/10.1016/j.ipl.2005.06.003}, \href
  {http://dx.doi.org/10.1016/j.ipl.2005.06.003}
  {\path{doi:10.1016/j.ipl.2005.06.003}}.

\bibitem{DBLP:journals/ita/BonnetPS16}
{\'{E}}douard Bonnet, Vangelis~Th. Paschos, and Florian Sikora.
\newblock Parameterized exact and approximation algorithms for maximum
  \emph{k}-set cover and related satisfiability problems.
\newblock {\em {RAIRO} - Theor. Inf. and Applic.}, 50(3):227--240, 2016.
\newblock URL: \url{https://doi.org/10.1051/ita/2016022}, \href
  {http://dx.doi.org/10.1051/ita/2016022} {\path{doi:10.1051/ita/2016022}}.

\bibitem{crscthwo10}
N.~Creignou, J.~Schmidt, M.~Thomas, and S.~Woltran.
\newblock Sets of boolean connectives that make argumentation easier.
\newblock In {\em Proc. 12th European Conference on Logics in Artificial
  Intelligence}, volume 6341 of {\em Lecture Notes in Computer Science}, pages
  117--129. Springer, 2010.

\bibitem{DBLP:journals/tocl/CreignouE014}
Nadia Creignou, Uwe Egly, and Johannes Schmidt.
\newblock Complexity classifications for logic-based argumentation.
\newblock {\em {ACM} Trans. Comput. Log.}, 15(3):19:1--19:20, 2014.
\newblock URL: \url{https://doi.org/10.1145/2629421}, \href
  {http://dx.doi.org/10.1145/2629421} {\path{doi:10.1145/2629421}}.

\bibitem{DBLP:journals/mst/CreignouMMSV17}
Nadia Creignou, Arne Meier, Julian{-}Steffen M{\"{u}}ller, Johannes Schmidt,
  and Heribert Vollmer.
\newblock Paradigms for parameterized enumeration.
\newblock {\em Theory Comput. Syst.}, 60(4):737--758, 2017.
\newblock URL: \url{https://doi.org/10.1007/s00224-016-9702-4}, \href
  {http://dx.doi.org/10.1007/s00224-016-9702-4}
  {\path{doi:10.1007/s00224-016-9702-4}}.

\bibitem{cmtv12}
Nadia Creignou, Arne Meier, Michael Thomas, and Heribert Vollmer.
\newblock The complexity of reasoning for fragments of autoepistemic logic.
\newblock {\em ACM Trans. Comput. Logic}, 13(2):17:1--17:22, April 2012.
\newblock URL: \url{http://doi.acm.org/10.1145/2159531.2159539}, \href
  {http://dx.doi.org/10.1145/2159531.2159539}
  {\path{doi:10.1145/2159531.2159539}}.

\bibitem{DBLP:conf/sat/CreignouOS11}
Nadia Creignou, Fr{\'{e}}d{\'{e}}ric Olive, and Johannes Schmidt.
\newblock Enumerating all solutions of a boolean {CSP} by non-decreasing
  weight.
\newblock In {\em Theory and Applications of Satisfiability Testing - {SAT}
  2011 - 14th International Conference, {SAT} 2011, Ann Arbor, MI, USA, June
  19-22, 2011. Proceedings}, pages 120--133, 2011.
\newblock URL: \url{https://doi.org/10.1007/978-3-642-21581-0\_11}, \href
  {http://dx.doi.org/10.1007/978-3-642-21581-0\_11}
  {\path{doi:10.1007/978-3-642-21581-0\_11}}.

\bibitem{DBLP:conf/kr/CreignouST10}
Nadia Creignou, Johannes Schmidt, and Michael Thomas.
\newblock Complexity of propositional abduction for restricted sets of boolean
  functions.
\newblock In Fangzhen Lin, Ulrike Sattler, and Miroslaw Truszczynski, editors,
  {\em Principles of Knowledge Representation and Reasoning: Proceedings of the
  Twelfth International Conference, {KR} 2010, Toronto, Ontario, Canada, May
  9-13, 2010}. {AAAI} Press, 2010.
\newblock URL: \url{http://aaai.org/ocs/index.php/KR/KR2010/paper/view/1201}.

\bibitem{DBLP:journals/logcom/CreignouST12}
Nadia Creignou, Johannes Schmidt, and Michael Thomas.
\newblock Complexity classifications for propositional abduction in post's
  framework.
\newblock {\em J. Log. Comput.}, 22(5):1145--1170, 2012.
\newblock URL: \url{https://doi.org/10.1093/logcom/exr012}, \href
  {http://dx.doi.org/10.1093/logcom/exr012} {\path{doi:10.1093/logcom/exr012}}.

\bibitem{DBLP:journals/argcom/Creignou0TW11}
Nadia Creignou, Johannes Schmidt, Michael Thomas, and Stefan Woltran.
\newblock Complexity of logic-based argumentation in post's framework.
\newblock {\em Argument {\&} Computation}, 2(2-3):107--129, 2011.
\newblock URL: \url{https://doi.org/10.1080/19462166.2011.629736}, \href
  {http://dx.doi.org/10.1080/19462166.2011.629736}
  {\path{doi:10.1080/19462166.2011.629736}}.

\bibitem{DBLP:conf/dagstuhl/CreignouV08}
Nadia Creignou and Heribert Vollmer.
\newblock Boolean constraint satisfaction problems: When does post's lattice
  help?
\newblock In Nadia Creignou, Phokion~G. Kolaitis, and Heribert Vollmer,
  editors, {\em Complexity of Constraints - An Overview of Current Research
  Themes [Result of a Dagstuhl Seminar].}, volume 5250 of {\em Lecture Notes in
  Computer Science}, pages 3--37. Springer, 2008.
\newblock URL: \url{https://doi.org/10.1007/978-3-540-92800-3\_2}, \href
  {http://dx.doi.org/10.1007/978-3-540-92800-3\_2}
  {\path{doi:10.1007/978-3-540-92800-3\_2}}.

\bibitem{DBLP:journals/siamcomp/CreignouZ06}
Nadia Creignou and Bruno Zanuttini.
\newblock A complete classification of the complexity of propositional
  abduction.
\newblock {\em {SIAM} J. Comput.}, 36(1):207--229, 2006.
\newblock URL: \url{https://doi.org/10.1137/S0097539704446311}, \href
  {http://dx.doi.org/10.1137/S0097539704446311}
  {\path{doi:10.1137/S0097539704446311}}.

\bibitem{DBLP:series/mcs/DowneyF99}
Rodney~G. Downey and Michael~R. Fellows.
\newblock {\em Parameterized Complexity}.
\newblock Monographs in Computer Science. Springer, 1999.
\newblock URL: \url{https://doi.org/10.1007/978-1-4612-0515-9}, \href
  {http://dx.doi.org/10.1007/978-1-4612-0515-9}
  {\path{doi:10.1007/978-1-4612-0515-9}}.

\bibitem{DBLP:series/txcs/DowneyF13}
Rodney~G. Downey and Michael~R. Fellows.
\newblock {\em Fundamentals of Parameterized Complexity}.
\newblock Texts in Computer Science. Springer, 2013.
\newblock URL: \url{https://doi.org/10.1007/978-1-4471-5559-1}, \href
  {http://dx.doi.org/10.1007/978-1-4471-5559-1}
  {\path{doi:10.1007/978-1-4471-5559-1}}.

\bibitem{DBLP:conf/stacs/DurandH03}
Arnaud Durand and Miki Hermann.
\newblock {The Inference Problem for Propositional Circumscription of Affine
  Formulas Is coNP-Complete}.
\newblock In Helmut Alt and Michel Habib, editors, {\em {STACS} 2003, 20th
  Annual Symposium on Theoretical Aspects of Computer Science, Berlin, Germany,
  February 27 - March 1, 2003, Proceedings}, volume 2607 of {\em Lecture Notes
  in Computer Science}, pages 451--462. Springer, 2003.
\newblock URL: \url{https://doi.org/10.1007/3-540-36494-3\_40}, \href
  {http://dx.doi.org/10.1007/3-540-36494-3\_40}
  {\path{doi:10.1007/3-540-36494-3\_40}}.

\bibitem{DBLP:journals/jacm/EiterG95}
Thomas Eiter and Georg Gottlob.
\newblock The complexity of logic-based abduction.
\newblock {\em J. {ACM}}, 42(1):3--42, 1995.
\newblock URL: \url{https://doi.org/10.1145/200836.200838}, \href
  {http://dx.doi.org/10.1145/200836.200838} {\path{doi:10.1145/200836.200838}}.

\bibitem{DBLP:conf/aaai/FellowsPRR12}
Michael~R. Fellows, Andreas Pfandler, Frances~A. Rosamond, and Stefan
  R{\"{u}}mmele.
\newblock The parameterized complexity of abduction.
\newblock In J{\"{o}}rg Hoffmann and Bart Selman, editors, {\em Proceedings of
  the Twenty-Sixth {AAAI} Conference on Artificial Intelligence, July 22-26,
  2012, Toronto, Ontario, Canada.} {AAAI} Press, 2012.
\newblock URL:
  \url{http://www.aaai.org/ocs/index.php/AAAI/AAAI12/paper/view/5048}.

\bibitem{DBLP:series/txtcs/FlumG06}
J{\"{o}}rg Flum and Martin Grohe.
\newblock {\em Parameterized Complexity Theory}.
\newblock Texts in Theoretical Computer Science. An {EATCS} Series. Springer,
  2006.
\newblock URL: \url{https://doi.org/10.1007/3-540-29953-X}, \href
  {http://dx.doi.org/10.1007/3-540-29953-X} {\path{doi:10.1007/3-540-29953-X}}.

\bibitem{hescsc08}
E.~Hemaspaandra, H.~Schnoor, and I.~Schnoor.
\newblock Generalized modal satisfiability.
\newblock {\em CoRR}, abs/0804.2729:1--32, 2008.
\newblock URL: \url{http://arxiv.org/abs/0804.2729}.

\bibitem{DBLP:journals/tsmc/JosephsonCST87}
John~R. Josephson, B.~Chandrasekaran, Jack~W. Smith, and Michael~C. Tanner.
\newblock A mechanism for forming composite explanatory hypotheses.
\newblock {\em {IEEE} Trans. Systems, Man, and Cybernetics}, 17(3):445--454,
  1987.
\newblock URL: \url{https://doi.org/10.1109/TSMC.1987.4309060}, \href
  {http://dx.doi.org/10.1109/TSMC.1987.4309060}
  {\path{doi:10.1109/TSMC.1987.4309060}}.

\bibitem{DBLP:conf/coco/Karp72}
Richard~M. Karp.
\newblock Reducibility among combinatorial problems.
\newblock In Raymond~E. Miller and James~W. Thatcher, editors, {\em Proceedings
  of a symposium on the Complexity of Computer Computations, held March 20-22,
  1972, at the {IBM} Thomas J. Watson Research Center, Yorktown Heights, New
  York, {USA}}, The {IBM} Research Symposia Series, pages 85--103. Plenum
  Press, New York, 1972.
\newblock URL: \url{https://doi.org/10.1007/978-1-4684-2001-2\_9}, \href
  {http://dx.doi.org/10.1007/978-1-4684-2001-2\_9}
  {\path{doi:10.1007/978-1-4684-2001-2\_9}}.

\bibitem{lew79}
Harry~R. Lewis.
\newblock Satisfiability problems for propositional calculi.
\newblock {\em Math.\ Sys.\ Theory}, 13:45--53, 1979.

\bibitem{MMSTWW09}
A.~Meier, M.~Mundhenk, T.~Schneider, M.~Thomas, V.~Weber, and F.~Weiss.
\newblock The complexity of satisfiability for fragments of hybrid logic ---
  {Part I}.
\newblock In {\em Proc.\ MFCS}, volume 5734 of {\em LNCS}, pages 587--599,
  2009.

\bibitem{DBLP:journals/tcs/Meier013}
Arne Meier and Thomas Schneider.
\newblock Generalized satisfiability for the description logic {ALC}.
\newblock {\em Theor. Comput. Sci.}, 505:55--73, 2013.
\newblock URL: \url{https://doi.org/10.1016/j.tcs.2013.02.009}, \href
  {http://dx.doi.org/10.1016/j.tcs.2013.02.009}
  {\path{doi:10.1016/j.tcs.2013.02.009}}.

\bibitem{DBLP:journals/ijfcs/MeierTVM09}
Arne Meier, Michael Thomas, Heribert Vollmer, and Martin Mundhenk.
\newblock The complexity of satisfiability for fragments of {CTL} and ctl*.
\newblock {\em Int. J. Found. Comput. Sci.}, 20(5):901--918, 2009.
\newblock URL: \url{https://doi.org/10.1142/S0129054109006954}, \href
  {http://dx.doi.org/10.1142/S0129054109006954}
  {\path{doi:10.1142/S0129054109006954}}.

\bibitem{DBLP:journals/ai/Morgan71}
Charles~G. Morgan.
\newblock Hypothesis generation by machine.
\newblock {\em Artif. Intell.}, 2(2):179--187, 1971.
\newblock URL: \url{https://doi.org/10.1016/0004-3702(71)90009-9}, \href
  {http://dx.doi.org/10.1016/0004-3702(71)90009-9}
  {\path{doi:10.1016/0004-3702(71)90009-9}}.

\bibitem{DBLP:journals/ai/NordhZ08}
Gustav Nordh and Bruno Zanuttini.
\newblock What makes propositional abduction tractable.
\newblock {\em Artif. Intell.}, 172(10):1245--1284, 2008.
\newblock URL: \url{https://doi.org/10.1016/j.artint.2008.02.001}, \href
  {http://dx.doi.org/10.1016/j.artint.2008.02.001}
  {\path{doi:10.1016/j.artint.2008.02.001}}.

\bibitem{DBLP:books/daglib/0033652}
Ryan O'Donnell.
\newblock {\em Analysis of Boolean Functions}.
\newblock Cambridge University Press, 2014.
\newblock URL:
  \url{http://www.cambridge.org/de/academic/subjects/computer-science/algorithmics-complexity-computer-algebra-and-computational-g/analysis-boolean-functions}.

\bibitem{Papadimitriou94}
Christos~H. Papadimitriou.
\newblock {\em Computational Complexity}.
\newblock Addison-Wesley, 1994.

\bibitem{peirce}
Charles~Sanders Peirce.
\newblock {\em Collected Papers of Charles Sanders Peirce}.
\newblock Oxford University Press, London, 1958.

\bibitem{pr90}
Yun Peng and James~A. Reggia.
\newblock {\em Abductive Inference Models for Diagnostic Problem-Solving}.
\newblock Artificial Intelligence. Springer-Verlag New York, 1990.
\newblock \href {http://dx.doi.org/10.1007/978-1-4419-8682-5}
  {\path{doi:10.1007/978-1-4419-8682-5}}.

\bibitem{DBLP:conf/ijcai/Poole89}
David Poole.
\newblock Normality and faults in logic-based diagnosis.
\newblock In N.~S. Sridharan, editor, {\em Proceedings of the 11th
  International Joint Conference on Artificial Intelligence. Detroit, MI, USA,
  August 1989}, pages 1304--1310. Morgan Kaufmann, 1989.
\newblock URL: \url{http://ijcai.org/Proceedings/89-2/Papers/073.pdf}.

\bibitem{pos41}
E.~Post.
\newblock The two-valued iterative systems of mathematical logic.
\newblock {\em Annals of Mathematical Studies}, 5:1--122, 1941.

\bibitem{DBLP:phd/de/Reith2002}
Steffen Reith.
\newblock {\em Generalized satisfiability problems}.
\newblock PhD thesis, Julius Maximilians University W{\"{u}}rzburg, Germany,
  2001.
\newblock URL:
  \url{http://opus.bibliothek.uni-wuerzburg.de/opus/volltexte/2002/7/index.html}.

\bibitem{DBLP:journals/jal/RobertsonS86}
Neil Robertson and Paul~D. Seymour.
\newblock Graph minors. {II.} algorithmic aspects of tree-width.
\newblock {\em J. Algorithms}, 7(3):309--322, 1986.
\newblock URL: \url{https://doi.org/10.1016/0196-6774(86)90023-4}, \href
  {http://dx.doi.org/10.1016/0196-6774(86)90023-4}
  {\path{doi:10.1016/0196-6774(86)90023-4}}.

\bibitem{DBLP:conf/stoc/Schaefer78}
Thomas~J. Schaefer.
\newblock The complexity of satisfiability problems.
\newblock In Richard~J. Lipton, Walter~A. Burkhard, Walter~J. Savitch, Emily~P.
  Friedman, and Alfred~V. Aho, editors, {\em Proceedings of the 10th Annual
  {ACM} Symposium on Theory of Computing, May 1-3, 1978, San Diego, California,
  {USA}}, pages 216--226. {ACM}, 1978.
\newblock URL: \url{http://doi.acm.org/10.1145/800133.804350}, \href
  {http://dx.doi.org/10.1145/800133.804350} {\path{doi:10.1145/800133.804350}}.

\bibitem{DBLP:conf/dagstuhl/SchnoorS08}
Henning Schnoor and Ilka Schnoor.
\newblock Partial polymorphisms and constraint satisfaction problems.
\newblock In Nadia Creignou, Phokion~G. Kolaitis, and Heribert Vollmer,
  editors, {\em Complexity of Constraints - An Overview of Current Research
  Themes [Result of a Dagstuhl Seminar].}, volume 5250 of {\em Lecture Notes in
  Computer Science}, pages 229--254. Springer, 2008.
\newblock URL: \url{https://doi.org/10.1007/978-3-540-92800-3\_9}, \href
  {http://dx.doi.org/10.1007/978-3-540-92800-3\_9}
  {\path{doi:10.1007/978-3-540-92800-3\_9}}.

\bibitem{DBLP:conf/aaai/SelmanL90}
Bart Selman and Hector~J. Levesque.
\newblock Abductive and default reasoning: {A} computational core.
\newblock In Howard~E. Shrobe, Thomas~G. Dietterich, and William~R. Swartout,
  editors, {\em Proceedings of the 8th National Conference on Artificial
  Intelligence. Boston, Massachusetts, USA, July 29 - August 3, 1990, 2
  Volumes.}, pages 343--348. {AAAI} Press / The {MIT} Press, 1990.
\newblock URL: \url{http://www.aaai.org/Library/AAAI/1990/aaai90-053.php}.

\bibitem{thomas09}
M.~Thomas.
\newblock The complexity of circumscriptive inference in post's lattice.
\newblock In {\em Proc. 10th International Conference on Logic Programming and
  Nonmonotonic Reasoning}, volume 5753 of {\em Lecture Notes in Computer
  Science}, pages 209--302. Springer, 2009.

\end{thebibliography}

\section{Omitted proof details}
\subsection{Base Independence}
\begin{restatelemma}[lem:express_equality]
\begin{lemma}
	Let $\csl$ be a constraint language.
If $\csl$ is not essentially negative and not essentially positive, then $(x=y) \in \closneq{\csl}$, and $\clos{\csl} = \closneq{\csl}$.
\end{lemma}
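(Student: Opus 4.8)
The plan is to use the well-known structural characterization of Boolean relations in Schaefer's framework: a constraint language $\csl$ that is not essentially negative and not essentially positive must fail to be closed under certain unary polymorphisms or, more concretely, must contain (in its co-clone) a relation witnessing the presence of "both directions" of implication. I would first recall that "$\csl$ essentially negative" means every relation in $\csl$ is definable by a conjunction of negative clauses and unit positive clauses (equivalently, $\clos{\csl} \subseteq \IE_2$ with the extra restriction pinning it at or below $\IS{}{12}$ in the relevant part of the lattice), and dually for essentially positive; the precise clause-based descriptions are exactly those catalogued in Table~\ref{tab:bases}. The key point I want to extract is: if $\csl$ is neither, then $\closneq{\csl}$ contains a relation that, projected suitably, gives the implication relation $x \rightarrow y$ together with enough to also obtain $y \rightarrow x$, and conjoining both yields $x = y$ without using the equality relation.

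The steps, in order, would be: (1) Fix a relation $R \in \csl$ that is not equivalent to a conjunction of negative and positive-unit clauses — such an $R$ exists because $\csl$ is not essentially negative; dually fix $R' \in \csl$ not expressible by positive and negative-unit clauses. (2) Analyze $R$: not being "essentially negative" means $R$ is not invariant under the relevant idempotent operation forcing negativity; via a standard Galois-connection argument (polymorphisms vs.\ co-clones) this gives that $\closneq{\{R\}}$ contains a binary relation strictly richer than what negative clauses allow, and a finite case distinction over the possible $2$-ary projections of $R$ onto two of its coordinates shows that one of $\{x \rightarrow y\}$, $\{(0,0),(1,1),(1,0)\}$, etc., lands in $\closneq{\{R\}}$ — in particular an implication in one direction. (3) Symmetrically, from $R'$ (not essentially positive) extract the implication in the other direction, $y \rightarrow x$, inside $\closneq{\{R'\}}$. (4) Conjoin: $(x \rightarrow y) \wedge (y \rightarrow x)$ is logically $x = y$, and since both conjuncts are $\closneq{\csl}$-definable (with existential quantification but no equality), so is $x = y$; hence $(x=y) \in \closneq{\csl}$. (5) Conclude $\clos{\csl} = \closneq{\csl}$ by the remark already stated in the preliminaries: once $(x=y) \in \closneq{\csl}$, the two closure operators coincide.

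The main obstacle I expect is step (2)–(3): carefully arguing that "not essentially negative" yields an honest implication (or a relation from which one is pp-definable without equality) rather than merely some relation outside $\IE$-type clause form. This is where one must be careful that the projections do not collapse (e.g.\ onto a constant or the full relation $\{0,1\}^2$) and that no hidden use of equality sneaks in when identifying coordinates of $R$ — identifying variables is allowed, but one must check the resulting relation is still the intended binary one. I would handle this by leaning on the co-clone lattice: $\csl$ neither essentially positive nor essentially negative forces $\clos{\csl}$ to lie in a part of the lattice (above $\IS{}{00}$ roughly, or outside $\IE/\IV$ in the pertinent sense) where the bases in Table~\ref{tab:bases} explicitly contain such two-directional relations, so the finite check reduces to inspecting those base relations. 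The rest — conjoining implications to get equality, and invoking the preliminary remark for the final equality of closures — is routine.
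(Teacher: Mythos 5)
There is a genuine gap at the core of your plan, namely steps (2)--(4). You claim that from a relation witnessing the failure of essential negativity one can always extract the implication $x \rightarrow y$ inside $\closneq{\csl}$ (and dually $y \rightarrow x$ from the failure of essential positivity), and then conjoin the two. This breaks down for complementive languages: take $\csl = \{R_{\mathrm{NAE}}\}$ with $R_{\mathrm{NAE}} = \{0,1\}^3 \setminus \{(0,0,0),(1,1,1)\}$, which is neither essentially negative nor essentially positive, yet every relation in $\clos{\csl} \subseteq \IN_2$ is complementive; since the implication relation is not closed under complementation, it lies outside $\clos{\csl}$, let alone $\closneq{\csl}$. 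So no argument can produce your two implications there, even though the lemma's conclusion still holds --- the correct route in that case is to express inequality and use $(x=y) \equiv \exists z\, (x\neq z) \land (z\neq y)$. Relatedly, your appeal to a ``standard Galois-connection argument'' is misplaced: the polymorphism--co-clone Galois connection characterises $\clos{\cdot}$ (closure \emph{with} equality); the entire difficulty of this lemma is the equality-free closure $\closneq{\cdot}$, to which that machinery does not directly apply, so the existence of a suitable binary projection cannot be read off the lattice or off Table~\ref{tab:bases}.

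For comparison, the paper's proof first splits on $0$-validity, $1$-validity and complementivity, importing known equality-free expressibility results for those cases (equality directly when the language is both $0$- and $1$-valid; $(x \neq y)$ when it is complementive; $(x \land \neg y)$ when it is neither valid nor complementive), and only in the remaining situation (neither $0$- nor $1$-valid) argues by hand with witness tuples: tuples violating closure under $\land$, under $\lor$, or under the essentially negative/positive operations are used to identify the coordinates of a given relation into blocks, and the definable constraint $(t \land \neg f)$ is conjoined to fix constants, yielding $(x \neq y)$ or $(x = y)$ directly, without ever needing an implication. Your step (5), concluding $\clos{\csl} = \closneq{\csl}$ once equality is available, is fine, as is conjoining two implications when they do exist (e.g.\ in the Horn-but-not-essentially-negative case); but to repair the proposal you would need separate treatments of the complementive and the doubly valid cases, and a concrete witness-tuple construction in place of the Galois-connection step.
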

\end{restatelemma}

\begin{proof}
Let us start with a needed proposition in the proof.

\begin{proposition}\label{prop:express_equality_literature}(\cite{DBLP:journals/tocl/CreignouE014})
Let $\csl$ be a constraint language. The following is true:
\begin{enumerate}

\item If $\csl$ is not 1-valid, not 0-valid, and complementive, then $(x\neq y) \in \closneq{\csl}$ \cite[Lem.~4.6.1]{DBLP:journals/tocl/CreignouE014}.

\item If $\csl$ is not 1-valid, not 0-valid, and not complementive, then $(x\land \neg y) \in \closneq{\csl}$ \cite[Lem.~4.6.3]{DBLP:journals/tocl/CreignouE014}.

\item If $\csl$ is 1-valid, 0-valid, and not trivial, then $(x=y) \in \closneq{\csl}$ \cite[Lem.~4.7]{DBLP:journals/tocl/CreignouE014}.

\item If $\csl$ is 1-valid, not 0-valid, and not essentially positive, then $(x=y) \in \closneq{\csl}$ \cite[Lem.~4.8.1]{DBLP:journals/tocl/CreignouE014}.

\item If $\csl$ is not 1-valid, 0-valid, and not essentially negative, then $(x=y) \in \closneq{\csl}$ \cite[Lem.~4.8.2]{DBLP:journals/tocl/CreignouE014}.
\end{enumerate}
\end{proposition}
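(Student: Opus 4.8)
The plan is to prove all five parts by explicit primitive positive definitions that never use the equality relation, exploiting that $\closneq{\csl}$ is closed under three operations requiring no equality: conjunction of constraints, existential projection, and \emph{identification of variables} (writing the same variable in two argument positions of a relation from $\csl$). Hence every relation obtained from some $R\in\csl$ by identifying coordinates and then projecting away coordinates---a \emph{minor} of $R$---already lies in $\closneq{\csl}$, and the task reduces, in each part, to exhibiting a minor equal to the claimed binary or unary relation. The three closure hypotheses translate cleanly: $\csl$ is $0$-valid (resp.\ $1$-valid, complementive) exactly when every $R\in\csl$ contains the all-$0$ (resp.\ all-$1$) tuple, resp.\ is closed under coordinatewise complementation. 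Crucially, identification and projection preserve $0$- and $1$-validity (the all-$0$ and all-$1$ tuples map to themselves) and preserve complementivity, so the extractable minors are automatically constrained to a very short list of Boolean relations.

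I would first dispose of the three ``equality'' parts (3--5). For part~(3) I take a \emph{nontrivial} $R\in\csl$ and extract a binary minor; as $R$ is $0$- and $1$-valid, this minor contains $(0,0)$ and $(1,1)$, hence equals one of $\{=\}$, $x\le y$, $y\le x$, or $\{0,1\}^2$. I would show that nontriviality of $\csl$ forces some binary minor to be proper (not $\{0,1\}^2$): if it is $\{=\}$ we are finished, and if it is an implication then equality follows without equality by conjoining that minor with its coordinate-swapped copy. Parts~(4) and~(5) are mutually dual under global complementation of all relations---this swap interchanges their hypotheses---so it suffices to treat part~(4). There, $1$-validity together with non-$0$-validity already yields the constant $\{1\}$ as the full diagonal of a relation omitting the all-$0$ tuple; failure of essential positivity supplies a relation not expressible by positive clauses, and combining the two produces a binary minor from which equality (or an implication, reduced as in part~(3)) is read off.

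For the two remaining parts I would argue by the same minor analysis while tracking whether constants are definable. In part~(1) every binary minor is complementive, and a non-$0$-valid complementive binary relation omits both $(0,0)$ and its complement $(1,1)$, hence is either empty or exactly $\{(0,1),(1,0)\}=(x\neq y)$; the goal is to produce a \emph{nonempty} such minor. In part~(2) I would instead aim at the two constants: the full diagonal of a non-$0$-valid relation is a unary minor contained in $\{1\}$, and symmetrically non-$1$-validity yields a unary minor contained in $\{0\}$; once both $\{1\}$ and $\{0\}$ are secured, the target is simply $\{1\}(x)\wedge\{0\}(y)$. Here failure of complementivity is exactly what breaks the $0/1$ symmetry and lets one pin down genuine constants rather than only the symmetric disequality.

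The main obstacle in every part is not the short classification of admissible minors but the \emph{existence} of a minor of the required shape: identification and projection can collapse a witnessing relation into the full or empty relation, so one must show that the failed closure property forces a witness $R\in\csl$ whose coordinates can be identified and projected to yield \emph{exactly} the target. I expect to establish this by a structural, minimal-counterexample analysis of the witness (e.g.\ choosing two tuples realising the failure of conjunction-closure in part~(4), or of complementivity in part~(2), and isolating two coordinates on which they differ in the prescribed pattern). This existence argument is precisely the content of the cited lemmas, and carrying it out uniformly across the $0$-valid, $1$-valid, complementive, and essentially positive/negative distinctions is where the real combinatorial work lies.
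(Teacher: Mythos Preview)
The paper does not prove this proposition at all: it is stated as a citation of results from Creignou, Egly, and Schmidt (each of the five items carries an explicit reference to a lemma in that paper), and it is used purely as a black box inside the proof of Lemma~\ref{lem:express_equality}. There is therefore no ``paper's own proof'' to compare your proposal against.

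As for your sketch on its own merits: the high-level strategy---extract low-arity minors by identification and projection, classify the finitely many Boolean minors compatible with the preserved closure properties, then combine them by conjunction and variable-swapping---is the standard and correct route to these expressivity results. Two cautions. First, while identification and projection \emph{preserve} $0$-validity, $1$-validity, and complementivity, they do not preserve their \emph{failure}: a non-$0$-valid relation can have a $0$-valid (or even empty) minor, so in parts~(1) and~(2) you cannot simply say ``take a binary minor; it is non-$0$-valid''. Your final paragraph correctly flags this as the crux, but note that in part~(2) the straightforward ``diagonal of a non-$0$-valid relation gives $\{1\}$'' fails when that relation is simultaneously non-$1$-valid (the diagonal is then empty); one must instead start from a witness to non-complementivity and argue more carefully. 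Second, your proposal ultimately defers the existence arguments to ``precisely the content of the cited lemmas'', which is honest but means the write-up is a plan rather than a proof. That is fine given that the paper itself treats the proposition as imported; just be aware that the omitted combinatorics is nontrivial and case-heavy.
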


For constraint languages (relations) that are Horn ($\IE_2$), dualHorn ($\IV_2$), essentially negative ($\IS{}{12}$) or essentially positive ($\IS{}{02}$) we use the following characterisations by polymorphisms (see, e.g., the work of Creignou and Vollmer~\cite{DBLP:conf/dagstuhl/CreignouV08}). 
The binary operations of conjunction, disjunction and negation are applied coordinate-wise.
\begin{enumerate}
\item $R$ is Horn if and only if $m_1,m_2 \in R$ implies $m_1 \land m_2 \in R$.
\item $R$ is dualHorn if and only if $m_1,m_2 \in R$ implies $m_1 \lor m_2 \in R$.
\item $R$ is essentially negative if and only if $m_1,m_2,m_3 \in R$ implies $m_1 \land (m_2 \lor \neg m_3) \in R$.
\item $R$ is essentially positive if and only if $m_1,m_2,m_3 \in R$ implies $m_1 \lor (m_2 \land \neg m_3) \in R$.
\end{enumerate}

Now we start with the proof of the lemma.
We make a case distinction according to whether $\csl$ is 1- and/or 0-valid.
\begin{description}
	\item[1-valid and 0-valid.] This case follows immediately from Prop.~\ref{prop:express_equality_literature}, 3rd item.
	\item[1-valid and not 0-valid.] This case follows immediately from Prop.~\ref{prop:express_equality_literature}, 4th item.
	\item[not 1-valid and 0-valid.] This case follows immediately from Prop.~\ref{prop:express_equality_literature}, 5th item.
	\item[not 0-valid and not 1-valid.] We make another case distinction according to whether $\csl$ is Horn and/or dualHorn.
	\begin{description}
\item[not Horn and not dualHorn.] It suffices here to show that inequality $(x \neq y)$ can be expressed, since $(x=y) \equiv \exists z(x\neq z) \land (z\neq y)$. If $\csl$ is complementive, we obtain by Prop.~\ref{prop:express_equality_literature}, 1st item, that $(x\neq y) \in \closneq{\csl}$. Therefore suppose now that $\csl$ is not complementive.

Let $R$ be a relation that is not Horn. Then there are $m_1,m_2 \in R$ such that $m_1 \land m_2 \notin R$. For $i,j \in \{0,1\}$, set $V_{i,j} = \{x \mid x \in V,\ m_1(x) = i,\ m_2(x) = j\}$.
Observe that the sets $V_{0,1}$ and $V_{1,0}$ are nonempty (otherwise $m_1 = m_1 \land m_2$ or $m_2 = m_1 \land m_2$, a contradiction). Denote by $C$ the $\{R\}$-constraint $C = R(x_1, \dots, x_k)$. Set $M_1(u,x,y,v) = C[V_{0,0}/u, V_{0,1}/x, V_{1,0}/y, V_{1,1}/v]$. It contains $\{0011, 0101\}$ (since $m_1, m_2 \in R$) but it does not contain $0001$ (since $m_1 \land m_2 \notin R$).

Let $R$ be a relation that is not dualHorn. Then there are $m_3,m_4 \in R$ such that $m_3 \lor m_4 \notin R$. For $i,j \in \{0,1\}$, set $V'_{i,j} = \{x \mid x \in V,\ m_3(x) = i,\ m_4(x) = j\}$. Observe that the sets $V'_{0,1}$ and $V'_{1,0}$ are nonempty (otherwise $m_3 = m_3 \lor m_4$ or $m_4 = m_3 \lor m_4$, a contradiction). Set $M_2(u,x,y,v) = C[V'_{0,0}/u, V'_{0,1}/x, V'_{1,0}/y, V'_{1,1}/v]$. It contains $\{0011, 0101\}$ (since $m_3, m_4 \in R$) but it does not contain $0111$ (since $m_3 \lor m_4 \notin R$). Finally consider the $\{R, (t \land \neg f)\}$-formula
$$M(f,x,y,t) = M_1(f,x,y,t) \land M_2(f,x,y,t) \land (t \land \neg f).$$
One verifies that it is equivalent to $(x\neq y) \land (t \land \neg f)$. Due to Prop.~\ref{prop:express_equality_literature}, 2nd item, $(t \land \neg f)$ is expressible as an $S$-formula, and therefore so is $M(f,x,y,t)$.
Since $\exists t,f M(f,x,y,t)$ is equivalent to $(x\neq y)$, we obtain $(x\neq y) \in \closneq{\csl}$.
\item[Horn.]
Let $R$ be a relation that is not essentially negative, but Horn. Then there are $m_1,m_2,m_3 \in R$ such that $m_4 := m_1 \land (m_2 \lor \neg m_3) \notin R$. Since $R$ is Horn, $m_5 := m_1 \land m_2 \in R$. For $i,j,k \in \{0,1\}$, set $V_{i,j,k} = \{x \mid x \in V,\ m_1(x) = i,\ m_2(x) = j,\ m_3(x) = k\}$. Observe that the sets $V_{1,0,0}$ and $V_{1,0,1}$ are nonempty (otherwise $m_5 = m_4$ or $m_1 = m_4$, a contradiction). Denote by $C$ the $\{R\}$-constraint $C = R(x_1, \dots, x_k)$.
Set $M(f,x,y,t) = C[V_{0,0,0}/f, V_{0,0,1}/f, V_{0,1,0}/f, V_{0,1,1}/f, V_{1,0,0}/x, V_{1,0,1}/y, V_{1,1,0}/t, V_{1,1,1}/t]$.

It contains $\{00001111,00000011\}$ (since $m_1, m_5 \in R$) but it does not contain $00001011$ (since $m_4 \notin R$). Finally consider the $\{R, (t\land \neg f)$-formula
$$M'(f,x,y,t) = M(f,x,y,t) \land M(f,y,x,t) \land (t \land \neg f)$$
One verifies that it contains $\{0111, 0001\}$ but not $0101$ and neither $0011$. Therefore it is equivalent to $(x=y) \land (t \land \neg f)$. Due to Prop.~\ref{prop:express_equality_literature}, 2nd item, $(t \land \neg f)$ is expressible as an $S$-formula, and therefore so is $M'(f,x,y,t)$. Since $\exists t,f M'(f,x,y,t)$ is equivalent to $(x = y)$, we obtain $(x = y) \in \closneq{\csl}$.
\item[dualHorn.] Analogously to the Horn case, using the property that $\csl$ is not essentially positive, but dualHorn.\qedhere
\end{description}
\end{description}

\end{proof}

\begin{restatelemma}[lem:independence]
\begin{lemma}
Let $\csl, \csl'$ be two constraint languages such that $\csl'$ is neither essentially positive nor essentially negative. 
Let $\abds{}{} \in \{\abd{}{}, \abdeq{}{}, \abdle{}{}\}$. If $\csl \subseteq \clos{\csl'}$, then
$\abds{\csl}{} \preduction \abds{\csl'}{}$. 
\end{lemma}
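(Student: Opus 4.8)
The plan is to turn the classical algebraic reduction into a \emph{size-preserving} one by using Lemma~\ref{lem:express_equality} to avoid equality constraints altogether. Since $\csl'$ is neither essentially negative nor essentially positive, Lemma~\ref{lem:express_equality} yields $\clos{\csl'} = \closneq{\csl'}$; hence $\csl \subseteq \clos{\csl'} = \closneq{\csl'}$, so every relation $R \in \csl$ of arity $n$ admits a representation $R(x_1,\dots,x_n) \equiv \exists \bar y\, \phi_R(x_1,\dots,x_n,\bar y)$ with $\phi_R$ an $\csl'$-formula using \emph{no} equality, and of constant size because $\csl$ is finite.

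Given an instance $\langle V, H, M, \KB\rangle$ of $\abds{\csl}{}$ (together with the bound $s$ in the $\abdeq{}{}$ and $\abdle{}{}$ cases), I would replace every constraint $R(\bar x)$ of $\KB$ by $\phi_R(\bar x, \bar y)$ with a fresh tuple $\bar y$ chosen anew for each occurrence; collecting all fresh variables into a set $Y$ disjoint from $V$, the leading existential quantifiers can simply be dropped, so the result $\KB'$ is an $\csl'$-formula over $V' := V \cup Y$. The reduction outputs $\langle V', H, M, \KB'\rangle$ (with the same $s$). This is computable in polynomial time, and it leaves $|H|$, $|M|$, and $|E|$ untouched, so it is even an $\fptreduction$ for each of those parameterisations, as noted in the remark following the statement.

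For correctness one records the equivalence $\KB \equiv \exists Y\, \KB'$ together with $\var(H), \var(M) \subseteq V$ and $Y \cap V = \emptyset$. Then, for every $E \subseteq H$, I would check: (i) $E \land \KB$ is satisfiable iff $E \land \KB'$ is --- a model of $E \land \KB'$ restricted to $V$ satisfies $E \land \KB$, and conversely any model of $E \land \KB$ extends to the fresh variables $Y$ via the pp-definitions; and (ii) $E \land \KB \models M$ iff $E \land \KB' \models M$ --- by the same restriction/extension argument, together with the observation that $M$ refers only to variables of $V$, so projecting out $Y$ changes neither side of the entailment. Hence $E$ is an explanation for $\langle V, H, M, \KB\rangle$ iff it is one for $\langle V', H, M, \KB'\rangle$, and the cardinality condition $|E| \le s$ (resp.\ $|E| = s$) carries over verbatim, settling all three variants simultaneously.

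The main obstacle I anticipate is item (ii): one must be careful that adding the auxiliary variables $Y$ neither introduces a model of $E \land \KB'$ that falsifies $M$ nor spoils the entailment $E \land \KB \models M$; both directions hinge on $\var(M) \subseteq V$ and on $\KB \equiv \exists Y\,\KB'$. The whole point, though, is that equality is genuinely dispensable here by Lemma~\ref{lem:express_equality} --- this is precisely why the naive algebraic reduction (which would merge variables to eliminate equality and thereby disturb $|H|$ and $|E|$) is avoided. The remaining bookkeeping (freshness of the $Y$'s, constant size of each $\phi_R$, polynomiality) is routine.
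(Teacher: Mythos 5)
Your proposal is correct and follows essentially the same route as the paper: use Lemma~\ref{lem:express_equality} to get $\clos{\csl'}=\closneq{\csl'}$, replace each constraint by its equality-free $\csl'$-implementation with fresh existential variables per occurrence, and observe that since $\KB \equiv \exists Y\,\KB'$ and $H$, $M$ live over the original variables, the two instances have exactly the same explanations, so all three size variants transfer. Your spelled-out satisfiability/entailment equivalence is just a more explicit version of the paper's closing observation.
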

\end{restatelemma}
\begin{proof}
We may consider $\KB$ as a single $\csl$-formula.
We will transform $\KB$ into a corresponding $\csl'$-formula by replacing every $\csl$-constraint by a corresponding $\csl'$-formula. For this purpose we first construct a look-up table mapping any $R \in \csl$ to an $\csl'$-formula $F_R$ as follows.
Since $R \in \csl \subseteq \clos{\csl'}$, and by Lemma~\ref{lem:express_equality} $\clos{\csl'} = \closneq{\csl'}$, we have that $R \in \closneq{\csl'}$.
Thus, we have by definition an $\csl'$-formula $\phi$ such that $R(x_1, \dots, x_n) \equiv \exists y_1 \dots \exists y_m \phi(x_1, \dots, x_n,y_1, \dots, y_m)$, where we can assume the $x_i$'s and $y_i$'s to be $n+m$ distinct variables.
We obtain $F_R$ by removing the existential quantifiers.
Note that the computation of the look-up table takes constant time, since $S$ is finite and not dependent on any input. We are now ready to transform $\KB$ into an appropriate $\csl'$-formula by applying the following replacement procedure as long as applicable.

\begin{itemize}
\item Be $C_R = R(x_1, \dots, x_n)$ an $\csl$-constraint (now the $x_i$'s are not necessarily $n$ distinct variables). Replace $C_R$ by its corresponding $\csl'$-formula $F_R(x_1, \dots, x_n, y_1, \dots, y_m)$, where the variables $y_1, \dots, y_m$ are fresh variables that are unique to $C_R$ (they will not be used for any other constraint replacement).
\end{itemize}

This transformation procedure introduces additional variables. We show that their total number is polynomially bounded. Denote by $m_R$ the number of $y_i$'s added while replacing $C_R$ (denoted $m$ in the above procedure). One observes that the total number of additional variables is bounded by the number of original constraints times the maximum of all $m_R$.
Since $m_R$ is only dependent on $R$, it is constant. Since $S$ is finite, the maximum of all $m_R$ is constant. We conclude that the transformation can be achieved in polynomial time.

It remains to observe that the so obtained abduction instance has exactly the same solutions as the original instance.
\end{proof}

\begin{restatelemma}[lem:base-ind-ess-positive]
\begin{lemma}
Let $\csl, \csl'$ be two constraint languages such that $\csl'$ is essentially positive. 
Let $\abds{}{} \in \{\abd{}{}, \abdeq{}{}, \abdle{}{}\}$. 
If $\csl \subseteq \clos{\csl'}$, then $\abds{\csl}{} \preduction \abds{\csl'}{}$. 
\end{lemma}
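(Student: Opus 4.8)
The plan is to mimic the structure of the proof of Lemma~\ref{lem:independence}, but since $\csl'$ essentially positive means we cannot rely on Lemma~\ref{lem:express_equality} to get $\clos{\csl'} = \closneq{\csl'}$, we must handle the equality constraints that appear in a primitive positive definition explicitly. First I would fix a look-up table: for each $R \in \csl \subseteq \clos{\csl'}$, choose an $\csl' \cup \{=\}$-formula $\phi_R$ with $R(\vec x) \equiv \exists \vec y\, \phi_R(\vec x,\vec y)$, and write $\phi_R$ as $\psi_R \land \eta_R$, where $\psi_R$ is the conjunction of the genuine $\csl'$-constraints and $\eta_R$ is the conjunction of the equality atoms. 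Since $\csl$ is finite, this table is of constant size. Replacing each $\csl$-constraint $C_R$ in $\KB$ by a fresh copy of $\psi_R \land \eta_R$ (with fresh, constraint-unique existential variables $\vec y$) yields, as in Lemma~\ref{lem:independence}, an $\csl' \cup \{=\}$-formula $\KB''$ of polynomial size with exactly the same set of models when restricted to the original variables, hence the same abduction solutions; so it remains only to eliminate the equality atoms.

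The key step is the equality-elimination, and here the essentially positive structure is what makes it work while preserving solution sizes. The idea advertised in the paper is to remove the equality clauses while maintaining the size counts and the satisfiability property. Concretely, treat $=$ as an equivalence relation: take the reflexive–transitive closure of the equality atoms in $\KB''$ on $\var(\KB'')$, obtaining equivalence classes, and pick in each class a canonical representative. The subtlety is the choice of representative: whenever a class contains a variable of $H$ (a hypothesis), or a variable of $M$ (a manifestation), I would make that variable the representative — and if a class contains several $H$- or $M$-variables, I would argue that they can be forced equal and collapse accordingly, or that such a situation can be preprocessed away. Then substitute every variable by its representative throughout $\KB''$ and drop all equality atoms; call the result $\KB'$, an $\csl'$-formula. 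The hypothesis and manifestation sets $H, M$ are unchanged as sets (their elements are their own representatives), so $|H|$, $|M|$, and hence any candidate explanation size, are preserved, which is exactly what is needed for the reduction to carry over to $\abdeq{}{}$ and $\abdle{}{}$ and (per the Remark) to the parameterised versions.

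What remains is soundness and completeness: a set $E \subseteq H$ is an explanation for $\langle V,H,M,\KB\rangle$ iff it is one for the transformed instance over $\csl'$. The forward direction is routine — a model of $E \land \KB$ respects the equalities, so it descends to a model of $E \land \KB'$, and implication of $M$ is preserved because $M \subseteq$ representatives. The reverse direction is where essential positivity is used: given a model $\sigma'$ of $E \land \KB'$ one must lift it to a model $\sigma$ of $E \land \KB''$ that assigns equal values within each equivalence class; the natural lift sets every variable to the value of its class representative, and one checks this still satisfies each $\psi_R$-block. Here the polymorphism characterisation of essentially positive relations (a relation closed under $m_1 \lor (m_2 \land \neg m_3)$, equivalently expressible by positive clauses and negative literals) guarantees that identifying variables — which at the level of the relation is a projection/renaming that only ever merges columns — keeps us inside the relation; and crucially, because the only negative information is in unit clauses over single variables, collapsing a class does not create a conflict unless the original instance was already unsatisfiable. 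Implication $E \land \KB'' \models M$ transfers back because $M$-variables are representatives, so the manifestation value is read off directly.

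The main obstacle I expect is precisely this handling of the equivalence classes when a class contains more than one element of $H$, or contains an $M$-variable together with other $H$-variables: one must be careful that collapsing does not silently change which subsets of $H$ are consistent with $\KB$ (this is exactly the phenomenon flagged in the proof of Lemma~\ref{lem:ABD-leq-E-IS12}, that $x = y \in \KB$ does \emph{not} force both $x,y \in E$). The cleanest way around it is a preprocessing pass: if the equality-closure of $\KB''$ forces two distinct hypotheses $h_1, h_2$ to be equal, or forces a hypothesis equal to a manifestation variable, analyse the few possible cases directly (either both must be in $E$, or neither can be, or the instance is trivial), adjusting $H$, $M$, and $s$ in a size-controlled way before performing the substitution — after which every equivalence class contains at most one "special" variable and the representative choice is unambiguous. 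Verifying that this bookkeeping is sound for all three problem variants simultaneously, and that it only changes the relevant parameters by a constant, is the technical heart of the argument.
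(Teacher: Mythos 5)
There is a genuine gap, and it sits exactly where you locate the ``technical heart'': the elimination of an equality between two hypothesis variables. Your plan is to collapse each equality-equivalence class to a single representative, and to repair the problematic classes (two or more $H$-variables, or an $H$- together with an $M$-variable) by a preprocessing case analysis claiming that ``either both must be in $E$, or neither can be, or the instance is trivial''. That trichotomy is false, by the very phenomenon you quote: a clause $h_1=h_2\in\KB$ does \emph{not} force both variables into an explanation, so there may simultaneously exist explanations containing exactly one of $h_1,h_2$ and explanations containing both. After merging the class these explanations become indistinguishable, so the exact-size question for $\abdeq{\csl}{}$ is not preserved; recovering it would require a nontrivial padding argument over input-dependent class sizes, and in particular the claim that $H$, $M$ and $s$ only change ``by a constant'' cannot hold, since the equivalence classes grow with the input. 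As written, the reverse direction of your reduction for $\abdeq{\csl}{}$ does not go through.

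The paper's proof avoids this obstacle by never identifying two hypothesis variables. For an equality $x_i=x_j$ with at most one endpoint in $H$ it substitutes one variable for the other, as you do; but when both endpoints lie in $H$ it simply \emph{deletes} the equality clause, adding the negative unit $\neg x_j$ (respectively $\neg x_i$) to $\KB$ whenever $\neg x_i$ (respectively $\neg x_j$) already occurs, and then proves that the set of explanations $E\subseteq H$ is literally unchanged. Essential positivity is used exactly where you expected, but differently: since every non-unit clause of an essentially positive knowledge base is positive, any model of the reduced instance can be re-extended so that the two formerly equated hypotheses receive the same value (both $1$ unless a negative unit forbids it, in which case neither belongs to $E$ and both can be set to $0$), so consistency with the dropped equality is restored for free. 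Because no variables in $H$ are merged and no cardinalities change, one construction covers $\abd{\csl}{}$, $\abdle{\csl}{}$ and $\abdeq{\csl}{}$ at once (the unrestricted case is in fact taken from Nordh and Zanuttini), whereas your collapsing scheme would still need the missing size bookkeeping to handle the exact-size variant.
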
	
\end{restatelemma}
\begin{proof}
	The general case is due to Nordh and Zanuttini~\cite[Lemma~22]{DBLP:journals/ai/NordhZ08}.
	The result for `$\leq$' is because of the following.
	Removing equality clauses and deleting the duplicating occurrences of variables only decreases the size of $H$. 
	
	Now we proceed with proving the case for `$=$'.
	We show that for any $\abdeq{\csl \cup \{=\}}{}$ instance $\abdformat{V, H, M, \KB}{}$, there is an $\abdeq{\csl}{}$-instance $\abdformat{V_1, H_1, M_1, \KB_1}{}$ such that the former has an explanation if and only if the later has one.
	The proof uses the fact that the only negative clauses in $\KB$ are of size $1$.
	Since the existence of a solution is invariant under the equality clauses, we only need to assure that the size of a solution is also preserved. 
	For each clause $x_i= x_j \in \KB$, we do the following:
	 \begin{enumerate}
	 	\item If at most one of the $x_i,x_j$ appears in $H$, remove the clause $x_i = x_j$ from $\KB$, replace $x_j$ by $x_i$ everywhere in $\KB \cup H\cup V \cup M$ (and delete $x_j$).
	 	\item If both $x_i, x_j$ are from $H$. Then,
	 		\begin{itemize}
		 		\item if $\neg x_i$ (resp., $\neg x_j$) appears in $\KB$, we add $\neg x_j$ (resp., $\neg x_i$) to $\KB$ and remove the clause $x_i = x_j$ from $\KB$.
		 		\item otherwise $\neg x_i,\neg x_j\notin\KB$ and then simply remove the clause $x_i = x_j$ from $\KB$ and do not remove any variable.
		 	\end{itemize}
	 \end{enumerate}

The problem caused by equality clauses is the following.
If we remove a variable which is also a hypothesis, then removing this from $H$, owing to some equality constraint, may not preserve the size of solutions.
Furthermore, this problem occurs only when an equality clause contains both variables from $H$ (case~2.), since otherwise the size of $H$ is not changed (case~1.).
We prove the following correspondence between the solutions of the two instances.
\begin{claim}
	A subset $E\subseteq H$ is an explanation for $\abdeq{\csl \cup \{=\}}{}$ if and only if $E$ is an explanation for $\abdeq{\csl}{}$.
\end{claim}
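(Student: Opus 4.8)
The plan is to prove the Claim by establishing that the transformation described in the lemma preserves, for every candidate set $E \subseteq H$, both the consistency condition ($E \land \KB$ satisfiable) and the entailment condition ($E \land \KB \models M$), and moreover does not change which subsets of $H$ are available (i.e., the bijection between subsets of the old $H$ and the new $H_1$ respects cardinality). First I would fix one equality clause $x_i = x_j \in \KB$ at a time and argue that a single application of the relevant case (1.\ or 2.) yields an equivalent instance in the appropriate sense; the general claim then follows by induction on the number of equality clauses, since each step strictly decreases that count and the remaining knowledge base is still a $\csl \cup \{=\}$-formula whose only negative clauses are unit clauses (this invariant must be checked, but it is immediate since we only ever \emph{add} unit negative clauses and \emph{remove} equality clauses).

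The core of the argument splits along the two cases of the construction. In case~1.\ (at most one of $x_i, x_j$ lies in $H$), we identify $x_i$ and $x_j$: any model of the old $\KB$ must satisfy $x_i = x_j$, so it descends to a model of the new $\KB_1$ under the renaming, and conversely any model of $\KB_1$ lifts back by copying the value of $x_i$ onto $x_j$; since at most one of the two was a hypothesis, the hypothesis set is merely renamed, not shrunk, so $|E|$ is unaffected and $E \subseteq H \leftrightarrow E \subseteq H_1$ (with the obvious renaming). In case~2.\ (both $x_i, x_j \in H$) we do \emph{not} delete a variable, so $H_1 = H$ and cardinalities are literally preserved; here the point is that an $E \subseteq H$ with $E \land \KB$ consistent already forces $x_i \leftrightarrow x_j$ among the relevant models — either because both $x_i, x_j \in E$, or because neither is, since the only way an equality clause interacts with a singleton hypothesis is through a unit negative clause $\neg x_i$ or $\neg x_j$, which we have explicitly copied over to $x_j$ resp.\ $x_i$ in $\KB_1$. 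I would spell out that if $\neg x_i \in \KB$ then no consistent $E$ contains $x_i$, and by the equality also none contains $x_j$ in the old instance, matching exactly the behaviour of $\KB_1$ where now $\neg x_j$ is present explicitly; and if neither negative unit clause is present, dropping $x_i = x_j$ changes neither satisfiability of $E \land \KB_1$ nor the entailment of $M$, because $M \subseteq \var(H) \cup \var(\KB)$ and the equality clause had no further consequences.

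For the entailment direction I would argue $E \land \KB \models M$ $\Leftrightarrow$ $E \land \KB_1 \models M$ by contraposition: a model of $E \land \KB_1$ violating some $m \in M$ can be extended/renamed to a model of $E \land \KB$ violating the same $m$, using that the fresh identification or the added unit clause only constrains $x_j$ (resp.\ $x_i$), and $m$ is a variable, so its truth value is untouched. The main obstacle — and the place I would be most careful — is precisely case~2.\ with a negative unit clause: one must verify that adding $\neg x_j$ (while keeping $\neg x_i$ and dropping $x_i = x_j$) does not \emph{over}-constrain the new instance relative to the old one, i.e., that no set $E$ that was a valid explanation before becomes inconsistent now. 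This is fine because in the old instance $\neg x_i \in \KB$ together with $x_i = x_j \in \KB$ already entailed $\neg x_j$, so every model of the old $\KB$ was a model of the new $\KB_1$; the converse inclusion of models is what could fail, but it does not matter for the explanation property since we only need the \emph{same} $E$'s to work, and the set of models of $\KB_1$ restricted to $\var(H) \cup \var(M)$ agrees with that of $\KB$ on everything the abduction conditions see. Wrapping up, I would note that the case distinction is exhaustive, each transformation step is polynomial-time, the total number of steps is bounded by $|\KB|$, and the parameter $|M|$ (and indeed $|H|$, $|E|$) is unchanged, so the reduction is also an $\fptreduction$-reduction as claimed in the surrounding remark.
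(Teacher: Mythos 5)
Your overall architecture (split along the two cases of the construction, handle consistency and entailment separately, note the reduction is polynomial and parameter-preserving) matches the paper's, and your case~1 renaming argument and the case~2 subcase with a unit negative clause are fine. The genuine gap is in case~2 without unit negative clauses. Two of your key assertions there are not true as stated. First, consistency of $E \land \KB$ does \emph{not} force ``both $x_i,x_j \in E$ or neither'': $E$ is an arbitrarily chosen subset of $H$, and $E=\{x_i\}$ with $x_j\notin E$ is perfectly consistent with $\KB$ even though $\KB$ contains $x_i = x_j$. Second, the dropped equality can very well have ``further consequences'', namely for the entailment condition: since $H\cap M$ need not be empty (and $x_j$ may also be tied to $M$ through other equalities), one can have $x_i\in E$ and $x_j\in M$, in which case $E\land\KB\models x_j$ holds only because of the clause $x_i=x_j$ that you delete. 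Your contraposition step (``a model of $E\land\KB_1$ violating $m$ lifts to a model of $E\land\KB$ violating the same $m$, since the truth value of $m$ is untouched'') breaks exactly when $m\in\{x_i,x_j\}$: to satisfy the restored equality you must either raise $x_j$ (then $m$ is no longer violated) or lower $x_i$ (forbidden, as $x_i\in E$). For the same reason your claim that the models of $\KB_1$ restricted to $\var(H)\cup\var(M)$ agree with those of $\KB$ is false; dropping the equality strictly enlarges the model set on precisely these variables, so the ``abduction conditions'' do see the difference.

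You are also missing the ingredient the paper's argument actually runs on: essential positivity. Because every non-unit clause of an $\IS{}{02}$-knowledge base is positive, a satisfying assignment of $E\land\KB_1$ can be repaired to satisfy a removed equality $x_i=x_j$ by setting both variables to $1$ when no unit negative is present (and both are forced to $0$ when the copied unit negatives are present); this is what gives the direction ``$E\land\KB_1$ consistent $\Rightarrow$ $E\land\KB$ consistent'', and it is nowhere in your plan — your ``dropping $x_i=x_j$ changes neither satisfiability nor entailment'' is exactly the statement that needs proof. Likewise, the interaction of equality clauses with the manifestation set has to be addressed explicitly (the paper does this for equalities between manifestations in the backward entailment step), rather than dismissed via ``no further consequences''. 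As it stands, the case~2 part of your argument would not survive an instance such as $\KB=\{x_i=x_j\}$, $H=\{x_i,x_j\}$, $M=\{x_j\}$, $E=\{x_i\}$, so this portion needs to be reworked along the lines above before the induction over equality clauses (which is otherwise a reasonable way to organise the proof) can be carried out.
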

\begin{claimproof}
	``$\implies$'': 
	If $E\land \KB$ is consistent, we prove that $E\land \KB_1$ is also consistent.
	Note that $\KB_1 \subseteq \KB$, except if $\neg x_j\in \KB_1$ for some $x_j$. 
	This implies that $x_j\not \in E$ due to the reason that $x_i = x_j$ and $\neg x_i \in \KB$.
	Finally, $M_1\subseteq M$ and $E$ is an explanation for $M$ implies $E$ is also an explanation for $M_1$.
	
	``$\impliedby$'':
	Suppose that $E \land \KB_1$ is consistent and let $\theta$ be a satisfying assignment.
	We consider each equality constraint separately and prove that $E \land \KB$ is consistent.
	In the first case, for each $x_i, x_j$ such that at most one (say $x_i$) appears in $H$.
	If $x_i \in E \subseteq H$ then $\neg x_j \not \in\KB_1$ since this would imply that $\neg x_i \in \KB_1$ and $x_i \not \in E$. 
	Consequently, $E \land \KB_1 \land (x_i=x_j)$ is consistent (by extending $\theta$ to $\theta(x_i)= 1 =\theta(x_j)$).
	On the other hand, if $\neg x_j  \in\KB_1$ then $\neg x_i  \in\KB_1$ and $x_i \not \in E$.
	As a result, $\theta$ extended to $\theta(x_i)= 0 =\theta(x_j)$ is a satisfying assignment.
	In the second case, if both $x_i, x_j \in H$ then we also have two sub-cases based on whether $\neg x_i \in \KB_1$ or not.
	If $\neg x_i \in \KB_1$ then due to case~2, we have $\neg x_j\in \KB_1$ and this implies $x_i\not \in E$.
	As a consequence, $E \land \KB_1 \land (x_i=x_j)$ is consistent by extending $\theta$ to $\theta(x_i)= 0 =\theta(x_j)$. 
	In the sub-case when both $x_i, x_j \in H$ and $\neg x_j \not \in \KB_1$ then mapping $\theta(x_i)=1 = \theta(x_j)$ satisfies $E\land \KB_1 \land (x_i= x_j)$.
	This is because all the non-unit clauses in $\KB_1$ are positive. 
	Since this is true for all the equality clauses it shows that  $E\land\KB$ is consistent.
	
	For entailment, if for some $m_i, m_j \in M$ $(m_i = m_j) \in \KB$ and $m_j \not \in M_1$, that is $M_1\subsetneq M$.
	Since $E\land \KB_1$ is consistent and entails $M_1$, we have $E\land \KB_1 \land (m_i=m_j)$ is also consistent (due to arguments for consistency) and entails $M_1 \cup \{m_j\}$.
	This completes the proof in this direction and settles the claim.
\end{claimproof}

Finally, the above reduction can be computed in polynomial time because both steps are applied once for each equality clause and each step takes polynomial time.
This shows the desired reduction between $\abdeq{\csl \cup \{=\}}{}$ and $\abdeq{\csl}{}$.
\end{proof}

\subsection{Fixed-parameter tractable cases}
\begin{restatelemma}[lem:ABD-IS02]
\begin{lemma}
	The classical problems \abdeq{\csl}{} and \abdle{\csl}{} are in \P for $\clos\csl\subseteq\IS{}{02}$.
\end{lemma}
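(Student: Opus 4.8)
The plan is to handle the two problems $\abdeq{\csl}{}$ and $\abdle{\csl}{}$ for $\clos\csl\subseteq\IS{}{02}$ by first reducing to a canonical normal form of the knowledge base and then giving a direct polynomial-time procedure. By base independence (Lemma~\ref{lem:base-ind-ess-positive}), it suffices to treat the concrete base of $\IS{}{02}$; every $\csl$-constraint there is equivalent to a conjunction of positive clauses together with \emph{unit negative} clauses (i.e.\ clauses $\neg x$). So we may assume $\KB$ is a set of positive clauses plus a set $N$ of unit negative clauses. The key structural observations are: (i) a positive clause can never be used to \emph{derive} anything new beyond what is already forced, because setting more variables to $0$ can only be harmful to entailment; and (ii) the only way $E\land\KB$ can be inconsistent is if some variable is simultaneously forced to $1$ (by being in $E$, or by unit propagation on $\KB$ once the elements of $E$ are set) and to $0$ (by lying in $N$). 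So the ``interesting'' content of $\KB$ for deriving $M$ is essentially: which manifestations appear as unit positive clauses of $\KB$, and which appear in $H$.

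First I would run unit propagation: set every $h\in E$ to $1$ (for a candidate $E$), propagate through the positive clauses (a positive clause all of whose literals but one are already $0$ — which can only come from $N$ — forces the last one to $1$; but since initially nothing is $0$ except the variables in $N$, and $N$ is fixed, propagation is extremely limited), and check consistency against $N$. Then the entailment $E\land\KB\models M$ holds iff every $m\in M$ is either a positive unit clause of $\KB$ or is in the closure of $E$ under the (trivial) propagation — but because positive clauses give no real propagation, this collapses to: $m$ is a unit positive clause of $\KB$, or $m\in E$. Hence, writing $P$ for the set of variables occurring as positive unit clauses of $\KB$, an explanation $E$ must contain $M\setminus P$, must be a subset of $H$, and must avoid creating an $N$-conflict, i.e.\ $E\cap\{x : \neg x\in\KB\}=\emptyset$ and $(M\setminus P)\cap\{x:\neg x\in\KB\}=\emptyset$ as well (otherwise no explanation exists). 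This makes $E_{\min}:=M\setminus P$ the unique minimum explanation when one exists, exactly mirroring the argument in Lemma~\ref{lem:ABD-leq-E-IS12}.

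With this in hand both problems are decided in polynomial time. For $\abdle{\csl}{}$: check $M\setminus P\subseteq H$, check $M$ is consistent with $\KB$ (no element of $M$ is among the $\neg x$ unit clauses, and $\KB$ itself is consistent), and check $|M\setminus P|\le s$; answer yes iff all hold. For $\abdeq{\csl}{}$: we additionally need to pad $E_{\min}$ up to size exactly $s$ by adding elements of $H$ that do not occur negatively in $\KB$; so the answer is yes iff the $\abdle{}{}$ conditions hold with $|M\setminus P|\le s$ and moreover $s\le |\{h\in H : \neg h\notin\KB\}|$ (the number of ``safe'' hypotheses available, which remain harmless to add because positive clauses are monotone in the $1$'s). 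All of these checks are clearly polynomial-time computable. I expect the main obstacle to be purely expository: being careful that positive clauses genuinely contribute no derivations and verifying that adding extra safe hypotheses preserves both consistency and entailment — once that monotonicity is pinned down, the rest is bookkeeping. The base independence step (stripping equality) is supplied by Lemma~\ref{lem:base-ind-ess-positive}, so it is not a real obstacle here.
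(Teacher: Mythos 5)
There is a genuine gap in the step where you collapse entailment to ``$m$ is a unit positive clause of $\KB$ or $m\in E$''. Positive clauses of width $\geq 2$ \emph{do} interact with the negative unit clauses: from $(x\lor y)$ and $\neg x$ the knowledge base alone entails $y$. Concretely, take $\csl=\{x\lor y,\ \neg x\}$ (so $\clos\csl\subseteq\IS{}{02}$), $\KB=\{x\lor y,\ \neg x\}$, $H=\emptyset$, $M=\{y\}$, $s=0$: this is a yes-instance of both $\abdle{\csl}{}$ and $\abdeq{\csl}{}$ (take $E=\emptyset$, which is consistent with $\KB$ and entails $y$), yet your procedure computes $P=\emptyset$, finds $M\setminus P\not\subseteq H$, and rejects. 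You even describe the responsible mechanism (``a positive clause all of whose literals but one are already $0$ \dots forces the last one to $1$'') but then discard it as ``no real propagation''; the propagation is indeed limited (a single round, driven only by the negative units $N$, and it never produces new negative units), but it is not empty, and it is exactly what your syntactic $P$ misses. The paper's proof is organised around this point: it first closes $\KB$ under unit propagation -- deliberately \emph{not} propagating positive literals over $H$ nor negative literals over $M$, so that the test $N\neq\emptyset$ still detects inconsistency with $M$ and the count $|H'|$ stays correct for the exact-size variant -- and only then takes $P$ to be the positive units of the propagated $\KB'$.

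Once $P$ is redefined as the set of variables forced to $1$ by $\KB$ alone (the propagation closure), the remainder of your argument is sound and essentially the paper's: $M\setminus P$ is the unique minimum explanation, consistency reduces to $\KB$ being satisfiable and $M\cap\{x : \neg x\in\KB\}=\emptyset$, and the exact-size case is handled by padding with ``safe'' hypotheses; since no negative unit can ever be derived beyond those present, $\neg h\notin\KB$ is the right safety test, and your condition $s\leq|\{h\in H : \neg h\notin\KB\}|$ plays the same role as the paper's check $|H'|\geq s$. The use of Lemma~\ref{lem:base-ind-ess-positive} to strip equality constraints is also fine. So the flaw is local, but as stated the algorithm is incomplete and the proof does not go through without the propagation fix.
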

\end{restatelemma}
\begin{proof}
We start with $\abdeq{\csl}{}$.
Denote by $\KB', H',M'$ the result of applying the unit propagation on those literals $y$ such that $y \in \text{Lit}(\KB) \backslash (H^{+} \cup M^{-})$.
Recall that for a set $Y$ of literals, $Y^{+}$ (resp., $Y^{-}$) denotes the set of positive (negative) literals formed upon $Y$. 
In unit propagation, for a unit clause $u$, any clause containing $u$ can be deleted and delete in any clause $\sim\!u$, where $\sim\!u=x$ if $u=\lnot x$ is a negative literal and $\sim\!u=\lnot x$ if $u=x$ is a positive literal.
Note that literals $y \in H^{+}\cup M^{-}$ (that is, $y\in H$ or $y=\neg m$ with $ m \in M$) are excluded from this rule as mentioned above.
The reason for this choice is as follows. 
If $\neg m \in \KB$ for some $m\in M$ then removing $m$ from $\KB\cup M$ transforms a `no solution'- to a `yes solution'-instance.
Similarly, removing an $h\in H$ from $\KB\cup H$ may decrease the solution size of the instance. 
Finally, the positive literal $m\in M$ may or may not be processed.
However, it is important to consider $h\in H^{-}$ since this helps in invalidating the clauses of length $\geq 2$. 

Let $P$ and $N$ be the positive, respectively negative unit clauses of $\KB'$ over $\clos\csl$. 
Note that if $N\not =\emptyset$ then there can be no explanation for $M$. 
This is due to the fact that only negative unprocessed literals are over $M$ implying that $\KB$ is inconsistent with $M$.
Because of this, we have $N=\emptyset$. 
Moreover, the positive clauses of length $\geq 2$ in $\KB'$ do not explain anything as a variable cannot be enforced $0$.
Therefore, a positive literal $x$ cannot explain anything more than $x$ itself.
This implies that there is an explanation for $M$ if and only if $M' \subseteq H' \cup P$.
Now, the set $M'\setminus P$ denotes those $m\in M$ which are not already explained by $\KB$ and must be explained by $H'$. 
As a consequence, there exists an explanation for $\abdle{\csl}{}$ if and only if $M'\setminus P \subseteq H'$ and $|M'\setminus P| \leq s$. 
The consistency is already assured by the fact that $N=\emptyset$.
Finally, to determine whether there is an explanation $E \subseteq H$ of size $s$, it suffices to check additionally whether $|H'| \geq s$.
This argument ensures whether we can artificially increase the solution size, since, in that case an $E \subseteq H'$ with above conditions constitutes an explanation for the problem \abdeq{\csl}{}.
If this is not true, then no explanation of size $s$ exists.
The unit propagation and size comparisons can be done in polynomial time, which proves the claim.
Finally, the result for $\abdle{\csl}{}$ is due to Lemma~\ref{monotone}.
\end{proof}

\begin{restatelemma}[lem:ABD-E-D1]
\begin{lemma}
	The classical problems $\abdeq{\csl}{}$ and $\abdle{\csl}{}$ are in \P if $\clos\csl\subseteq\ID_1$. 
\end{lemma}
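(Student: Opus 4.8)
The plan is to mirror the strategy of Creignou et al.\ \cite{DBLP:conf/sat/CreignouOS11} used for the classical affine cases, namely to exploit the special shape of $\ID_1$-relations. Recall that $\ID_1$ is the co-clone generated by the relations that are both $2$-affine and $1$-valid (in fact $\ID_1 = \clos{\{x \oplus y \oplus z = 1\}} \cap \dots$); concretely, every relation in $\clos\csl$ with $\clos\csl \subseteq \ID_1$ can be written as a conjunction of constraints of the forms $x = y$, $x \neq y$, and positive/negative unit clauses, together with the property that the all-ones assignment is a model. By the base-independence results (Lemmas~\ref{lem:independence} and~\ref{lem:base-ind-ess-positive} together with the Remark following them, since $\ID_1$ is neither essentially positive nor essentially negative, so equality comes for free) it suffices to prove the claim for one convenient base of $\ID_1$. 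First I would therefore fix such a base and, given an instance $\abdformat{V,H,M,\KB,s}$, rewrite $\KB$ in the normal form above in polynomial time.

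Next I would process the equality/inequality part. The constraints $x = y$ and $x \neq y$ over $\var(\KB)$ induce a graph structure on the variables; computing connected components and checking $2$-colourability detects an immediate inconsistency (an odd cycle with respect to $\neq$, or a unit clause conflicting with a forced value) and otherwise collapses each component to a single representative variable together with a sign. The subtlety, exactly as in the essentially positive proof (Lemma~\ref{lem:base-ind-ess-positive}) and in Lemma~\ref{lem:ABD-leq-E-IS12}, is that collapsing variables that lie in $H$ may change $|H|$ and may change the size of an explanation, so when two hypothesis variables are identified (or identified up to negation) I would record how the hypothesis set and the target size $s$ are affected, and when a hypothesis variable is forced to a truth value by a unit clause it is simply deleted from $H$. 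After this preprocessing, $\KB$ consists only of unit clauses over the representatives.

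With $\KB$ reduced to unit clauses the problem becomes essentially combinatorial, and I would argue as in the essentially negative and essentially positive lemmas: an explanation must account for every manifestation that is not already entailed, which forces a canonical minimal explanation $E_0$ (the manifestations not implied by the positive units of $\KB$, pulled back through the equality classes), consistency is decided by checking that $E_0$ together with $\KB$ has no negative-unit conflict, and then for $\abdle{}{}$ one checks $|E_0| \le s$ while for $\abdeq{}{}$ one additionally checks whether $E_0$ can be padded up to size exactly $s$ by adding further hypotheses that are not negated in $\KB$ — this is where the monotone-padding argument of Lemma~\ref{monotone} and the counting done in Lemma~\ref{lem:ABD-IS02} are reused. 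All of these steps are clearly polynomial, giving membership in $\P$ for both problems.

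The main obstacle I expect is bookkeeping rather than any deep idea: getting the interaction between the equality/inequality collapse and the size parameter $s$ exactly right, so that the reduced instance has an explanation of size $\le s$ (resp.\ $= s$) if and only if the original does. In particular one must be careful that identifying two hypotheses of opposite sign, or identifying a hypothesis with a non-hypothesis, is handled so that neither spurious solutions are created nor genuine ones lost — this is the same delicate point that made the `$=$' half of Lemma~\ref{lem:base-ind-ess-positive} longer than the `$\le$' half, and I would structure the proof around a claim establishing a size-preserving bijection (or controlled correspondence) between explanations of the original and of the normalised instance, then finish with the routine polynomial-time verification.
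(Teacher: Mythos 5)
Your overall strategy (rewrite $\KB$ as equalities, inequalities and unit clauses, build equivalence classes via the $=$/$\neq$ graph, read off a canonical minimal explanation) matches the paper's proof, and it settles $\abdle{\csl}{}$ correctly. The genuine gap is in the `$=$' case: the padding step you propose — extend the minimal explanation ``by adding further hypotheses that are not negated in $\KB$'', reusing the monotone argument of Lemma~\ref{monotone} and the counting of Lemma~\ref{lem:ABD-IS02} — is not valid for $\ID_1$. Lemma~\ref{monotone} only applies below $\IV_2$, and $x\neq y$ is not dualHorn: two hypotheses lying in paired equivalence classes that must take contrary truth values are mutually exclusive, even though neither appears as a negative unit clause. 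So the set of hypotheses usable for padding is not fixed; it depends on a per-cluster choice. For example, with $\KB=\{m=h_0,\ h_1\neq h_2\}$, $H=\{h_0,h_1,h_2\}$, $M=\{m\}$, every hypothesis is unnegated in $\KB$, yet no explanation of size $3$ exists. Likewise, your plan to ``record how the hypothesis set and the target size $s$ are affected'' while collapsing classes cannot be a fixed correction to $s$, because how many hypotheses a cluster can contribute is a choice (which of its two sides you use), not a constant.

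What the paper does instead, and what your write-up is missing, is an explicit analysis of the \emph{range} of achievable explanation sizes: the minimum is $p$, the number of equivalence classes $X_1,\dots,X_p$ meeting $M$ (one representative each, and an explanation exists iff each $X_i$ meets $H$); the maximum $E_{max}$ is obtained by taking all of $X_i\cap H$ for the manifestation clusters and, for every other cluster $(X,Y)$, the larger of $X\cap H$ and $Y\cap H$; and one then argues that every size between $|E_{min}|$ and $|E_{max}|$ is realisable (add the extra hypotheses one at a time), so the answer for $\abdeq{\csl}{}$ is yes iff $|E_{min}|\le s\le|E_{max}|$. Without this maximisation over cluster sides and the intermediate-value argument, your check for size exactly $s$ is incorrect. (A minor further slip: relations in $\ID_1$ need not be $1$-valid — $x\neq y$ is in $\ID_1$ — but this does not affect your argument, since only the $=$/$\neq$/unit-clause normal form is used.)
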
	
\end{restatelemma}
\begin{proof}
Analogously to Creignou et~al.~\cite[Prop.~1]{DBLP:conf/sat/CreignouOS11}, we change the representation of the $\KB$. 
Without loss of generality, suppose $\KB$ is satisfiable and contains no unit clauses since unit clauses can be dealt with in a straightforward way. 
Each clause expresses either equality or inequality between two variables. With the transitivity of the equality relation and the fact that (in the Boolean case) $a \neq b \neq c$ implies $a = c$, we can identify equivalence classes of variables such that each two classes are either independent or they must have contrary truth values.
We call a pair of dependent equivalence classes $(X, Y)$ a \emph{cluster} ($X$ and $Y$ must take contrary truth values).
Denote by $X_1, \dots, X_p$ the equivalence classes that contain variables from $M$ such that $X_i \cap M \neq \emptyset$. 
Denote by $Y_1, \dots, Y_p$ the equivalence classes such that for each $i$ the pair $(X_i, Y_i)$ represents a cluster.
We make the following stepwise observations.
\begin{enumerate}
\item There is an explanation iff $\forall i: H \cap X_i \neq \emptyset$.
\item The size of a minimal explanation ($E_{min}$) is $p$, it is constructed by taking exactly one representative from each $X_i$.
\item There exists an explanation of size $\leq s$ iff $p \leq s$.
\item An explanation of maximal size ($E_{max}$) can be constructed as follows:
	\begin{enumerate}
	\item $E := \emptyset$,
	\item for each $i$ add to $E$ all variables from $X_i \cap H$,
	\item for each cluster $(X,Y) \notin \{(X_i, Y_i) \mid 1 \leq i \leq p\}$:
		\begin{enumerate}
		\item if $|X\cap H| \geq |Y\cap H|$: add to $E$ the set $X\cap H$,
		\item else: add to $E$ the set $Y\cap H$.
		\end{enumerate}
	\end{enumerate}
\item Any explanation size between $|E_{min}|$ and $|E_{max}|$ can be constructed.
\item There is an explanation of size $=s$ iff $|E_{min}| \leq s \leq |E_{max}|$.\qedhere
\end{enumerate}
\end{proof}

\subsection{Parameter `number of explanations' |E|}
\subsubsection*{Intractable cases}
\begin{lemma}\label{lem:ABD-E-IE}
$\abdeq{\csl}{|E|}$ and $\abdle{\csl}{|E|}$ are $\W\P$-complete if $\,\IE \subseteq \clos{\csl} \subseteq \IE_2$.
\end{lemma}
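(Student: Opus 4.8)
**Proof plan for Lemma~\ref{lem:ABD-E-IE} ($\W\P$-completeness of $\abdeq{\csl}{|E|}$ and $\abdle{\csl}{|E|}$ for $\IE \subseteq \clos{\csl} \subseteq \IE_2$).**

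The plan is to establish membership and hardness separately. For \textbf{membership in $\W\P$}, I would rely on the machine characterisation of $\W\P$ via nondeterministic algorithms making at most $h(k)\cdot\log n$ nondeterministic steps. By base independence (Lemma~\ref{lem:independence}, applicable since $\IE_2$ is neither essentially positive nor essentially negative), it suffices to treat a fixed base of $\IE_2$. Given an instance $\langle V,H,M,\KB,s,k\rangle$ with $s=k$, the algorithm guesses a subset $E\subseteq H$ of size exactly $k$ (resp.\ at most $k$); encoding which $k$ elements of $H$ are chosen costs $O(k\log|H|)=O(k\log n)$ nondeterministic bits. It then verifies deterministically in polynomial time that $E\land\KB$ is satisfiable and $E\land\KB\models M$; both checks are in $\P$ for $\IE_2$-formulas by Proposition~\ref{prop:SAT-IMP-P} (note $\IE_2$ is one of the listed co-clones, and adding unit clauses from $E$ and checking implication of the propositions in $M$ stays within $\SAT(\IE_2)$ and $\IMP(\IE_2)$). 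This places both problems in $\W\P$.

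For \textbf{$\W\P$-hardness}, I would extend the hardness construction of Fellows~et~al.~\cite[Thm.~8]{DBLP:conf/aaai/FellowsPRR12}, which shows $\W\P$-hardness of the Horn abduction problem parameterised by solution size. Their reduction starts from a $\W\P$-complete problem (such as weighted satisfiability for bounded-weight circuits, or the canonical $\W\P$-complete problem) and produces a definite-Horn / Horn knowledge base together with hypotheses and a single manifestation, in such a way that explanations of size $k$ correspond exactly to weight-$k$ solutions. The key point is to check that their knowledge base is in fact built from $\IE$-constraints (pure Horn clauses, i.e.\ the co-clone $\IE$ sits at the bottom of the relevant interval, with $\IE_2$ above), so that the lower bound already holds for the smallest co-clone $\IE$ in the interval. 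I would verify that each clause used in their construction is an $\IE$-clause and that the reduction is an $\fptreduction$-reduction with the parameter (solution size) preserved up to an additive/multiplicative constant — which it is, since the target size equals the source weight.

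The \textbf{main obstacle} I anticipate is ensuring the hardness transfers cleanly to both $\abdeq{}{}$ \emph{and} $\abdle{}{}$ simultaneously and to the whole interval $[\IE,\IE_2]$. For the interval, base independence (Lemma~\ref{lem:independence}) gives the upper bound for all of $[\,\cdot\,,\IE_2]$ and monotone/structural reductions give the lower bound down to $\IE$; the subtlety is whether the Fellows~et~al.\ reduction naturally yields the ``exact size $=k$'' version or only the ``$\le k$'' version, and bridging the two may require padding the hypothesis set with dummy variables $h'$ for which $\neg h'\notin\KB$ (so that any smaller explanation can be extended to one of size exactly $k$, in the spirit of Lemma~\ref{monotone}), or conversely restricting attention to minimal explanations. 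I would carry out the steps in the order: (i) membership via the guess-and-check $\W\P$-machine, (ii) recall the Fellows~et~al.\ reduction and check it produces $\IE$-formulas, (iii) adapt it so that it simultaneously witnesses hardness for $\abdeq{\IE}{|E|}$ and $\abdle{\IE}{|E|}$, (iv) invoke base independence to spread the result across the full interval $\IE\subseteq\clos{\csl}\subseteq\IE_2$.
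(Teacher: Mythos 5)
Your proposal matches the paper's proof in essence: membership via a guess-and-check $\W\P$ machine using the fact that $\SAT(\IE_2)$ and $\IMP(\IE_2)$ are in $\P$ (Prop.~\ref{prop:SAT-IMP-P}), and hardness by observing that the Fellows~et~al.\ construction for definite Horn uses only clauses of the forms $x\land y\rightarrow z$ and $x\rightarrow y$, which lie in $\IE$, with base independence extending the bounds over the interval. The subtlety you flag about bridging $\abdeq{}{}$ and $\abdle{}{}$ is resolved in the paper exactly as you anticipate: the source problem is monotone circuit satisfiability, whose monotonicity makes the same reduction work for both size variants.
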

\begin{proof}
For $\IE_2$ we have $\W\P$-membership from the argument that $\SAT(\IE_2)$ and $\IMP(\IE_2)$ are in $\P$ (cf. Prop.~\ref{prop:SAT-IMP-P}).
The hardness proof from \cite[Cor.~9]{DBLP:conf/aaai/FellowsPRR12} for definite Horn theories ($\IE_1$) works for $\IE$ as well. 
The only types of clauses used are $x \land y \rightarrow z$ and $x \rightarrow y$, which are both in $\IE$ and consequently expressible by $\csl$ as $\IE \subseteq \clos{\csl}$. 
Both, membership and hardness arguments are valid for $\abdle{\csl}{|E|}$ as well (the problem in \cite[Cor.~9]{DBLP:conf/aaai/FellowsPRR12} used for hardness is monotone circuit sat, which is \emph{monotone}).
\end{proof}


\begin{lemma}\label{lem:ABD-E-IM-W2}
	\abdeq{\csl}{|E|} is $\W{2}$-complete if $\clos{\csl} =\IM$.
\end{lemma}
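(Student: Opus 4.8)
I would prove $\W{2}$-completeness of $\abdeq{\csl}{|E|}$ for $\clos{\csl}=\IM$ in two directions: membership by reduction to $\wsat{2}{1}$ (which is $\W2$-complete by Proposition~\ref{theorem-wsat}), and hardness by reduction from $\wsatpos{2}$ (also $\W2$-complete for even $t$, hence for $t=2$). By the base-independence results (Lemmas~\ref{lem:independence} and~\ref{lem:base-ind-ess-positive}, together with the remark that they lift to $\fptreduction$-reductions with the parameter unchanged), it suffices to work with a convenient base for $\IM$; I would use the clause descriptions from Table~\ref{tab:bases}, so a $\KB$ is a conjunction of positive clauses of the form $x$, $x\to y$, and $x\land y\to z$ — i.e.\ monotone formulas, since $\IM$ is the co-clone of monotone relations. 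The conceptual point, already flagged in the proof sketch of Theorem~\ref{theorem-E}, is that with a monotone $\KB$ consistency is free (the all-ones assignment satisfies everything), so the only obstruction to an explanation of size exactly $k$ is (a) entailment of $M$ and (b) hitting the size target exactly; the difficulty of the ``$=$'' variant is precisely that a small explanation may need to be \emph{padded} up to size $k$, or a natural explanation may be too large and must be \emph{shrunk}.

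\textbf{Membership.} Given $\langle V,H,M,\KB,k\rangle$ with $\KB$ monotone, an explanation is a set $E\subseteq H$ of size exactly $k$ with $E\land\KB\models M$. Since $\SAT(\IM)$ and $\IMP(\IM)$ are in $\P$ (Proposition~\ref{prop:SAT-IMP-P}, as $\IM\subseteq\IE_2$), I would build a $\Gamma_{2,1}$-formula $\alpha$ over the variable set $H$ (one Boolean variable per hypothesis) whose weight-$k$ models correspond exactly to the size-$k$ explanations. The entailment $E\land\KB\models M$ holds iff for every $m\in M$, the formula $E\land\KB\land\neg m$ is unsatisfiable; because $\KB$ is monotone Horn-like, $E\land\KB\models m$ iff $m$ lies in the closure of $E$ under the implication clauses of $\KB$ (forward chaining). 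This closure condition, for each fixed $m$, can be written as a monotone formula in the $H$-variables: $m$ is derivable iff some ``proof tree'' / chain of clauses fires, which one encodes as a depth-bounded $\Gamma_{2,1}$ (alternatively, since $|M|$ is part of the input but not the parameter, one can afford a polynomial-size CNF/DNF-of-small-terms encoding of ``$m$ is in the $\KB$-closure of $E$'' and then massage it into $\Gamma_{2,1}$ using that $t=2$ allows a conjunction of bounded-size disjunctions of literals after introducing no new weight-bearing variables). Taking $\alpha=\bigwedge_{m\in M}(\text{$m$ is derivable from the chosen $E$})$ gives an equivalent $\wsat{2}{1}$ instance with the same parameter $k$; the subtle part is verifying that forcing \emph{exactly} weight $k$ causes no mismatch — but that is automatic since $\wsat{2}{1}$ asks for weight exactly $k$ too, and padding an explanation with any further hypothesis keeps entailment valid (monotonicity again). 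This yields $\abdeq{\IM}{|E|}\fptreduction\wsat{2}{1}\in\W2$.

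\textbf{Hardness.} From an instance of $\wsatpos{2}$, namely a $\Gamma^{+}_{2,1}$-formula $\varphi=\bigwedge_{i\in I}\bigvee_{j\in J_i} y_{i,j}$ (a conjunction of positive clauses over variables $Y$) with target weight $k$, I would build an abduction instance as follows: the hypotheses $H$ are (copies of) the variables $Y$; introduce a goal variable $g\in M$ and for each clause $C_i=\bigvee_j y_{i,j}$ put into $\KB$ the monotone implications $y_{i,j}\to c_i$ for all $j$ (so $c_i$ becomes true once \emph{some} literal of $C_i$ is chosen), and $c_1\land\cdots\land c_{|I|}\to g$ expressed via a chain of binary-conjunction clauses (all in $\IM$). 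Then a set $E\subseteq H$ of size exactly $k$ satisfies $E\land\KB\models\{g\}$ iff the chosen $k$ variables of $Y$ form a hitting set for all clauses, i.e.\ iff the corresponding weight-$k$ assignment satisfies $\varphi$. The main obstacle — and the place I would spend the real effort — is the \emph{exact}-size bookkeeping: in $\wsatpos{2}$ one wants weight exactly $k$, so I must ensure the reduction does not accidentally allow hitting sets of size $<k$ to count, and conversely does not forbid padding. Since $\wsatpos{2}$ genuinely asks for weight exactly $k$ and a superset of a hitting set is still a hitting set while $\KB$ stays consistent, the ``exactly'' on both sides matches up cleanly, but one should double-check the edge case where $\varphi$ has a satisfying assignment of weight $<k$ but none of weight $=k$ (impossible here because adding any variable preserves satisfaction of a positive CNF) — this monotonicity coincidence is exactly why the reduction goes through for $\IM$ and is the crux to state carefully. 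Combining both directions gives $\W2$-completeness.
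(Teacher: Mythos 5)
Your overall plan (membership via $\wsat{2}{1}$, hardness from $\wsatpos{2}$) is the same as the paper's, but there is a genuine flaw in the execution: you misidentify what $\IM$-formulas look like. $\IM$ consists of the relations preserved by \emph{all} monotone functions, i.e.\ closed under coordinate-wise $\land$ \emph{and} $\lor$; its base is just the implication $x \rightarrow y$, and in particular the clause $x \land y \rightarrow z$ (and the positive unit $x$, which is not $0$-valid) is \emph{not} in $\IM$ — it is definite Horn, i.e.\ $\IE_1$. This error breaks both directions as you wrote them. For membership, once bodies of implications may be conjunctions, ``$m$ lies in the $\KB$-closure of $E$'' is a general monotone condition on $E$ whose CNF encoding over the $H$-variables (with no fresh weight-bearing variables) can be exponentially large; your ``massage it into $\Gamma_{2,1}$'' step is exactly the point that fails, and it must fail, since otherwise the same argument would place $\abdeq{\IE_1}{|E|}$ in $\W2$, contradicting the paper's $\W\P$-completeness for $\IE\subseteq\clos{\csl}\subseteq\IE_2$ (Lemma~\ref{lem:ABD-E-IE}). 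For hardness, your aggregation gadget $c_1\land\cdots\land c_{|I|}\rightarrow g$ ``expressed via a chain of binary-conjunction clauses'' likewise leaves $\IM$, so the constructed $\KB$ is not an $\IM$-knowledge base.

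Both problems disappear once the base is corrected, and the repaired argument is the paper's proof. With $\KB$ a conjunction of binary implications, consistency is indeed free and each $m_i\in M$ is entailed by $E\land\KB$ iff $E$ contains some hypothesis from the set $H_i$ of hypotheses reaching $m_i$; hence the membership reduction is simply the hitting-set CNF $\bigwedge_{i}\bigvee_{x\in H_i}x$ over the $H$-variables with the same target weight, no closure/proof-tree encoding needed. For hardness, drop the single goal $g$: introduce one variable $h_i$ per positive clause, add the $\IM$-implications $x\rightarrow h_i$ for each variable $x$ of that clause, and take $M=\{h_1,\ldots,h_q\}$ (the parameter is $|E|$, not $|M|$, so many manifestations are harmless); weight-$k$ models of the positive CNF then correspond exactly to size-$k$ explanations. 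Your observations about padding and the ``exact size'' bookkeeping are correct and carry over unchanged.
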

\begin{proof}
	For membership we prove that $\abdeq{\IM}{|E|}\fptreduction \wsat{2}{1}$. 
	The latter is known to be $\W{2}$-complete (Proposition~\ref{theorem-wsat}). 
	Let $\pabdformat{V,H,M,\KB,s}{s}$ be an instance of $\abdeq{\IM}{|E|}$, where the solution size is the parameter.
	Specifically, let $\KB=\bigwedge\limits_{i\leq r}(x_i \rightarrow y_i)$ and $M= m_1\land \ldots \land m_{|M|}$.
	Note that, in order to explain a single $m_i\in M$, a single $h\in H$ suffices. 
	As a result, for each $m_i\in M$ we associate a set $H_i \subseteq H$ of hypotheses that explains $m_i$. 
	This implies that every element (singleton subset) of $H_i$ explains $m_i$. 
	Now, it is enough to check that at least one such $h\in H_i$ can be selected for each $m_i$. 	 
	For this we map $\pabdformat{V,H,M,\KB,s}{s}$ to $(\phi,k)$ where $\phi = \bigwedge  \limits_{i\leq m} \bigvee\limits_{x\in H_i} x$. 
	 Then our claim is that $\pabdformat{V,H,M,\KB,s}{s}$ has an explanation $E$ if and only if $\phi$ has a satisfying assignment of size $k$. 
	 Clearly, there is a $1\text-1$-correspondence between solutions $E$ of $\pabdformat{V,H,M,\KB,s}{s}$ and satisfying assignments $\theta $ with weight $k$ for $\phi$. 
	 That is, $\theta(x)=1 \iff x\in E$.  
	 
	
	For hardness, we reduce from $\wsatpos{2}$ which is $\W{2}$-complete by Proposition~\ref{theorem-wsat}.
	Given $\bigwedge\limits_{i\in q}\bigvee\limits_{j\in  r}(X_{ij}^+) = \bigwedge\limits_{i\in  q}(X_{i_1}\lor \ldots \lor X_{i_r})$, where $\var(\alpha)=\setdefinition{X_{ij}\mid i\in q, j\in r} $, we let
	$\KB= \bigwedge\limits_{i\in q}\bigwedge\limits_{x\in h_i}(x\rightarrow h_i)$, 
	%
	%
	$H=\var(\alpha)$, $M= \bigwedge\limits_{i\in q}h_i $ and $V=H\cup M$.	
	Then for a subset $E\subseteq H$ we have that, $E$ is an explanation for $\abdeq{\IM}{|E|} \iff \theta \models \phi$ where $\theta(x)=1 \iff x\in E$.
	%
\end{proof}
We prove that the $\W{2}$ membership from previous lemma can be extended to  $\IS{}{10}$
\begin{lemma}\label{lem:ABD-E-IS10-inW2}
	Let $\ell\ge2$, then \abdeq{\csl}{|E|} is in \W{2} if $\clos\csl\subseteq\IS{\ell}{10}$. 
\end{lemma}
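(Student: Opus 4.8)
The plan is to give an $\fptreduction$-reduction from $\abdeq{\csl}{|E|}$ to $\wsat{2}{1}$, which is $\W2$-complete by Proposition~\ref{theorem-wsat}; this generalises the reduction behind Lemma~\ref{lem:ABD-E-IM-W2}. The first step is to normalise the knowledge base. For $\ell\geq 2$ the co-clone $\IS{\ell}{10}$ contains the negative clause $\neg x\lor\neg y$ (so it is not essentially positive, as this relation is not) and the implication $x\to y$ (so it is not essentially negative, as this relation is not). Since $\csl\subseteq\clos\csl\subseteq\IS{\ell}{10}$, Lemma~\ref{lem:independence} (together with the remark that it is parameter-preserving for $k=|E|$) lets us assume that $\csl$ is a fixed base of $\IS{\ell}{10}$; by the clause description of this co-clone (Table~\ref{tab:bases}), $\KB$ is then logically equivalent to a conjunction of implications $x\to y$ and negative clauses $\neg x_1\lor\cdots\lor\neg x_c$ with $c\leq\ell$ (unit clauses, if present, are first removed by the obvious unit propagation, which changes neither the set of explanations nor their sizes after the customary bookkeeping on $H$ and $M$).

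Next I would record a structural description of explanations. For $E\subseteq H$ let $\overline E$ be the closure of $E$ under the implications of $\KB$; since each implication has a single premise, $\overline E=\bigcup_{h\in E}\overline{\{h\}}$, and whenever $E\land\KB$ is satisfiable, the assignment giving value $1$ exactly to the variables of $\overline E$ is its unique minimal model. Because negative clauses can only forbid assignments and can never force a variable to $1$, one checks that $E\land\KB$ is satisfiable iff no negative clause of $\KB$ has all of its variables in $\overline E$; and, under this condition, $E\land\KB\models M$ iff $M\subseteq\overline E$ (a set of positive literals is entailed iff it holds in the minimal model). Hence $E$ is an explanation of size exactly $s$ iff $|E|=s$, $M\subseteq\overline E$, and no negative clause of $\KB$ has all of its variables in $\overline E$.

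For the reduction, introduce a Boolean variable $z_h$ for every $h\in H$, meant to encode $h\in E$, and for every variable $v$ of the instance precompute $H_v:=\setdefinition{h\in H\mid v\in\overline{\{h\}}}$ by reachability in the implication digraph. Then $M\subseteq\overline E$ is expressed by the clauses $\bigwedge_{m\in M}\bigvee_{h\in H_m}z_h$, and a negative clause $N=\neg x_1\lor\cdots\lor\neg x_c$ of $\KB$ is not violated by $\overline E$ iff some $x_i\notin\overline E$, i.e.\ iff $\bigvee_{i=1}^{c}\bigwedge_{h\in H_{x_i}}\neg z_h$. The key observation is that this disjunction has at most $\ell$ disjuncts, so distributing it yields the conjunctive normal form $\bigwedge_{(h_1,\dots,h_c)\in H_{x_1}\times\cdots\times H_{x_c}}(\neg z_{h_1}\lor\cdots\lor\neg z_{h_c})$, which has at most $n^\ell$ clauses and hence polynomial size because $\ell$ is a constant. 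Conjoining the clause $\bigvee_{h\in H_m}z_h$ over all $m\in M$ with these CNFs over all negative clauses $N$ of $\KB$ produces a $\Gamma_{2,1}$-formula $\phi$ (clause width must be left unbounded, hence $t=2$, $d=1$), and with $k:=s$ the map $\theta\mapsto E:=\setdefinition{h\in H\mid\theta(z_h)=1}$ is a weight-preserving bijection between the weight-$k$ models of $\phi$ and the explanations of size exactly $s$. The construction is polynomial time and leaves the parameter unchanged, so it is an $\fptreduction$-reduction, and since $\wsat{2}{1}\in\W2$ we conclude $\abdeq{\csl}{|E|}\in\W2$.

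The main obstacle is precisely the treatment of the negative clauses: the natural encoding of ``$\overline E$ violates $N$'' is a disjunction of conjunctions of literals over the $z_h$'s, which in general does not admit a polynomial-size CNF; it is only because the clause width $c\le\ell$ is bounded by the fixed constant $\ell$ that the distribution step stays polynomial and produces a genuine $\Gamma_{2,1}$-formula. Everything else — the reachability computation, the minimal-model characterisation of consistency and entailment, and the bookkeeping around unit clauses and base independence — is routine.
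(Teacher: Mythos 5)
Your proof is correct and follows the same basic strategy as the paper's: both build on the reduction of Lemma~\ref{lem:ABD-E-IM-W2} (one positive clause $\bigvee_{h\in H_m}z_h$ per manifestation, where $H_v$ collects the hypotheses explaining $v$), and both eliminate the width-$\le\ell$ negative clauses by substituting explaining hypotheses, with the same ``at most $n^{\ell}$ copies, hence polynomial since $\ell$ is constant'' argument; whether the target is $\wsat{2}{1}$ or $\wsat{2}{\ell}$ is immaterial. The one substantive difference is your treatment of the negative clauses: you substitute for \emph{every} variable $x_i$ of a clause $\neg x_1\lor\dots\lor\neg x_c$, taking all tuples in $H_{x_1}\times\dots\times H_{x_c}$, whereas the paper substitutes only for variables outside $H$ and keeps clauses built solely on $H$-variables verbatim. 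Your variant is the more robust one: a selected hypothesis can force a variable $x\in H$ to true through an implication chain even when $x\notin E$ (e.g., $\KB\supseteq\{h\to x,\ \neg x\lor\neg y\}$ and $E=\{h,y\}$), and this inconsistency is caught by your product clause $\neg z_h\lor\neg z_y$ but not by the clause on $x,y$ alone; the paper's case distinction leaves this situation implicit, so your uniform substitution (together with the explicit closure/minimal-model characterisation of consistency and entailment) is a genuine sharpening of the write-up rather than a detour. The only place you are sketchier than you should be is the ``customary bookkeeping'' for unit clauses: it is routine but worth one sentence, namely that hypotheses forced false by $\KB$ must be removed from $H$ (they may never enter $E$), manifestations forced true are removed from $M$, and hypotheses forced true stay in $H$ as size padding for the exact-size variant.
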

\begin{proof}
	We reduce our problem to $\wsat{2}{\ell}$ which is $\W2$-complete due to Proposition~\ref{theorem-wsat}.
	Consider the reduction from Lemma~\ref{lem:ABD-E-IM-W2} again, where we map $\pabdformat{V,H,M,\KB,s}{s}$ to $(\phi,k)$,	where $\phi = \bigwedge  \limits_{i\leq m} \bigvee\limits_{x\in H_i} x$.
	The only difference from Lemma~\ref{lem:ABD-E-IM-W2} is that in $\IS{\ell}{10}$ there are additional constraints of the form $(\neg x_1 \lor \ldots \lor \neg x_q)$ where $q\leq \ell$. 
	  Now we have two cases.
	
If all the additional constraints contain exclusively variables from H then we simply add these constraints to $\phi$ and obtain a new formula $\psi$. 
Since any satisfying assignment for $\psi$ would satisfy these constraints as well as $\phi$ and therefore, is an explanation as required. 
Conversely, any explanation would yield a satisfying assignment for this new formula $\psi$ since this explanation is consistent with $\KB$.

Now suppose that constraints contain variables that are not from $H$. 
We transform such constraints into their equivalents which contain variables only from H. 
To achieve this we repeat the following procedure as long as applicable:

Pick a variable $u \not \in H$ occurring in a constraint $C_u$. 
Compute the set of hypotheses $H_u \subseteq  H$ that explain $u$ 
(analogously to Lemma~\ref{lem:ABD-E-IM-W2}). 
Let $H_u = \setdefinition{h_1,\ldots, h_r}$. 
Now we replace the constraint $C_u$ by $r$ copies of itself and in each $C_u^i$ we replace the variable $u$ by $h_i$.
Note that this does not change the width of any clause. 
Finally, we add these clauses to $\phi$ and obtain a new formula $\psi$. 
\begin{claim} 
	The above construction preserves the correspondence between the solutions of $\abdeq{\IS{\ell}{10}}{|E|} $ and the satisfying assignments of $\phi$ with weight $k$. Moreover, it can be achieved in polynomial time.
\end{claim}
\begin{claimproof}
	Note that the difference between Lemma $\ref{lem:ABD-E-IM-W2}$ and this case is in the fact that a solution to $\abdeq{\IS{\ell}{10}}{|E|}$ must satisfy additional constraints as specified above. 
	The problematic part is when some variables $x_i,\ldots x_j$ are in $H$ and some constraint over these variables appears in the $\KB$. 
	The formula $\psi$ must not allow such elements to be the part of solution since the constraints stop from certain elements to appear together in the solution (being negative clauses). 
	This proves the first claim in conjunction with the arguments in Lemma~\ref{lem:ABD-E-IM-W2}.
	
	Now we prove that this transformation works in polynomial time. 
	The worst case is when a clause contains no variable from H. 
	Furthermore assume that this clause is of maximum arity, say $C= (\neg x_1,\ldots \neg x_q)$ where $q\leq \ell$ and $q$ is the maximum arity of constraint language in $\KB$. 
	 Each $x_i$ can have the associated set $H_{x_i}$ of maximum size $n$ where $n$ is input size. 
	 Hence each clause will be blown-up to atmost $n^q$ new constraints at the completion of the above procedure. 
	 As $q$ is constant (only depends on the constraint language and not on the input), the factor $n^q$ is polynomial. 
	 Since there are polynomial many constraints to check for this procedure, we conclude that the transformation takes only polynomial time.
\end{claimproof}
Eventually, similar arguments as in Lemma~\ref{lem:ABD-E-IM-W2} for $\psi$ complete the proof.
\end{proof}
	
\begin{lemma}\label{lem:ABD-E-IV2-inW2}
	\abdeq{\csl}{|E|} is in \W{2} if $\clos\csl\subseteq\IV_2$. 
\end{lemma}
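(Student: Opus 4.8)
The plan is to give an fpt-reduction from $\abdeq{\csl}{|E|}$ to $\wsat{2}{1}$ (weighted CNF-SAT), which is $\W2$-complete by Proposition~\ref{theorem-wsat}; since the solution size $s$ is copied over unchanged as the parameter, this places $\abdeq{\csl}{|E|}$ in $\W2$. This refines the reduction of Lemma~\ref{lem:ABD-E-IM-W2} for $\IM$: the extra work is, first, a more careful polynomial-time computation of the set $H_m\subseteq H$ of hypotheses that can explain a manifestation $m$, and second, the bookkeeping of the variables that $\KB$ forces to $0$, which is where consistency of a candidate explanation can fail in the dualHorn world. By Lemma~\ref{lem:independence} (applicable since $\IV_2$ is neither essentially positive nor essentially negative) we may fix a convenient finite base of $\IV_2$ and hence assume that $\KB$ is a conjunction of dualHorn clauses of bounded width.

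First comes polynomial-time preprocessing, using $\SAT(\IV_2),\IMP(\IV_2)\in\P$ (Proposition~\ref{prop:SAT-IMP-P}): if $\KB$ is unsatisfiable we reject; otherwise we compute the set $F$ of variables that $\KB$ forces to $0$ and the set $T_0$ it forces to $1$. Since a satisfiable dualHorn theory has a greatest model, $E\land\KB$ is consistent iff $E\cap F=\emptyset$. If some $m\in M$ lies in $F$ we reject (no consistent $E$ can entail it), and we delete from $M$ every manifestation in $T_0$ (already entailed by $\emptyset$). For each remaining $m$ I would characterize entailment by flipping all polarities: this turns $\KB$ into a Horn theory $\overline{\KB}$, the facts $E$ into the goal clauses $\{h\to\bot\mid h\in E\}$, and $\neg m$ into the fact $m$; hence for $E$ with $E\cap F=\emptyset$ we have $E\land\KB\models m$ iff $\overline{\KB}\land m\land\bigwedge_{h\in E}(h\to\bot)$ is unsatisfiable. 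As the remaining $m$ are not entailed by $\KB$ alone, the only goal clauses that can fire are those coming from $E$, so this happens iff forward chaining in $\overline{\KB}\land m$ (ignoring the goal clauses) derives some element of $E$. Let $D_m$ be the set of atoms so derived — computable in polynomial time and independent of $E$ — and put $H_m:=(H\cap D_m)\setminus F$. Then a set $E\subseteq H\setminus F$ is an explanation iff $E\cap H_m\neq\emptyset$ for every remaining $m$; consequently $\langle V,H,M,\KB,s\rangle$ is a yes-instance iff the positive CNF $\phi:=\bigwedge_m\bigvee_{h\in H_m}h$ over the variable set $H\setminus F$ has a satisfying assignment of weight exactly $s$. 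We output $(\phi,s)$. As in Lemma~\ref{lem:ABD-E-IM-W2} the correspondence $\theta(h)=1\iff h\in E$ is immediate, $\phi\in\Gamma_{2,1}$, and the construction runs in polynomial time.

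The main obstacle — the one point that genuinely goes beyond the $\IM$ argument and beyond Lemma~\ref{lem:ABD-E-IS10-inW2} — is justifying that, even in the presence of genuinely disjunctive dualHorn clauses $\neg x\lor y_1\lor\cdots\lor y_p$, a manifestation can always be explained by a \emph{single} hypothesis (relative to the fixed false-set $F$), i.e.\ that the minimal explanations of $m$ are singletons. The flip-to-Horn analysis above is exactly what makes this transparent: the hypotheses enter $\overline{\KB}$ only as goal atoms, never as facts or rule heads, so forward chaining — and hence $D_m$ — does not depend on $E$, and a candidate $E$ explains $m$ precisely when it meets $D_m$. One still has to check that deleting the trivially entailed manifestations and intersecting with $H\setminus F$ do not disturb the size count, but this is routine once the characterization of explanations is in place. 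Finally, $\abdle{\csl}{|E|}\in\W2$ follows together with Lemma~\ref{monotone} whenever $\IBF\subseteq\clos\csl$, and otherwise $\clos\csl$ already falls under the $\FPT$ cases.
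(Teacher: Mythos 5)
Your proof is correct, and it reaches the same final reduction as the paper -- mapping the instance to the positive CNF $\bigwedge_m\bigvee_{h\in H_m}h$ with target $\wsat{2}{1}$, exactly as in Lemma~\ref{lem:ABD-E-IM-W2} -- but the way you justify that the dualHorn features can be stripped away is genuinely different. The paper stays syntactic: it removes unit clauses by unit propagation, then argues that after resolution on the variables outside $H$ the purely positive clauses and the longer clauses with one negative literal can be ignored, and then literally falls back on the $\IM$ argument; the step ``resolution lets us ignore type-2 and type-3 clauses'' is left rather informal there. You instead argue semantically: consistency of $E\land\KB$ is characterised via the greatest model of a satisfiable dualHorn theory as $E\cap F=\emptyset$, and entailment of a manifestation $m$ not already entailed by $\KB$ is characterised, via dualisation to a Horn theory and its least model / forward-chaining closure $D_m$ (which does not depend on $E$ because hypotheses enter only as goal clauses), as $E\cap D_m\neq\emptyset$. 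This makes the two facts the paper needs -- that consistency is a per-variable condition and that singleton hypotheses suffice to explain each $m$ -- explicit and verifiable, handles the degenerate cases (unsatisfiable $\KB$, manifestations inconsistent with or already entailed by $\KB$) cleanly, and makes the base-independence step explicit through Lemma~\ref{lem:independence}; the price is a somewhat longer argument than the paper's, which is short modulo Lemma~\ref{lem:ABD-E-IM-W2}. Two cosmetic points: your closing sentence about $\abdle{\csl}{|E|}$ is not needed for the statement, and its case distinction is vacuous since $\IBF\subseteq\clos{\csl}$ holds for every constraint language, so Lemma~\ref{monotone} applies throughout $\clos{\csl}\subseteq\IV_2$.
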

\begin{proof}
We extend the proof for \W{2}-membership for the case $\IM$ (see Lemma~\ref{lem:ABD-E-IM-W2}).
In the \IM-case, we dealt only with clauses of type $x \rightarrow y$. 
We refer to such classes as type-$0$ clauses.
In $\IV_2$ we have additional clauses of the following types:
\begin{enumerate}
\item Unit clauses: both positive and negative. $x$, $\neg x$
\item Positive clauses of size two or greater: $(x_1 \lor \dots \lor x_n)$, $n\geq 2$
\item Clauses with exactly one negative literal of size $3$ or greater: $(\neg x_0 \lor x_1 \lor \dots \lor x_n)$, $n\geq 2$
\end{enumerate}
We can we eliminate the type-$1$ clauses by unit propagation and obtain thereby a satisfiability equivalent formula. 
Note that this transformation process can generate additional clauses of type-$0$, type-$2$, or type-$3$. 
As a consequence, we end up only with clauses of either type-$0$, type-$2$, or type-$3$ and, particularly, no type-$1$ clauses anymore. 
This transformation does not preserve all the satisfying assignments but those can be maintained by adding fixed values of the eliminated variables to the assignment.

Now, we argue that by applying resolution on the variables in $\KB \setminus H$, we can ignore type-2 and type-3 clauses.
Notice that to the variables in $H$, we do not apply resolution. 
Recall the idea behind the construction of Lemma~\ref{lem:ABD-E-IM-W2}, we want to come up with a formula which has a satisfying assignment if and only if our abduction instance has an explanation.
 A satisfying assignment that selects a variable $x \in H$ (maps $x$ to $1$) forces all the variables $y_1,\ldots, y_n$ such that $(x\rightarrow y_i) \in \KB$ to be mapped $1$ for $i\leq n$. 
Furthermore, it also forces each $z_{i,j}$ such that $(y_i \rightarrow z_{i,j}) \in \KB$ to be mapped $1$, and so on. 
This precisely captures the intuition that $x$ (as a hypothesis) explains each $y_i$ and $z_{i,j}$.  
As a consequence, removing such variables from $H$ (owing to resolution) in the case when those variables explain some manifestation would be problematic.

Finally, we prove the claim that for \abdeq{\IV_2}{|E|} we can ignore the type-$2$ and type-$3$ clauses.
Type-2 clauses are irrelevant since the satisfaction of such clauses does not force any particular variable to $1$. 
In a type-3 clause ($C= \neg x_1 \lor x_2 \lor \dots \lor  x_m$) the variable $x_1$ forces a whole clause to be true (at least one of the remaining variables must be mapped to $1$).
Such clauses cannot be ignored right-away because there might be further clauses of the form $\neg x_j \lor m$ for each $2\leq j \leq m $ with $m\in M$ and an explanation to $m$ might be lost (selecting $x_1$ in the solution). 
However, after applying resolution we know that type-3 clauses only force one of the many positive variables to $1$ and do not actually force a single variable to $1$. 
As a result, this allows us to ignore type-3 clauses as well.
%
Consequently, we are only left with type-0 clauses. 
This completes the proof by the same arguments as in the proof of Lemma~\ref{lem:ABD-E-IM-W2}.
\end{proof}

\begin{lemma}\label{lem:ABD-eq-E-IS1h2-hardness}
	For any constraint language $\csl$ such that $\neg x\lor \neg y \in \closneq\csl$, the problem $\abdeq{\csl}{|E|}$ is \W1-hard.
\end{lemma}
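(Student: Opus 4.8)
The plan is to give an $\fptreduction$-reduction from \textsc{Independent Set} parameterised by the size of the independent set, which is $\W1$-complete \cite{DBLP:series/txtcs/FlumG06}. Equivalently, one may view the source problem as weighted satisfiability for conjunctions of clauses of the form $\neg x\lor\neg y$: for a graph $G=(W,E)$, a weight-$j$ model of $\bigwedge_{\{u,v\}\in E}(\neg x_u\lor\neg x_v)$ is exactly the characteristic vector of an independent set of size $j$ in $G$. The relation $\neg x\lor\neg y$ is the only feature of $\csl$ used, and working with $\closneq{\csl}$ rather than $\clos{\csl}$ is precisely what keeps the constructed knowledge base equality-free, so that the argument applies to \emph{every} base of the co-clones in question — where no general base-independence lemma is available.

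Given $(G,j)$ with $G=(W,E)$, I would build the instance $\langle V,H,M,\KB,s\rangle$ of $\abdeq{\csl}{|E|}$ as follows. Introduce one hypothesis $x_u$ per vertex $u\in W$ together with one fresh variable $m$, and set $H=\{x_u\mid u\in W\}\cup\{m\}$ and $M=\{m\}$. Since $\neg x\lor\neg y\in\closneq{\csl}$, fix an $\csl$-formula $\phi(x,y,\bar z)$ with $\exists\bar z\,\phi(x,y,\bar z)\equiv(\neg x\lor\neg y)$; for every edge $\{u,v\}\in E$ put into $\KB$ a copy $\phi(x_u,x_v,\bar z^{uv})$ with private fresh auxiliary variables $\bar z^{uv}$ that are \emph{not} added to $H$ or $M$. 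Let $V$ collect all occurring variables. Then $\KB$ is a genuine $\csl$-formula, $m$ occurs nowhere in $\KB$, and $|\KB|$ is polynomial because $\csl$ is finite. Finally put $s=j+1$; this is the parameter, so $s\le j+1$ and the reduction is fpt (in fact polynomial-time).

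For correctness I would establish two points. First, for any $E\subseteq H$ the formula $E\land\KB$ is satisfiable if and only if $\{u\mid x_u\in E\}$ is an independent set of $G$: the assignment setting exactly the variables of $E$ to $1$ can be extended, constraint by constraint, to a choice on each $\bar z^{uv}$ satisfying $\phi(x_u,x_v,\bar z^{uv})$ precisely when $x_u,x_v$ are not both in $E$, and the blocks $\bar z^{uv}$ do not interact because they are private. Second, since $\KB$ has no occurrence of $m$, a satisfiable $E\land\KB$ entails $m$ if and only if $m\in E$. Combining the two, a set $E\subseteq H$ with $|E|=s=j+1$ is an explanation iff $m\in E$ and $E\setminus\{m\}$ encodes an independent set of size $j$; such an $E$ exists iff $G$ has an independent set of size $j$, which proves correctness.

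The construction is short and I do not expect a genuine obstacle; the only delicate point is the bookkeeping around the manifestation gadget. Insisting on size \emph{exactly} $j+1$ is essential: with a size-at-most bound the singleton $\{m\}$ would already be an explanation, which is consistent with $\abdle{\csl}{|E|}$ being $\FPT$ for these languages (Lemma~\ref{lem:ABD-leq-E-IS12}). One should also confirm that the auxiliary variables $\bar z^{uv}$, lying outside $H\cup M$, never contribute to the size of an explanation nor to its entailment obligation — which is immediate since they merely play the role of the dropped existential quantifiers.
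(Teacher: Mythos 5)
Your reduction is correct and is essentially the paper's own proof: both reduce from \textsc{Independent Set} by encoding each edge as a $(\neg x\lor\neg y)$-constraint, adding a fresh manifestation variable not occurring in $\KB$, taking the vertex variables plus that fresh variable as hypotheses, and asking for an explanation of size exactly $k+1$. Your explicit treatment of the existentially quantified auxiliary variables realising $\neg x\lor\neg y$ as an $\csl$-formula (kept outside $H\cup M$) is just a spelled-out version of what the paper leaves implicit in working with $\closneq{\csl}$.
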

\begin{proof}
The problem \IndSet is known to be $\W1$-hard \cite[Thm.~10.8]{DBLP:series/mcs/DowneyF99}.
We reduce \IndSet to $\abdeq{\csl}{|E|}$.
Let $((V, \tilde E), k)$ be an instance of $p$-\IndSet and $k$ the parameter. 
We map it to $\pabdformat{V,H,M,\KB,k+1}{k+1}$, where
\begin{align*}
\KB & := \{(\neg x \lor \neg y) \mid (x,y) \in \tilde E\},\\
H & := \var(\KB) \cup \{z\},\\
M & := z. 
\end{align*}
Let $U$ be an independent set of size $k$ then $U\land \KB$ is consistent because no two elements with an edge are in $U$.
As a consequence, $U\cup \{z\}$ is an explanation for $\pabdformat{V,H,M,\KB,k+1}{k+1}$.
Conversely, an explanation $E$ for $\pabdformat{V,H,M,\KB,k+1}{k+1}$ of size $k+1$ must include $z$ as well as $k$ other variables. 
Now, $E \land \KB$ is consistent and this implies that no variables in $E$ have an edge, consequently giving an independent set of size $k$.
This implies that $(V, \tilde E)$ admits an independent set of size $k$ if and only if $\abdinstance$ admits an explanation of size $k+1$.
\end{proof}
\begin{lemma}\label{lem:ABD-eq-E-IS12hk}
	Let $\ell\ge 2$, then $\abdeq{\csl}{|E|}$ is in $\W1$ if $\clos\csl\subseteq\IS{\ell}{12}$.
\end{lemma}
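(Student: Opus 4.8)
The plan is to give an $\fptreduction$-reduction from $\abdeq{\csl}{|E|}$ to the weighted satisfiability problem $\wsat{1}{\ell}$. Since $\wsat{1}{\ell}$ is $\W1$-complete (Proposition~\ref{theorem-wsat}) and lies in $\W1$, producing such a reduction establishes membership in $\W1$.

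First I would put $\KB$ into a normal form. As $\clos\csl\subseteq\IS{\ell}{12}$ and $\IS{\ell}{12}$ is the essentially negative co-clone of bounded width $\ell$, every constraint of $\KB$ is equivalent, over its own variables, to a conjunction of negative clauses of width at most $\ell$, positive unit clauses, and equality clauses (equalities arise only for non-trivial bases). The equality clauses would be eliminated by contracting equality classes, replacing $y$ by $x$ for each $x=y\in\KB$, in the spirit of the proofs of Lemmas~\ref{lem:ABD-leq-E-IS12} and~\ref{lem:base-ind-ess-positive}; for the exact-size variant one additionally records, per equality class of hypotheses, how often that class may be charged to an explanation, and carries this data into the output. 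After this step we may assume $\KB$ is a conjunction of negative clauses of width $\le\ell$ together with a set $P$ of positive unit clauses.

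The core of the argument is the following structural observation. Unit-propagating $P$ forces exactly the variables of $P$ to $1$; if this empties a negative clause, we reject. For any $E\subseteq H$, the minimal model of $E\wedge\KB$ (if it is consistent) sets precisely $E\cup P$ to $1$, so $E\wedge\KB$ is consistent iff no negative clause has all its variables inside $E\cup P$, and $E\wedge\KB\models M$ iff $M\subseteq E\cup P$. Hence every explanation contains $E_{MP}:=M\setminus P$, and a size-exactly-$s$ explanation is precisely an extension of $E_{MP}$ to a set $E\subseteq H$ with $|E|=s$ for which $E\cup P$ contains no negative clause entirely. If $E_{MP}\not\subseteq H$, or $|E_{MP}|>s$, or $E_{MP}\cup P$ already contains some negative clause, I would reject by outputting a fixed no-instance of $\wsat{1}{\ell}$ --- all of this being polynomial-time checkable. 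Otherwise I would output $(\phi,r)$ with $r:=s-|E_{MP}|$: for each negative clause $c$ set $c^{\ast}:=\var(c)\setminus(E_{MP}\cup P)$; discard $c$ if $c^{\ast}\not\subseteq H$ (it can never be violated); if $c^{\ast}=\{v\}\subseteq H$, mark $v$ as forbidden (it may not enter an explanation); let the variables of $\phi$ be $H\setminus E_{MP}$ minus all forbidden variables, and for every $c$ with $|c^{\ast}|\ge 2$ and $c^{\ast}$ free of forbidden variables add the width-$\le\ell$ negative clause $\bigvee_{u\in c^{\ast}}\neg u$ to $\phi$. Candidate variables lying in $P$ or in no surviving clause are fictitious in $\phi$ and act as free padding. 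Weight-$r$ models $\theta$ of $\phi$ then correspond bijectively to size-$s$ explanations via $E=E_{MP}\cup\{v:\theta(v)=1\}$, the construction is polynomial, and $r\le s=k$, so the reduction is an $\fptreduction$-reduction.

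The hard part will be the equality-clause handling for the exact-size version: in the `$\le$' case of Lemma~\ref{lem:ABD-leq-E-IS12} contracting variables can only shrink explanation sizes and is therefore harmless, whereas here one must prove that the contracted instance, together with the per-class multiplicity bookkeeping, has an explanation of size exactly $s$ exactly when the original does. The negative-clause core reduction itself should then be routine.
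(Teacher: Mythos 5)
Your equality-free core is sound and is essentially the paper's own construction: both reduce to $\wsat{1}{\ell}$, both exploit the structure from Lemma~\ref{lem:ABD-leq-E-IS12} (every explanation contains $E_{MP}=M\setminus P$, consistency is just ``no negative clause lies entirely inside $E\cup P$''), and both encode the residual question of extending $E_{MP}$ by exactly $s-|E_{MP}|$ hypotheses as a weight-exact satisfiability question for a width-$\le\ell$ CNF over $H\setminus E_{MP}$. Your bookkeeping for the interaction with positive units (treating $P\cap H$ as free padding and deleting $P$-literals from the residual clauses, rather than keeping positive unit clauses in the target formula) is in fact a clean way to handle that step.

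However, you have not proved the lemma, and the part you defer is precisely its essential content. Because the general base-independence results (Lemmas~\ref{lem:independence} and~\ref{lem:base-ind-ess-positive}) exclude essentially negative languages, an arbitrary $\csl$ with $\clos\csl\subseteq\IS{\ell}{12}$ genuinely forces equality clauses into the normal form, and the paper explicitly notes that base independence for this case is \emph{not} inherited from earlier lemmas but must come out of this very reduction. Your plan --- contract equality classes and keep ``per-class multiplicity bookkeeping'' --- is not worked out and does not obviously go through: contraction does not preserve exact explanation sizes, since an explanation may contain any subset of an equality class of hypotheses (membership in $E$ is a free set-theoretic choice, not forced by the semantics), and $\wsat{1}{\ell}$ has no weighted variables, so multiplicities need a gadget; moreover, with equalities present even the basic fact that every explanation contains $M\setminus P$ has to be revisited, since entailment can now pass through equality chains. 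The paper resolves this by keeping all hypothesis variables and encoding each equality $x_i=x_j$ between hypotheses directly in the target $\ell$-CNF as the two binary clauses $(\neg x_i\lor x_j)$ and $(x_i\lor\neg x_j)$ (legal since $\ell\ge 2$), so that the weight accounting stays on the original variables; supplying an argument of this kind --- and verifying that exact-size solvability is preserved under it --- is exactly what is missing from your proposal, as you yourself acknowledge.
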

\begin{proof}
We reduce $\abdeq{\csl}{|E|}$ to $\wsat{1}{\ell}$, which is $\W1$-complete (Prop.~\ref{theorem-wsat}).
Note that $\Gamma_{1,\ell}$ is the class of $\ell$-CNF formulas.
We want to mention here that the proof is correct even in the presence of equality constraints. 
As a consequence, the base independence is not implied by any of the previous lemmas but it follows due the proof below.

According to Lemma~\ref{lem:ABD-leq-E-IS12}, we can determine whether there exists a solution of size $\leq s$ in polynomial time. 
Let $\pabdformat{V,H,M,\KB,s}{s}$ be an instance of $\abdeq{\IS{\ell}{12}}{|E|}$ with $\KB = \bigwedge_{i\leq r} C_i \land N \land P \land E$, where $C_i = (\neg {x^i_1} \lor \dots \lor \neg {x^i_\ell})$, and $P, N$ denote the positive and negative unit clauses, respectively, and $E$ are the equality clauses.
Without loss of generality, assume that $\pabdformat{V,H,M,\KB,s}{s}$ admits a solution of size $\leq s$ (otherwise, map it to a negative dummy instance). 
Moreover, it follows from Lemma~\ref{lem:ABD-leq-E-IS12} that in this case any solution $E$ satisfies that $E_{MP} \subseteq E \subseteq H$. 
This implies $s \geq |E_{MP}|$. 
We also know from Lemma~\ref{lem:ABD-leq-E-IS12} that $E_{MP}$ is an explanation for $M$ and that both $E_{MP}$ and $M$ are consistent with all clauses in $\KB$.

The question now reduces to whether we can extend $E_{MP}$ to a solution of size $s$ by adding $s - |E_{MP}|$ variables from $H \setminus E_{MP}$? 
We show that this can be achieved and map $\pabdformat{V,H,M,\KB,s}{s}$ to $\langle \varphi, s - |E_{MP}|\rangle$, where $\varphi$ is obtained from $\KB$ by the following consecutive steps:
First we take care of equality clauses.
For each $x_i=x_j \in \KB$, such that $x_i, x_j \in H$, 
add to $\phi$ the clauses $(\neg x_i\lor x_j)$ and $( x_i\lor \neg x_j)$.
We add these two clauses to $\phi$ ensuring that corresponding to each clause of the form $x_i=x_j$, either both $x_i, x_j$ are in the solution, or none is. 
\begin{enumerate}
\item Remove all clauses $C_i$ containing only variables not from $H$.
\item Remove all negative unit clauses $(\neg x)\in N$ such that $x \notin H$.
Note that after this step all remaining negative unit clauses are built upon variables from $H \setminus E_{MP}$ only.
\item For each clause $C_i$, denote by $X^i_H$ (resp., $X^i_{\overline{H}}$) the variables from $H$ (resp., not from $H$). 
Execute the following:
\begin{enumerate}
	  \item Remove $C_i$.
	  \item If $X^i_{\overline{H}} \subseteq P$:
		add to $\varphi$ the clause $(\neg x_1 \lor \dots \lor \neg x_p)$, where 
		$\{x_1, \dots , x_p\} = X^i_H \setminus E_{MP}$. 
		Otherwise nothing needs to be done as $X^i_{\overline{H}}\not \subseteq P$ is true.
		Then, for some variable $x\notin P$ we have that $\lnot x\lor \bigvee_{x_j\in X^i_H}\lnot x_j$ is satisfiable via setting $x$ to $0$ if all $x_j$ are mapped to~$1$.
\end{enumerate}

Note that after this step all remaining clauses $C_i$ are built upon variables from $H$ only.

\item Remove all positive unit clauses $(x) \in P$ such that $x \notin H \setminus E_{MP}$.
Note that after this step it holds that $\var(\varphi) = H$ and all remaining positive unit clauses are built upon variables from $H \setminus E_{MP}$ only.

\item For all clauses $C_i$: remove from $C_i$ all literals built upon variables from $E_{MP}$.
Note that in the so obtained $C_i'$ at least one literal remains, because otherwise $E_{MP}$ would be inconsistent with $C_i$.
\end{enumerate}
After the last step has been implemented, it holds that $\var(\varphi) = H \setminus E_{MP}$.
As a consequence, the following equivalences are true:
\begin{align*}
     & \pabdformat{V,H,M,\KB,s}{s} \text{ admits a solution of size exactly } s\\
\Leftrightarrow\;  & E_{MP} \text{ extends to a solution of size $s$ by adding $s - |E_{MP}|$ variables from } H\setminus E_{MP}\\
\Leftrightarrow\;  & \varphi \text{ has a satisfying assignment of size exactly } s - |E_{MP}|.\qedhere
\end{align*}
\end{proof}

\subsection{Parameter `number of manifestations' |M|}
\subsubsection*{Fixed-parameter tractable results}
\begin{lemma}\label{lem:ABD-M-IM-FPT}
	The problem \abdle{\csl}{|M|} is \FPT if $\clos\csl \subseteq\IM$.
\end{lemma}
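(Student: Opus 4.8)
The plan is to trivialise the algebra first and then recognise the surviving problem as a bounded-universe covering question. Since $\IM$ is neither essentially positive nor essentially negative, Lemma~\ref{lem:independence} applies, and --- as remarked right after it --- it even yields an $\fptreduction$-reduction that leaves $|M|$ unchanged; so I would assume from the start that $\KB$ is a conjunction of binary implications $x\rightarrow y$ and positive and negative unit clauses (a base of $\IM$). Write $P$ and $N$ for the sets of variables set to $1$, resp.\ $0$, by the unit clauses, and let $R(X)$ be the set of variables reachable from $X$ in the digraph whose arcs are the implications. The semantics is then purely combinatorial: for $E\subseteq H$ the $\subseteq$-least model of $E\wedge\KB$ is $R(E\cup P)$, hence $E\wedge\KB$ is consistent iff $R(E\cup P)\cap N=\emptyset$, and in that case $E\wedge\KB\models M$ iff $M\subseteq R(E\cup P)$; crucially $R(E\cup P)=R(P)\cup\bigcup_{h\in E}R(\{h\})$, which makes consistency a property of single hypotheses.

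Next I would preprocess in polynomial time. If $R(P)\cap N\neq\emptyset$, reject ($\KB$ is already unsatisfiable). Otherwise delete from $M$ all variables in $R(P)$ --- they are entailed by $\KB$ alone --- obtaining $M'$ with $|M'|\le|M|$ and $M'\cap R(P)=\emptyset$; and delete from $H$ every hypothesis $h$ with $R(\{h\})\cap N\neq\emptyset$, since such an $h$ can never occur in a valid $E$. After this pruning every $E\subseteq H$ is consistent with $\KB$, and the instance is positive iff there is $E\subseteq H$ with $|E|\le s$ such that each $m\in M'$ lies in $R(\{h\})$ for some $h\in E$. Putting $H_m:=\{h\in H\mid m\in R(\{h\})\}$ for $m\in M'$, this is exactly the question whether the positive clauses $C_m:=\bigvee_{h\in H_m}h$, one per $m\in M'$, can all be satisfied by an assignment of weight at most $s$ --- and any such assignment, read as the set of its true variables, is the desired explanation. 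This is an instance of $\maxsat$ in which only $|M'|\le|M|$ clauses are required to be satisfied.

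Finally, $\maxsat$ is $\FPT$ when parameterised by the number of clauses to be satisfied: collapsing hypotheses according to which subset of $M'$ they cover leaves at most $2^{|M'|}$ relevant ``types'', over which a dynamic program computes in time $2^{O(|M|)}\cdot n^{O(1)}$ the minimum number of hypotheses covering $M'$, which one then compares with $s$. Since all preprocessing is polynomial, the whole procedure runs in time $f(|M|)\cdot n^{O(1)}$, so $\abdle{\csl}{|M|}\in\FPT$. The delicate point is not the algebra but keeping the reduction faithful once unit clauses are present: one has to separate $P$ from $N$ and strip out the ``dangerous'' hypotheses before the covering step, because otherwise consistency and the size bound interact. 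In particular, since $s$ may be far below $|M|$, one cannot simply pick one hypothesis per manifestation, and it is precisely the covering subproblem --- tractable only because its universe has size $\le|M|$ --- that makes parameterising by $|M|$ pay off; the same scheme, suitably enriched to cope with the larger positive and one-negative-literal clauses of $\IV_2$, will then be reused in Lemma~\ref{lem:ABD-M-IV2-FPT}.
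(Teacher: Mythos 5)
Your proof is correct and follows essentially the same route as the paper: associate with each manifestation $m$ the set $H_m$ of hypotheses that explain it, turn the instance into the question of satisfying all $|M|$ positive clauses $\bigvee_{h\in H_m} h$ with an assignment of weight at most $s$, and conclude via fixed-parameter tractability of $\maxsat$ parameterised by the number of clauses to be satisfied (the paper cites Bonnet et al.\ where you give a direct $2^{O(|M|)}$ type-based argument). Your additional preprocessing for unit clauses and the explicit reachability semantics are a harmless elaboration (the paper works with the pure implication base of $\IM$), not a different approach.
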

\begin{proof}
	Given an instance $\pabdformat{V,H,M,\KB,s}{|M|}$ with $\KB=\bigwedge\limits_{i\leq r}(x_i \rightarrow y_i)$ and $M= m_1\land \ldots \land m_k$.
	Recall that each $m_i\in M$ can be explained by a singe $h_i\in H $.
	If $|M|\leq s$ then there is nothing to prove. 
	This is due the fact that in the proof of Lemma~\ref{lem:ABD-E-IM-W2}, there are fewer than $s$ many sets of the form $H_i$ each explaining an $m_i\in M$. 
	As a consequence, we need only select one $h_{i,j}$ from each $H_i$ as the part of a solution to yield a solution of size $\leq s$.
	Accordingly, assume that $|M| > s$.
	Proceed as in the proof of Lemma~\ref{lem:ABD-E-IM-W2} and associate a set $H_i \subseteq H$ of hypotheses with each $m_i$ that explains it for $i\leq |M|$. 
	It is enough to check whether selecting at most $s$ many elements $h_i\in H$ can explain all the manifestations $m_i \in M$.
	We reduce our problem to $\maxsat$ \cite{DBLP:journals/ita/BonnetPS16} (we alter the notation slightly)  asking, given a $\cnfFont{CNF}$ formula on $n$ variables with $m$ clauses, if setting at most $s$ variables to true satisfies at least $k$ clauses. 
	
	Let $H'$ be the collection of all $H_i$'s.
	For each $i$ let $C_i$ be the clause $\bigvee\limits_{j} h_j$ where $h_j\in H_i$.
	Furthermore, let $C$ be the collection of all such clauses.
	Then $C$ is built over variables in $V' = \bigcup_i H_i$. 
	Our reduction maps $\pabdformat{V,H,M,\KB,s}{|M|}$ to $\langle C, s, |M| \rangle $.
	Note that we only have $|M|=k$ many clauses in $C$ and, as a result, the question reduces to whether it is possible to set at most $s$ variables from $V'$ to satisfy every clause in $C$?
	The reduced problem $\maxsat$ when parametrised by $k$ (the minimum number of clauses to be satisfied) is $\FPT$ \cite[Prop.~4.3]{DBLP:journals/ita/BonnetPS16}.
	Now, each $H_i$ can be computed in polynomial time, the whole computation is a polynomial time reduction.
	Finally, the new parameter value $k$ is exactly the same as the old parameter $|M|$, the reduction is an $\fptreduction$-reduction.
	As a consequence, the lemma applies. 
	\end{proof}

\begin{corollary}\label{lem:ABD-M-IV2-FPT}
	The problem \abdle{\csl}{|M|} is \FPT if $\clos\csl \subseteq\IV_2$.
\end{corollary}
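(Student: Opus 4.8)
The plan is to extend the $\maxsat$-reduction used for $\IM$ in Lemma~\ref{lem:ABD-M-IM-FPT} to all of $\IV_2$, i.e.\ to arbitrary dual Horn theories. First I would pass to a fixed base: since $\IV_2$ is neither essentially positive nor essentially negative (it already contains $x\to y$, which is of neither kind), Lemma~\ref{lem:independence} together with the following remark (which keeps the value of $|M|$ unchanged) lets us assume that $\csl$ is a fixed base of $\IV_2$, so that $\KB$ is a conjunction of dual Horn clauses, each of one of the four shapes considered in the proof of Lemma~\ref{lem:ABD-E-IV2-inW2}: implications $x\to y$, unit clauses $x$ or $\neg x$, purely positive clauses of width $\ge 2$, and clauses $\neg x_0\lor x_1\lor\dots\lor x_n$ with exactly one negative literal and $n\ge 2$.

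Next I would run the preprocessing from that same proof. Unit propagation eliminates all unit clauses, producing a satisfiability-equivalent knowledge base together with a set $P$ of variables forced to $1$; if it derives $\bot$ or forces $\neg m$ for some $m\in M$, we output a trivial no-instance; any $m\in M$ forced to $1$ is entailed by $\KB$ and may be deleted from $M$, and any hypothesis forced to a constant may be deleted from $H$ (a $0$-forced one can never belong to an explanation, a $1$-forced one is already entailed and is never needed for an explanation of size $\le s$). Then, resolving away the auxiliary variables — those neither in $H$ nor in $M$, never the ones in $H$ — one argues exactly as in Lemma~\ref{lem:ABD-E-IV2-inW2} that the positive clauses of width $\ge 2$ and the one-negative-literal clauses may be discarded, since a positive clause never forces an individual variable to $1$ and, after resolution, a one-negative-literal clause only forces a disjunction of positive variables, so neither can transmit an explanation from a hypothesis to a manifestation. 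What remains is a set of implications over $H$ together with the fact set $P$; as every surviving non-unit clause is such an implication, $E\land\KB$ is consistent for every $E\subseteq H$, so only the entailment $E\land\KB\models M$ remains to be decided.

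From here the instance is essentially an $\IM$-instance and I would finish as in Lemma~\ref{lem:ABD-M-IM-FPT}: for each $m\in M\setminus P$ compute in polynomial time the set $H_m\subseteq H$ of hypotheses from which $m$ is reachable along the implications, set $C_m:=\bigvee_{h\in H_m}h$ and $\mathcal C:=\{\,C_m\mid m\in M\setminus P\,\}$, and observe that $\langle V,H,M,\KB,s\rangle$ admits an explanation of size $\le s$ if and only if some assignment of weight $\le s$ satisfies all $|\mathcal C|\le|M|$ clauses of $\mathcal C$. Mapping the instance to $\langle \mathcal C,s,|\mathcal C|\rangle$ is then an $\fptreduction$-reduction to $\maxsat$ parameterised by the number of clauses to be satisfied (here $|\mathcal C|\le|M|$), and $\maxsat$ with this parameterisation is $\FPT$ by \cite[Prop.~4.3]{DBLP:journals/ita/BonnetPS16}.

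The step I expect to be the main obstacle is the resolution phase: one must check that it is sound (only auxiliary variables disappear, so the set of explanations, which lives over $H$, is unaffected), that it runs in polynomial time (each clause has width at most the constant arity of $\csl$, so a width-$q$ clause expands into at most $n^q$ clauses, polynomial for constant $q$), and, most delicately, that afterwards explainability of a manifestation collapses to plain reachability along implications, i.e.\ that no entailment concealed inside a wider clause is lost. For this last point I would invoke the argument already carried out in Lemma~\ref{lem:ABD-E-IV2-inW2} instead of redoing it.
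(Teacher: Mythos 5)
Your proposal is correct and follows essentially the same route as the paper: the paper's proof of this corollary is exactly to reuse the $\maxsat$-reduction of Lemma~\ref{lem:ABD-M-IM-FPT} and to dispose of the non-implication clauses via the unit-propagation-plus-resolution argument of Lemma~\ref{lem:ABD-E-IV2-inW2}. You merely spell out the base-independence step and the bookkeeping for forced variables, which the paper leaves implicit.
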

\begin{proof}
	We extend the $\FPT$-membership proof  Lemma~\ref{lem:ABD-M-IM-FPT} using the same argument as in Lemma~\ref{lem:ABD-E-IV2-inW2}.
	That is, after applying unit propagation and resolution we can ignore the positive clauses of length $\geq 2$ and clauses with one negative literal of length $\geq 3$.
\end{proof}

\begin{corollary}\label{cor:ABD-M-IM-FPT}
	The problem \abdeq{\csl}{|M|} is \FPT if $\clos\csl \subseteq\IV_2$.
\end{corollary}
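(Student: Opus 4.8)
The plan is to obtain this corollary from the FPT classification of $\abdle{\csl}{|M|}$ already established, together with the fact that the ``less-or-equal'' and ``equal'' size variants coincide on dualHorn theories.

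First I would note that $\IBF$ is the least element of the co-clone lattice, so $\IBF \subseteq \clos{\csl}$ holds for \emph{every} constraint language $\csl$. Hence the assumption $\clos{\csl} \subseteq \IV_2$ already gives $\IBF \subseteq \clos{\csl} \subseteq \IV_2$, which is precisely the precondition of Lemma~\ref{monotone}. That lemma then yields $\abdle{\csl}{} = \abdeq{\csl}{}$ as classical problems: the two are defined over the same instances $\langle V, H, M, \KB, s \rangle$ and have the same yes-instances, because any solution of size $< s$ can be greedily padded to one of size exactly $s$ by repeatedly adding a hypothesis $h$ with $\neg h \notin \KB$ (the step where dualHorn is used), while the reverse containment is immediate.

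Next I would transfer the identity to the parameterised setting. As $\abdle{\csl}{|M|}$ and $\abdeq{\csl}{|M|}$ parameterise the same instance set by the same map (namely $|M|$, read off the component $M$), the equality $\abdle{\csl}{} = \abdeq{\csl}{}$ lifts verbatim to $\abdle{\csl}{|M|} = \abdeq{\csl}{|M|}$ as parameterised problems. By Corollary~\ref{lem:ABD-M-IV2-FPT} (equivalently, by the sixth item of Theorem~\ref{theorem:le-M}), $\abdle{\csl}{|M|}$ is $\FPT$ when $\clos{\csl} \subseteq \IV_2$; therefore so is $\abdeq{\csl}{|M|}$, as claimed.

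The real work lies not in this corollary but in the statements it invokes: the reduction to parameterised $\maxsat$ after unit propagation and resolution (Lemma~\ref{lem:ABD-M-IM-FPT}, its extension to $\IV_2$ in the spirit of Lemma~\ref{lem:ABD-E-IV2-inW2}, and Corollary~\ref{lem:ABD-M-IV2-FPT}) and the padding argument of Lemma~\ref{monotone}. If one wished to bypass Lemma~\ref{monotone}, one could rerun the proof of Corollary~\ref{lem:ABD-M-IV2-FPT} directly: the $\maxsat$ instance it produces consists solely of positive clauses, so ``satisfy all $|M|$ clauses with at most $s$ variables set to true'' can be converted to the exact-size version by padding with unused variables, and this remains $\FPT$ in the number of clauses. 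Still, routing through Lemma~\ref{monotone} is the cleanest presentation.
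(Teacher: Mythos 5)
Your proposal is correct and follows essentially the same route as the paper, which also derives the corollary immediately from Lemma~\ref{monotone} combined with the \FPT{} result for the $\leq$-variant (the paper cites Lemma~\ref{lem:ABD-M-IM-FPT}, while your reference to Corollary~\ref{lem:ABD-M-IV2-FPT} is the version that directly covers all of $\IV_2$). Your observation that $\IBF\subseteq\clos{\csl}$ holds trivially, so the hypothesis of Lemma~\ref{monotone} is automatic, is exactly the implicit step in the paper's one-line proof.
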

\begin{proof}
	Immediate due to Lemma~\ref{monotone} in combination with Lemma~\ref{lem:ABD-M-IM-FPT}.
\end{proof}
\begin{lemma}\label{lem:ABD-eq-M-IS1h2-hardness}
	For any constraint language $\csl$ such that $\neg x\lor \neg y \in \closneq\csl$, the problem $\abdeq{\csl}{|E|}$ is $\para\NP$-hard.%
\end{lemma}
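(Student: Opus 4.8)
The plan is to produce an NP-hard slice by reducing the classical independent set problem \IndSet, which is NP-complete even when the sought set must have a prescribed size. Let $((V,\tilde E),k)$ be such an instance. Since $\neg x\lor\neg y\in\closneq{\csl}$, I may write this relation as $\exists\bar z\,\phi(x,y,\bar z)$ for a fixed $\csl$-formula $\phi$ and replace each edge-constraint below by a private copy of $\phi$; this keeps the whole construction an $\csl$-formula and needs no equality relation. The instance is mapped to the abduction instance $\langle V,\,H,\,M,\,\KB\rangle$ with required explanation size $s=k+1$, where $\KB=\setdefinition{\neg x\lor\neg y\mid(x,y)\in\tilde E}$, $H=\var(\KB)\cup\setdefinition{z}$ and $M=\setdefinition{z}$. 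As $z$ occurs in no clause of $\KB$, nothing can force it to be entailed except selecting it, so every explanation contains $z$; and $E\land\KB$ is consistent exactly when the remaining $k$ hypotheses induce an independent set. Hence an explanation of size exactly $k+1$ exists if and only if $G$ admits an independent set of size exactly $k$.

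The decisive feature for the completeness class is the location of the parameter. Throughout this reduction $M=\setdefinition{z}$, i.e.\ the number of manifestations is constantly $1$, while the demanded solution size $k+1$ grows with the input. The reduction therefore embeds the NP-complete problem \IndSet\ into the single slice $|M|=1$, so that slice is NP-hard and the problem parameterised by $|M|$ is $\para\NP$-hard; this is precisely the form in which the lemma is applied in Theorem~\ref{theorem:E-M}. The one genuinely technical point is base independence: essentially negative languages admit no general transfer lemma, so one checks directly that substituting the $\closneq{\csl}$-definition of $\neg x\lor\neg y$ preserves both consistency and the exact count $|E|=s$ — the auxiliary variables lie outside $H$ and hence never enter an explanation — and that the substitution is polynomial because $\csl$ is finite.

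I would then point out explicitly why the statement cannot hold as typeset, with $|E|$ in the parameter slot. A $\para\NP$-hardness result for the $|E|$-parameterisation would require an NP-hard slice with $|E|$ bounded; but fixing $|E|=c$ lets one enumerate the $O(|H|^{c})$ candidate explanations and verify each with $\SAT(\csl)$ and $\IMP(\csl)$, both polynomial for essentially negative languages such as $\IS{2}{1}$. Thus no slice of $\abdeq{\csl}{|E|}$ is NP-hard and the problem is not $\para\NP$-hard for such $\csl$. The correct tight bound for the $|E|$-parameterisation is the $\W1$-hardness obtained from the very same edge-gadget in Lemma~\ref{lem:ABD-eq-E-IS1h2-hardness}; what the reduction above really establishes, and what is used in the sequel, is the $\para\NP$-hardness of $\abdeq{\csl}{|M|}$.
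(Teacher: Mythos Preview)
Your proof is correct and follows essentially the same route as the paper: reduce from \IndSet\ via the edge-gadget $\KB=\{\neg x\lor\neg y\mid (x,y)\in\tilde E\}$, $H=\var(\KB)\cup\{z\}$, $M=\{z\}$, $s=k+1$, so that the $|M|=1$ slice is $\NP$-hard. You also correctly spot and explain the typographical slip in the statement: the lemma is labelled and used (in Theorem~\ref{theorem:E-M}) for the parameter $|M|$, not $|E|$; your closing paragraph gives exactly the right reason why a $\para\NP$-hardness for $|E|$ would be impossible here.
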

\begin{proof}
We prove that the $1$-slice of the problem is $\NP$-complete by reducing from classical $\IndSet$ (which is $\NP$-complete \cite{DBLP:conf/coco/Karp72}) to $\abdeq{\csl}{}$.
The reduction is essentially the classical counterpart of the one presented in Lemma~\ref{lem:ABD-eq-E-IS1h2-hardness}.
Let $\langle V, \tilde E\rangle$ be an instance of Independent-Set. 
We map it to $\abdformat{V,H,M,\KB,s}$, where
\begin{align*}
\KB & := \{(\neg x \lor \neg y) \mid (x,y) \in \tilde E\},\\
H & := \var(\KB) \cup \{z\},\\
M & := z,\\
s & := k+1.
\end{align*}
Then $(V, \tilde E)$ admits an independent set of size $k$ if and only if $\abdformat{V,H,M,\KB,s}$ admits an explanation of size $s$.
\end{proof}
\end{document}